\let\ps@IEEEtitlepagestyle\ps@mahmood
\newcommand\restartchapters{\par
  \setcounter{chapter}{0}%
  \setcounter{section}{0}%
  \gdef\@chapapp{\chaptername}%
  \gdef\thechapter{\@arabic\c@chapter}}
\newtheorem{theorem}{\bf {Theorem}}
\newtheorem{remark}{{\bf{Remark}}}
\newtheorem{definition}{\bf {Definition}}
\newtheorem{lemma}{\bf {Lemma}}
\newcommand{\tr}{{\mathrm{tr}}}
\newcommand{\tensor}[1]{\textsf{\bfseries{#1}}}
\newcommand{\st}{{\mathrm{s.t.}}}
\newcommand{\ULU}{\mathtt{U}_{\ell}^{\mathtt{u}}}
\newcommand{\ul}{\mathtt{u}}
\newcommand{\SI}{\mathtt{SI}}
\newcommand{\dl}{\mathtt{d}}
\newcommand{\DLU}{\mathtt{U}_{ik}^{\mathtt{d}}}
\newcommand{\DLUi}[1]{\mathtt{U}_{#1}^{\mathtt{d}}}
\newcommand{\bs}{\mathtt{bs}}
\newcommand{\ds}{\displaystyle}
\newcommand*{\hili}{\color{black}}
\newcommand*{\hilidr}{\color{black}}
\newcommand*{\hilidra}{\color{black}}
\g@addto@macro\normalsize{%
	\setlength\abovedisplayskip{0.2pt}
	\setlength\belowdisplayskip{0.2pt}
	\setlength\abovedisplayshortskip{0.2pt}
	\setlength\belowdisplayshortskip{0.2pt}
}
\newcommand{\subparagraph}{}
\titlespacing{\section}{0pt}{0.5pt}{0pt}
\begin{document}

\title{{\huge Joint Power Control and User Association for NOMA-Based Full-Duplex Systems}}
\author{
	\IEEEauthorblockN{ Hieu V. Nguyen, Van-Dinh Nguyen, Octavia A. Dobre, Diep N. Nguyen, \\ Eryk Dutkiewicz, and Oh-Soon Shin \vspace{-2pt}}\\
	\thanks{H. V. Nguyen, V.-D. Nguyen,  and O.-S. Shin are with the School of Electronic Engineering \& Department of ICMC Convergence Technology, Soongsil University, Seoul 06978, Korea (e-mail: \{hieuvnguyen, nguyenvandinh, osshin\}@ssu.ac.kr).}
	\thanks{O.~A.~Dobre is with the Faculty of Engineering and Applied Science, Memorial University,  Canada (e-mail: odobre@mun.ca).}
	\thanks{D. N. Nguyen and E. Dutkiewicz are with the Faculty of Engineering and Information Technology, University of Technology Sydney, Sydney, NSW
2007, Australia (e-mail: \{diep.nguyen, Eryk.Dutkiewicz\}@uts.edu.au).}
\thanks{Part of this paper was submitted to IEEE Global Communications Conference in Dec 2019.}
	}

\maketitle
\begin{abstract}\vspace{-5pt}
{{\hili This paper investigates the coexistence of non-orthogonal multiple access (NOMA) and full-duplex (FD) to improve both spectral efficiency (SE) and user fairness. In such a scenario, NOMA based on the successive interference cancellation technique is simultaneously applied to both uplink (UL) and downlink (DL) transmissions in an FD system.} We consider the problem of jointly optimizing user association (UA) and power control to maximize the overall SE, subject to user-specific quality-of-service and total transmit power constraints. To be spectrally-efficient, we introduce the tensor model to optimize UL users' decoding order and DL users' clustering, which results in a mixed-integer non-convex problem. {\hilidra For practically appealing applications, we first relax the binary variables and then propose two low-complexity designs. In the first design, the continuous relaxation problem is solved using the inner convex approximation framework. Next, we additionally introduce the penalty method to further accelerate the performance of the former design. For a benchmark, we develop an optimal solution based on brute-force search (BFS) over all possible cases of UAs.} It is demonstrated in numerical results that the proposed algorithms outperform the conventional FD-based schemes and its half-duplex counterpart, as well as yield data rates close to those obtained by BFS-based algorithm.}
\end{abstract}
\begin{IEEEkeywords}
Full-duplex radios,  non-convex programming, non-orthogonal multiple access, self-interference, spectral efficiency, successive interference cancellation, user clustering. 
\end{IEEEkeywords}

\newpage
\section{Introduction} \label{Introduction} 
Multiple access techniques are crucial for next generation of mobile communications to meet the exponential demand of mobile data  and new services over limited radio spectrum\cite{Andrews-14-A, Yadav:IEEEWirelessComm:Aug2018}. Among them, by enabling multiple concurrent transmissions, non-orthogonal multiple access (NOMA) has recently been recognized as a  promising solution, due to its superior spectral efficiency (SE) and user fairness feature \cite{Ding:CommunMag:Feb2017,LiuPIEEE17,Islam:COMSurTutor:2017,Islam:IEEEWirelessComm:Apr2018}. The key idea of NOMA\footnote{Henceforth, power domain-based NOMA is simply referred to as NOMA.} is to concurrently allocate different portions of the total power for multiple users over the same spectrum. NOMA is particularly efficient while user equipment (UEs) simultaneously experience significantly different channel conditions \cite{DingJSAC17,DingTVT2016,ChenJSAC2017,LiuPIEEE17}. For example, in a typical scenario of two-user NOMA, the user with poorer channel condition is allocated much higher transmit power than that of the one with more favorable channel condition. Then, using the successive interference cancellation (SIC) technique, the latter is able to remove  the signal of the former before decoding its own. Thus, while the user with better channel condition benefits from removing the strong interference, the throughput of the user with worse channel condition is clearly improved, leading to a higher total throughput.

Also to improve the SE, in-band full-duplex (FD) radios that enable the downlink (DL) transmission and uplink (UL) reception at the same time-frequency resource, has received paramount interest. Theoretically, FD radios can double the SE of a wireless link over its half-duplex (HD) counterparts \cite{Wong5Gbook17, Yadav:IEEETVT:June2018}. To achieve such a potential gain, the self-interference (SI) due to concurrent transmission and reception at the FD device, has to be canceled/suppressed to the noise floor level. Although recent advances in  active and passive SI suppression (SiS) techniques have led to implementable FD systems  \cite{Bharadia13,Sabharwal:JSAC:Feb2014,Bharadia14},  there always exists a small residual SI, but not negligible, due to the imperfect SiS. FD systems with imperfect SiS have been widely studied in small-cell cellular setups \cite{Dan:TWC:14,Aquilina:TCOMM:2017,Dinh:Access,Yadav:Access, Dinh:JSAC:18,Tam:TCOM:16, Hieu:IEEETWC:June2019}. Besides SI, such FD systems also suffer from multiuser interference (MUI) and co-channel interference (CCI, caused by a UL user to DL users). Recently, the coexistence of NOMA and FD has been analytically and numerically investigated to effectively handle the network interference, which helps boost the  performance of FD systems \cite{Sun:TCOMM:Mar2017,  Ding:LWC:2018,Sun:TCOMM:2018,Dinh:JSAC:18}.

Although the SIC technique has been widely adopted for the UL reception in FD systems \cite{Dan:TWC:14,Dinh:Access,Yadav:Access, Dinh:JSAC:18,Tam:TCOM:16}, only random decoding order with respect to (w.r.t.) UL users' indices is considered.\footnote{It is worth mentioning that the UL users' decoding order has a strong impact on the SE, especially when taking into account the DL interference and quality-of-service constraints (see Fig. \ref{fig: Gain SumRate vs Pbs}).} Further, the FD-NOMA system in \cite{Sun:TCOMM:Mar2017} with a single-antenna BS can serve at most two DL and two UL users on a frequency resource. Ding \textit{et al.} \cite{Ding:LWC:2018}  mainly focused on the
comparison of the UL sum rate between FD and HD systems. Moreover, both \cite{Dinh:JSAC:18} and \cite{Hieu:ICTC:Oct2017}  considered the beamforming design only, and thus the maximum improvement in SE provided by the \textit{optimal} FD-NOMA systems compared to FD-NOMA ones is still unknown. {\hili This motivates us to devise a general model for DL user clustering and UL users' decoding order under power control and quality-of-service (QoS) requirements so that both total SE and user fairness are remarkably enhanced.}

\vspace{-18pt}
\subsection{Related Work}\vspace{-10pt}
On one hand, FD-based systems have been developed primarily for small-cell setups. The authors in  \cite{Dan:TWC:14} and \cite{Tam:TCOM:16} first studied the SE maximization problem for a small-cell FD system under the assumption of perfect channel state information (CSI), and the worst-case  robust design for FD  multi-cell system  was  considered in \cite{Aquilina:TCOMM:2017}. The application of FD radio to other designs is presently an emerging subject; e.g., the FD-based energy harvesting design \cite{Yadav:Access,ChaliseTCOM17} and the FD-based physical layer security (PLS)  design \cite{Dinh:JSAC:18}. However, the performance of these FD systems is very limited due to  severe network interference in small cell scenarios, which creates a fundamental bottleneck on the network interference management. Consequently, the conventional  techniques (i.e.,  interference-limited ones) are no longer applicable to attain the optimal performance, and thus, new design techniques for FD systems are required.

On the other hand, beamforming design for NOMA has been studied for different optimization targets.   Choi \textit{et al}. \cite{Choi15} considered a  power minimization problem for a two-user multiple-input single-output NOMA (MISO-NOMA) system, where a heuristic method was proposed for its solution. For a general problem of $2K$-UE MISO-NOMA,  zero-forcing  beamformer was adopted at the base station (BS) to cancel the inter-pair interference, resulting in $K$ independent subproblems. A closed-form solution for the power minimization problem of two-user MISO-NOMA to meet given QoS was obtained in \cite{ChenTSP16} and \cite{ChenAC16}.  In  \cite{HDRK16}, SIC was performed at users based on their channel gain differences to maximize the sum rate of a MISO-NOMA DL system. In these works, a joint  power and user clustering (i.e.,  users with distinct channel conditions are grouped to perform NOMA jointly) has not been reported yet. Although the works in \cite{DSP16} and \cite{Dinh:JSAC:Dec2017} proposed a two-zone pairing  that randomly pairs two UEs from different zones, the use of random user pairing scheme may cause significant performance loss, compared to the optimal one.  The authors in \cite{ChenJSAC2017} studied a multi-zone based clustering, where the BS randomly selects one UE from each zone to form a cluster, leading to a suboptimal solution. 

Regarding FD-NOMA, a joint power and subcarrier allocation scheme to enhance the throughput of users was investigated in \cite{Sun:TCOMM:Mar2017}. Tackling the channel uncertainty in the PLS was studied in \cite{Sun:TCOMM:2018}, showing that FD-NOMA is able to secure both DL and UL transmissions simultaneously and obtain a significant system secrecy rate improvement  compared with the traditional FD scheme. Further, the authors in \cite{Ding:LWC:2018} showed that FD-NOMA can improve the achievable rate, through both analysis and numerical simulations. Our approach of joint NOMA beamforming and user scheduling  in FD systems  was also reported in \cite{Hieu:ICTC:Oct2017}, that aims  to serve users in different time slots. Nevertheless, the user association (UA), i.e.,  UL users' decoding order and user clustering for DL users, is not fully exploited in the aforementioned works, leading to a suboptimal solution.

\vspace{-15pt}
\subsection{Main Contributions}\vspace{-5pt}
To that end, we formulate a novel optimization problem to maximize the total SE in FD-NOMA multiuser MISO (MU-MISO) systems, where each user is guaranteed a minimum data rate. Our formulation explicitly considers the effects of user association in both DL and UL channels. For UL reception, we adopt the SIC technique that results in a permutation problem to optimize UL users' decoding order. For DL transmission, it may not be realistic to require all users in the FD system to jointly perform NOMA. Therefore, a promising alternative is to divide  DL users into multiple clusters with different channel conditions by introducing a tensor of binary numbers, where NOMA is implemented within each cluster. The optimization problem of interest is a mixed-integer non-convex programming, which often requires exponential complexity to find its globally optimal solution. To tackle it, we propose novel transformations so that the popular solvers can be applied to address the problem efficiently. Our main contributions are summarized as follows:
\begin{itemize}
	\item {\hili Aiming at SE, we introduce new binary variables to establish BS-UE associations in both DL and UL transmissions, which help not only alleviate network interference (SI, CCI and MUI) but also better exploit different channel conditions among users. We then formulate a novel SE maximization problem of joint power control and user association in	FD-NOMA systems. The formulated problem is a mixed-integer non-convex programming, which is generally NP-hard.}
	\item We then propose two suboptimal low-complexity algorithms by relaxing the binary variables. In the first one,  we resort to the inner convex approximation (ICA) framework \cite{Marks:78,Beck:JGO:10} to tackle the non-convex relaxed problem. Via our novel approximations, the convex program solved at each iteration can be cast as a second-order cone (SOC) program for which the modern convex solvers are very efficient. In the second design, we apply the penalty method to control the tightness of the continuous relaxation (CR) problem, which helps improve the system performance in terms of convergence speed and the SE. Numerical simulations later show that the continuous variables found at the convergence are nearly exact binary, suggesting a close-to-optimal solution.
	\item For a benchmarking purpose, we present an optimal design using the brute-force search (BFS) to find the best user association among all possible cases, combined with the ICA method for the problem of power control. 
	\item  Numerical results are provided to demonstrate the convergence of the proposed algorithms and the achieved SE gains of the proposed FD-NOMA schemes over state-of-the-art approaches, i.e., the conventional FD \cite{Dan:TWC:14}, FD-NOMA with random UA (RUA) and HD-NOMA.
\end{itemize}
\vspace{-15pt}
\subsection{Paper Organization and Notation}
The remainder of this paper is organized as follows. Section \ref{System Model and Problem Formulation} presents the system model and  the problem formulation for FD-NOMA systems. Two proposed suboptimal algorithms using ICA to solve the CR problem are introduced in Section \ref{sec: Continuous Relaxation Problems}. Section \ref{sec: sum rate maximization} discusses the BFS algorithm, while the analysis of initial point, convergence and complexity is shown in Section \ref{sec: Init, Converg, Complexity}. Numerical results are presented in Section \ref{NumericalResults}, and Section \ref{Conclusion} concludes the paper.

\emph{Notation}:  $\mathbf{X}^{T}$, $\mathbf{X}^{H}$ and $\tr(\mathbf{X})$ are the transpose, Hermitian transpose and trace of a matrix $\mathbf{X}$, respectively.  $\|\cdot\|$ denotes the Euclidean norm of a matrix or vector, while $|\cdot|$ stands for the absolute value of a complex scalar.  $\Re\{\cdot\}$ returns the real part of the argument. The notations $\mathbf{X}\succeq\mathbf{0}$, $\mathbf{X}\succ\mathbf{0}$ mean that
$\mathbf{X}$ is a positive-semidefinite or positive-definite matrix, respectively. $ \mathbf{x} \preceq \mathbf{y}  $ denotes the element-wise comparison of vectors, in which for the same index, a certain element of $ \mathbf{x} $ is not larger than the corresponding element of $ \mathbf{y} $.  $\mathbf{x}\sim\mathcal{CN}(\boldsymbol{\eta},\boldsymbol{Z})$ means that $\mathbf{x}$ is a random vector following a circularly symmetric complex Gaussian distribution with mean  $\boldsymbol{\eta}$ and covariance matrix $\boldsymbol{Z}$.
\section{System Model and Problem Formulation} \label{System Model and Problem Formulation}

\begin{figure}[t]
	\begin{minipage}[t]{0.48\columnwidth}
	\centering
	\includegraphics[width=0.81\columnwidth,trim={0cm 0.0cm 0cm 0.0cm}]{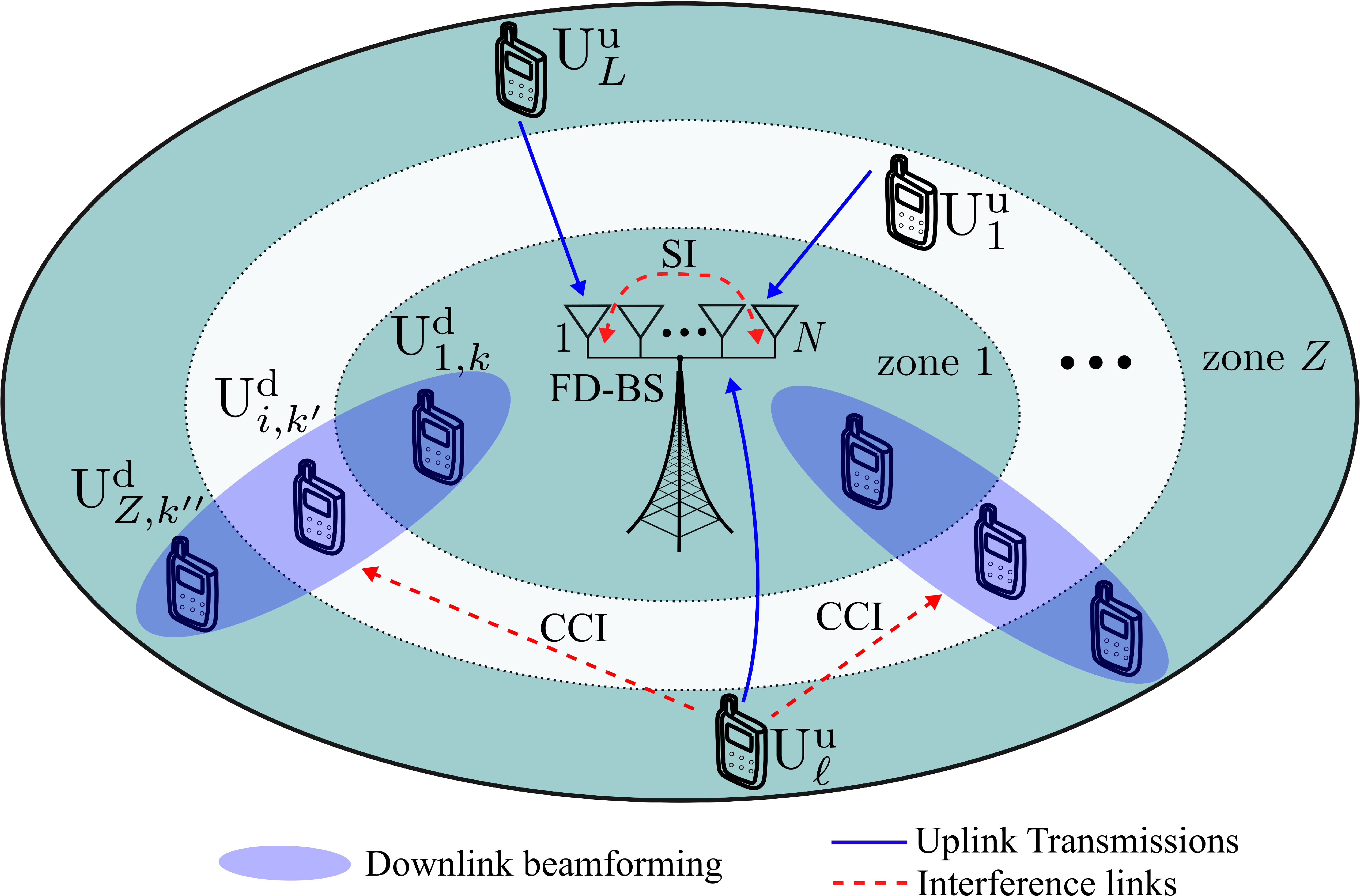}
	\caption{A small cell FD-NOMA MU-MISO system. FD-BS serves $ M=ZK $ DL users, with $ K $ DL users in each of $ Z $ zones, and $ L $ UL users which are assumed to be uniformly deployed in the cell.}
	\label{fig: system model}
	\end{minipage}
	\hfill
	\begin{minipage}[t]{0.48\columnwidth}
		\centering
		\includegraphics[width=0.72\columnwidth,trim={0cm 0.0cm 0cm 1.0cm}]{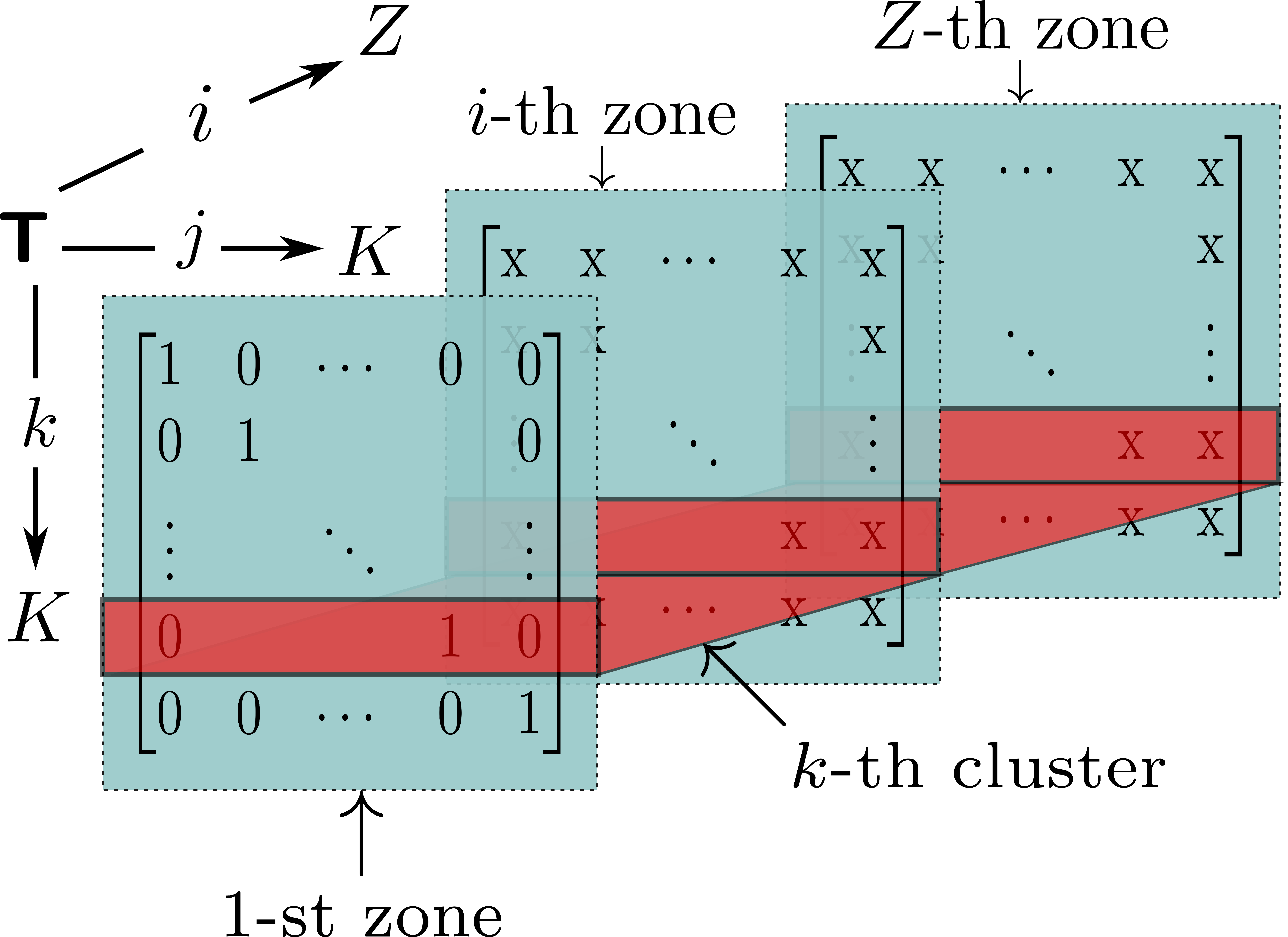}
		\caption{The structure of  tensor $ \tensor{T} $, representing the DL user association. The $ i $-th plane containing matrix $ \mathbf{C}_i\in\{0,1\}^{K\times K},\;i\in\mathcal{Z} $ indicates whether the $ j $-th DL user in zone $ i $ is assigned to the $ k $-th cluster. }
		\label{fig: system model - tensor}
	\end{minipage}
\vspace{-10pt}
\end{figure}

{\hilidra Consider a small cell in which the BS is equipped with $ N > 1$ antennas. To facilitate the NOMA operation, as in \cite{DSP16, Dinh:JSAC:Dec2017,ChenJSAC2017}, the cell is virtually partitioned into $Z$ annular regions (or zones) whose channel conditions are as much different as possible. Specifically, we number the zones/regions of $ \mathcal{Z}\triangleq\{1,2,\dots,Z\} $ in an ascending order with respect to their distance from the BS, i.e., the $ 1 $-st and $ Z $-th zones are the nearest and farthest zones, respectively.  In practice, we do not know how many zones are chosen to provide the best performance, since it depends on different criteria, such as channel condition, cell size,  the number of antennas and users. Finding the optimal number of zones is out of scope of this work. For the sake of mathematical convenience, we assume each zone contains $ K $ DL users leading $ M=ZK $ DL users in total (note that the following analysis is also applicable when zones have different numbers of DL users).

The BS is assumed to be equipped with FD capability, e.g., using the circulator-based FD radio prototypes \cite{Bharadia13} to simultaneously serve  $M$ and $L$ single-antenna DL and UL users in the same frequency band, respectively. We denote the $ k $-th DL user in zone $ i $ by $\DLU, \; \forall i \in \mathcal{Z},  k \in \mathcal{K}\triangleq \{1,2,\dots,K\} $, while the $ \ell $-th UL user at an arbitrary location  is represented by $\ULU, \; \forall \ell \in \mathcal{L}\triangleq \{1,2,\dots,L\} $. The channel vectors from the BS to $\DLU$ and from $\ULU$ to the BS are denoted by $ \mathbf{h}_{ik}^\dl\in \mathbb{C}^{N\times1} $ and $ \mathbf{h}_{\ell}^\ul \in \mathbb{C}^{N\times1}$, respectively. To capture the imperfect SiS at the BS, let $ \mathbf{G}_{\SI} \in \mathbb{C}^{N\times N} $ and $ \rho\in[0,1) $  be the SI channel matrix and  the residual SiS level, respectively. Further, let $ g_{\ell,ik} $ denote  the CCI channel from $\ULU$ to $\DLU$. }


\vspace{-12pt}
\subsection{Downlink Transmission}\vspace{-5pt}
 Before proceeding further, we first lay a foundation on third-order tensor to generalize the DL user clustering through the following definitions.

\begin{definition} \label{def: tensor for assignment}
{\hilidra A group of DL users consisting of $ Z $ DL users, in which no two DL users come from the same zone is called a cluster (of users). The NOMA beamforming is thus applied to $ K $ different clusters.} The third-order tensor $ \tensor{T}\triangleq [T_{kji}]_{k,j\in\mathcal{K},i\in\mathcal{Z}} $ is used for user associations, where $ T_{kji}\in\{0,1\} $. If $ T_{kji}=1 $, the $ j $-th DL user in zone $ i $ is admitted to the $ k $-th cluster, and vice versa.
\end{definition}

\begin{definition} \label{def: index clusters}
{\hilidr It can be foreseen that $ K $ users in each zone will result in $ K! $ possible permutations of clusters. To simplify the considered problem we utilize DL user indices in the first zone to index the clusters. In other words, the $ k $-th DL user in the first zone is always admitted to the $ k $-th cluster.} As illustrated in Fig. \ref{fig: system model - tensor}, $ \tensor{T} $ is formed by $ Z $ matrices w.r.t. the index $ i $ as $ \tensor{T}=\{\mathbf{C}_i\}_{i\in\mathcal{Z}} $, with $ \mathbf{C}_i\triangleq\bigl[T_{kji}\bigr]_{k,j\in\mathcal{K}}\in\{0,1\}^{K\times K} $ representing the $ i $-th zone, and thus the first matrix of $ \tensor{T} $ is assigned to the identity matrix, i.e., $ \mathbf{C}_1\triangleq\bigl[T_{kj1}\bigr]=\mathbf{I}_{K} $. According to \textbf{Definition} \textbf{\ref{def: tensor for assignment}}, $ \mathbf{C}_{i}$ with $ i\in\mathcal{Z}\backslash\{1\} $ is considered as the association variables of DL users. 
\end{definition}
    
From the two  definitions above, we now establish the UA between two arbitrary zones as follows. 
\begin{theorem} \label{thm: user association matrix}
Let $ \mathbf{T}^{iz}\in\{0,1\}^{K\times K}, \; \forall i,z\in\mathcal{Z} $ be an UA matrix between zones $ i $ and $ z $. If the entry $ T_{kj}^{iz},\;\forall k,j\in\mathcal{K} $ is set to 1, the $ k $-th DL user in zone $ i $ and the $ j $-th DL user in zone $ z $ are grouped into the same cluster, and vice versa. Based on the structure of $ \tensor{T} $, the matrix $ \mathbf{T}^{iz} $ is simply calculated as
	\begin{align} \label{eq: user association matrix}
		\mathbf{T}^{iz} = \mathbf{C}_{i}^T \mathbf{C}_{z}.
	\end{align} 
\end{theorem}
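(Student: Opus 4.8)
\textit{Proof proposal.} The plan is to establish \eqref{eq: user association matrix} by a direct entrywise check, after first distilling the combinatorial structure that Definitions \ref{def: tensor for assignment} and \ref{def: index clusters} impose on the slices $\mathbf{C}_i$. The key preliminary step is to argue that each $\mathbf{C}_i$ is a permutation matrix: by \textbf{Definition} \ref{def: tensor for assignment} a cluster consists of exactly $Z$ DL users, one from each zone, and since there are $K$ clusters and $K$ users per zone, the map ``zone-$i$ user $\mapsto$ its cluster'' is a bijection. Translating this to the entries of $\mathbf{C}_i=[T_{kji}]_{k,j\in\mathcal{K}}$, every column of $\mathbf{C}_i$ sums to one (each DL user sits in exactly one cluster) and every row sums to one (each cluster receives exactly one user from zone $i$), so $\mathbf{C}_i\in\{0,1\}^{K\times K}$ is indeed a permutation matrix, consistent with $\mathbf{C}_1=\mathbf{I}_{K}$ fixed in \textbf{Definition} \ref{def: index clusters}.

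Next I would expand the $(k,j)$-entry of the right-hand side:
\begin{align}
	\bigl(\mathbf{C}_i^T\mathbf{C}_z\bigr)_{kj} = \sum_{m\in\mathcal{K}} \bigl(\mathbf{C}_i\bigr)_{mk}\bigl(\mathbf{C}_z\bigr)_{mj} = \sum_{m\in\mathcal{K}} T_{mki}\,T_{mjz}, \nonumber
\end{align}
and then read off its meaning: the summand $T_{mki}T_{mjz}$ equals $1$ exactly when both $\DLUi{ik}$ and $\DLUi{zj}$ are assigned to cluster $m$, and $0$ otherwise. Invoking the permutation-matrix property just shown, each of these two users belongs to a unique cluster, so at most one summand is nonzero; hence the sum is binary, and it equals $1$ precisely when $\DLUi{ik}$ and $\DLUi{zj}$ share a cluster — which by definition of the UA matrix is the event $T_{kj}^{iz}=1$. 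As $k,j$ range over $\mathcal{K}$, this yields $\mathbf{T}^{iz}=\mathbf{C}_i^T\mathbf{C}_z$, i.e., \eqref{eq: user association matrix}.

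The one place that needs care is the permutation-matrix claim and, through it, the fact that the displayed sum never exceeds one; everything else is routine bookkeeping. I would also sanity-check the final identity against the obvious special cases: $\mathbf{T}^{iz}=\bigl(\mathbf{T}^{zi}\bigr)^T$, matching the symmetry of the relation ``lies in the same cluster''; $\mathbf{T}^{ii}=\mathbf{C}_i^T\mathbf{C}_i=\mathbf{I}_{K}$, since within a single zone a user is co-clustered only with itself; and $\mathbf{T}^{1z}=\mathbf{C}_1^T\mathbf{C}_z=\mathbf{C}_z$, which agrees with the convention of \textbf{Definition} \ref{def: index clusters} that labels cluster $k$ by the $k$-th DL user of the first zone.
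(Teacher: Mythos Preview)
Your proof is correct and complete. The route differs from the paper's in an instructive way. The paper first casts each slice $\mathbf{C}_i$ as a change-of-basis matrix relative to the cluster-index basis $\mathbf{C}_0=\mathbf{I}_K$, invokes a ``tensor transformation law'' to write $\mathbf{T}^{iz}=(\mathbf{T}^{1i})^{-1}\mathbf{C}_1\mathbf{C}_z=\mathbf{C}_i^{-1}\mathbf{C}_z$, and only then appeals to the permutation-matrix property (derived from the same row/column-sum argument you give) to replace $\mathbf{C}_i^{-1}$ by $\mathbf{C}_i^{T}$. You instead bypass the change-of-basis formalism entirely and verify $\mathbf{T}^{iz}=\mathbf{C}_i^{T}\mathbf{C}_z$ by a direct entrywise computation, reading $\sum_{m}T_{mki}T_{mjz}$ as the indicator that $\DLUi{ik}$ and $\DLUi{zj}$ share a cluster. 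Your argument is more elementary and self-contained; the paper's buys a slightly more structural viewpoint (the identity is literally a change-of-basis composition), at the cost of appealing to a transformation law whose precise statement is left implicit. Either way, the only substantive ingredient is that each $\mathbf{C}_i$ is a permutation matrix, and you justify this cleanly.
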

\begin{proof}
	Please see Appendix \ref{app: user association matrix}.
\end{proof}

In the DL channel,  BS employs a linear beamforming vector $ \mathbf{w}_{ik} \in \mathbb{C}^{N\times1} $ to precode the data symbol $ x_{ik}^{\dl} $, with $ \mathbb{E}\bigl[|x_{ik}^{\dl}|^2\bigr]=1$, intended to $\DLU$. The  received signal at $\DLU$ can be expressed as
\begin{align} \label{eq: received signal at DL user}
	\hspace{-8pt} y_{ik}^{\dl} = \sum\nolimits_{i'\in\mathcal{Z}}\sum\nolimits_{k'\in\mathcal{K}} (\mathbf{h}_{ik}^\dl)^H \mathbf{w}_{i'k'} x_{i'k'}^{\dl} + \underbrace{\sum\nolimits_{\ell\in\mathcal{L}} p_{\ell} g_{\ell,ik} x_{\ell}^{\ul}}_{\text{CCI}} + n_{ik}, \hspace{-5pt}
\end{align}
where $ p_{\ell} $ and $ x_{\ell}^{\ul} $, with $ \mathbb{E}\bigl[|x_{\ell}^{\ul}|^2\bigr]=1$,  are the transmit power coefficient and data symbol of $ \ULU $, respectively; and $ n_{ik} \sim \mathcal{CN}(0, \sigma_{ik}^2) $ is the additive white Gaussian noise (AWGN) at $\DLU$.
The messages intended to DL user in cluster $k$ are sequentially decoded as follows.  $\DLUi{ik}$ first decodes the  messages of $\DLUi{i'j}$ with  $ i'\in\mathcal{Z}_{i}^{+}\triangleq\{i+1,\dots,Z\} $  for $ T_{kj}^{ii'}=1 $, and then removes them by using the SIC technique before decoding its own message. {\hili The received signal-to-interference-plus-noise ratio (SINR) at  $\DLUi{ik}$ can be generally expressed as
	\begin{align} \label{eq: DL SINR general form}
	\gamma_{ik}^{\dl}(\mathbf{w},\mathbf{p},\tensor{T}) = \underset{ z\in\mathcal{Z}_{i}}{\min}\;\underset{ j\in\mathcal{K}}{\max}\Biggl\{\mfrac{T_{jk}^{zi}|(\mathbf{h}_{zj}^\dl)^H \mathbf{w}_{ik}|^2}{\Theta_{jk}^{zi}(\mathbf{w},\mathbf{p},\tensor{T})}\Biggr\},
	\end{align}
where $ \mathcal{Z}_{i}\triangleq\{1,\dots,i\} $, $ \mathbf{p}=[p_{\ell}]_{\ell\in\mathcal{L}} $, $ \mathbf{w}=[\mathbf{w}_i^H]^H_{i\in\mathcal{Z}} $ with $ \mathbf{w}_i\triangleq[\mathbf{w}_{ik}^H]_{k\in \mathcal{K}}^H $, and the interference-plus-noise (IN) for decoding the $\DLUi{ik}$'s message at $\DLUi{zj}$, denoted by $  \Theta_{jk}^{z i}(\mathbf{w},\mathbf{p},\tensor{T}) $, is given as
\begin{IEEEeqnarray}{l}
\Theta_{jk}^{zi}(\mathbf{w},\mathbf{p},\tensor{T}) = \underset{(z',j')\neq(i,k)}{\sum_{z'\in\mathcal{Z}_{i} }\;\sum_{j'\in\mathcal{K}}} |(\mathbf{h}_{zj}^\dl)^H \mathbf{w}_{z'j'}|^2 + \sum_{i'\in\mathcal{Z}_{i}^{+} }\sum_{k'\in\mathcal{K}}(1-T_{jk'}^{zi'})|(\mathbf{h}_{zj}^\dl)^H \mathbf{w}_{i'k'}|^2 
+ \sum_{\ell\in\mathcal{L}} p_{\ell}^2 |g_{\ell,zj}|^2 + \sigma_{zj}^2. \quad\;
\end{IEEEeqnarray} 
}

\vspace{-20pt}
\subsection{Uplink Transmission}
\vspace{-10pt}
The  received signal vector at the FD-BS in  the UL transmission can be expressed as
\begin{align} \label{eq: received signal at BS}
\mathbf{y}^{\ul} & = \sum\nolimits_{\ell\in\mathcal{L}} p_{\ell}  \mathbf{h}_{\ell}^{\ul} x_{\ell}^{\ul} + \underbrace{\rho\sum\nolimits_{i\in\mathcal{Z}}\sum\nolimits_{k\in\mathcal{K}} \mathbf{G}_{\SI}^H \mathbf{w}_{ik} x_{ik}^{\dl}}_{\text{SI}}  + \mathbf{n},
\end{align}
where $ \mathbf{n}\sim \mathcal{CN}(0,\sigma_{\mathtt{U}}^2\mathbf{I}) $ is the AWGN. To decode the UL messages, we adopt the minimum mean-square error and SIC (MMSE-SIC) decoder  at the FD-BS \cite{Tse:book:05}. To jointly optimize the UL users' decoding order, we  introduce binary variables $ \beta_{\ell m}\in \{0,1\}, \forall \ell, m \in \mathcal{L} $. Specifically, the message of the $ \ell $-th UL user is successfully decoded prior to that of the $ m $-th UL user if $ \beta_{\ell m}=1 $ in sync with $ \beta_{m \ell}=0 $, and they are in reverse order if $ \beta_{\ell m}=0 $.  On this basis and by treating the residual SiS as  noise, the received SINR of $ \ULU $ at the FD-BS can be expressed as
\begin{equation} \label{eq: UL SINR}
\gamma_{\ell}^{\ul}\bigl(\mathbf{w}, \mathbf{p}, \boldsymbol{\beta}\bigr) = p_{\ell}^2 (\mathbf{h}_{\ell}^{\ul})^H \bigl(\boldsymbol{\Psi}_{\ell}(\mathbf{w}, \mathbf{p}, \boldsymbol{\beta})\bigr)^{-1} \mathbf{h}_{\ell}^{\ul},
\end{equation}
where $ \boldsymbol{\beta}\triangleq[\beta_{\ell m}]_{\ell, m\in \mathcal{L}} $ and
\begin{align}
	\boldsymbol{\Psi}_{\ell}(\mathbf{w}, \mathbf{p}, \boldsymbol{\beta}) \triangleq  \sum\nolimits_{m\in\mathcal{L}} \beta_{\ell m} p_{m}^2 \mathbf{h}_{m}^{\ul}(\mathbf{h}_{m}^{\ul})^H  + \rho^2\sum\nolimits_{i\in\mathcal{Z}}\sum\nolimits_{k\in\mathcal{K}} \mathbf{G}_{\SI}^H \mathbf{w}_{ik} \mathbf{w}_{ik}^H\mathbf{G}_{\SI} + \sigma_{\mathtt{U}}^2\mathbf{I}. \nonumber
\end{align}

\subsection{Optimization Problem Formulation}
{\hilidra With the above discussion, the achievable rates (measured in nats/s/Hz) of $\DLUi{ik}$ and $ \ULU $ are respectively given as
\begin{align} 
R_{ik}^{\dl}\bigl(\mathbf{w}, \mathbf{p},\tensor{T}\bigr) & = \ln\bigl(1+\gamma_{ik}^{\dl}(\mathbf{w}, \mathbf{p},\tensor{T})\bigr), \label{eq: DL rate - nonconvex} \\
R_{\ell}^{\ul}\bigl(\mathbf{w}, \mathbf{p},\boldsymbol{\beta}\bigr) & = \ln\bigl(1+\gamma_{\ell}^{\ul}(\mathbf{w}, \mathbf{p},\boldsymbol{\beta})\bigr) \label{eq: UL rate - nonconvex}.
\end{align}
Herein, our main goal is to jointly optimize beamformers and  transmit power ($\mathbf{w}, \mathbf{p}$) and binary variables $(\tensor{T},\boldsymbol{\beta})$, so that the total SE is maximized subject to QoS and power constraints. We can now state  the SE maximization problem,  referred to as SEM problem for short,  as}
\begingroup
\allowdisplaybreaks\begin{subequations} \label{eq: prob. general form tensor}
	\setlength{\jot}{0.05pt}
	\begin{align}
		\hspace{-15pt}\underset{\mathbf{w}, \mathbf{p},\tensor{T},\boldsymbol{\beta}}{\max} &\   R_{\Sigma}\triangleq\sum\nolimits_{i\in\mathcal{Z}}\sum\nolimits_{k\in\mathcal{K}}R_{ik}^{\dl}\bigl(\mathbf{w}, \mathbf{p},\tensor{T}\bigr)+\sum\nolimits_{\ell\in\mathcal{L}}R_{\ell}^{\ul}\bigl(\mathbf{w}, \mathbf{p},\boldsymbol{\beta}\bigr) \label{eq: prob. general form tensor :: a} \\ 
		\hspace{-15pt}\st & \; \|\mathbf{w}\|^2 \leq P_{\bs}^{\text{max}}, \label{eq: prob. general form tensor :: b} \\
		&  p_{\ell}^2 \leq P_{\ell}^{\text{max}},\; p_{\ell} \geq 0,\ \forall \ell \in \mathcal{L}, \label{eq: prob. general form tensor :: c} \\
		&  
		R_{ik}^{\dl}\bigl(\mathbf{w}, \mathbf{p},\tensor{T}\bigr) \geq \bar{R}_{ik}^{\dl}, \; \forall i\in \mathcal{Z},\;  k \in \mathcal{K}, \label{eq: prob. general form tensor :: e} \\
		&  
		R_{\ell}^{\ul}\bigl(\mathbf{w}, \mathbf{p},\boldsymbol{\beta}\bigr) \geq \bar{R}_{\ell}^{\ul}, \; \forall \ell \in \mathcal{L}, \label{eq: prob. general form tensor :: f} \\
		&  T_{kj}^{iz} \in \{0,1\}, \;\forall i,z\in\mathcal{Z},\; \forall k,j \in \mathcal{K}, \label{eq: prob. general form tensor :: g} \\
		&  \sum\nolimits_{k\in\mathcal{K}}T_{kj}^{iz} = 1,\; \sum\nolimits_{j\in\mathcal{K}}T_{kj}^{iz} = 1,\forall i,z\in\mathcal{Z},\  \forall k, j \in \mathcal{K}, \label{eq: prob. general form tensor :: h} \;\\
		&  \beta_{\ell m} \in \{0,1\}, \; \forall \ell, m \in \mathcal{L}, \label{eq: prob. general form tensor :: j} \\
		&  \beta_{\ell\ell} = 0, \; \forall \ell \in \mathcal{L}, \label{eq: prob. general form tensor :: k} \\
		&  \beta_{\ell m} + \beta_{m \ell} = 1, \;  \ell \neq m,\; \forall \ell,m \in \mathcal{L}, \label{eq: prob. general form tensor :: l} \\
		& \bigl|\sum\nolimits_{m\in\mathcal{L}}\beta_{\ell m} - \sum\nolimits_{m\in\mathcal{L}}\beta_{\ell' m}\bigr| \geq 1, \;  \ell \neq \ell',\; \forall \ell, \ell' \in \mathcal{L}, \label{eq: prob. general form tensor :: m} 
	\end{align}							
\end{subequations}\endgroup
where $ P_{\bs}^{\text{max}} $ and $ P_{\ell}^{\text{max}} $ in \eqref{eq: prob. general form tensor :: b} and \eqref{eq: prob. general form tensor :: c}  are the transmit power budgets at the BS and $\ULU$, respectively. In \eqref{eq: prob. general form tensor :: e} and \eqref{eq: prob. general form tensor :: f}, we impose   the minimum QoS requirements  $\bar{R}_{ik}^{\dl}\geq 0$ and $\bar{R}_\ell^{\ul}\geq 0$ in order to maintain some degree of fairness among users. Constraints \eqref{eq: prob. general form tensor :: g} and \eqref{eq: prob. general form tensor :: h} establish the criteria for DL user clustering, in which $ \mathbf{T}^{iz} $ satisfies the property of tensor $ \tensor{T} $ given in \textbf{Theorem} \textbf{\ref{thm: user association matrix}}, while constraints \eqref{eq: prob. general form tensor :: j}-\eqref{eq: prob. general form tensor :: m} determine the decoding orders of UL users. Due to the non-concavity of the objective \eqref{eq: prob. general form tensor :: a} and the non-convexity of  QoS constraints w.r.t. the decision variables (i.e.,  by examining the Hessian matrix),  problem \eqref{eq: prob. general form tensor} belongs to a  class of mixed-integer non-convex problem.

\begin{remark}The merits of \textbf{Theorem} \textbf{\ref{thm: user association matrix}} to problem \eqref{eq: prob. general form tensor} are as follows. {\hili First, the search region for DL user clustering via tensor $ \tensor{T} $ is significantly reduced compared to the exhaustive search method, since $ \mathbf{C}_1 $ is fixed to identity matrix. Accordingly, the association problem avoids searching all permutations of clusters while still achieving a close-to-optimal solution.} The second advantage is to reduce the number of association variables. For NOMA decoding at a certain zone, the knowledge of UA with  other zones is required, i.e., $ K^2 $ variables for each of $ {2\choose Z} $ couples of zones,  leading to $ {2\choose Z}K^2 $ UA variables required zone-by-zone. In comparison, by exploiting the tensor structure of $ \tensor{T} $, the number of decision variables for user clustering is significantly reduced to $ (Z-1)K^2 $. Finally, the association among DL users is properly defined by \textbf{Theorem} \textbf{\ref{thm: user association matrix}}, in which the tensor $ \tensor{T} $ storing the change-of-basis matrices w.r.t. the identity-matrix basis is sufficient to recover the UA matrix of any two zones via \eqref{eq: user association matrix}. Note that $ T_{kji} $ is an element of $ \tensor{T} $, while $ T_{kj}^{iz} $ is an entry of the matrix $ \mathbf{T}^{iz} $ derived from \eqref{eq: user association matrix}. From the fact that $ \mathbf{C}_1=\mathbf{I}_K $ as  in \textbf{Definition} \textbf{\ref{def: index clusters}}, $ \mathbf{C}_i, i\in\mathcal{Z}\backslash\{1\} $ becomes the UA matrix between  zone 1 and  zone $i$, leading to $ \mathbf{T}^{1i}\equiv\mathbf{C}_i $. {\hili When zones have different numbers of DL users, let us denote the number of users in zone $ z $ by $ K_z,\;z\in\mathcal{Z}\triangleq\{1,\dots,Z\} $.  If $ K_z < K \triangleq \underset{z\in\mathcal{Z}}{\max} \{K_z\} $, some new users with zero channel vectors can be added to zone $ z $, such that each zone has $ K $ DL users. The third-order tensor remains unchanged. Accordingly, problem \eqref{eq: prob. general form tensor} can be easily re-expressed by forcing beamforming vectors of newly added users to be zeros and skipping their QoS constraints.}
\end{remark}


{\hili 
\begin{remark} As pointed out in \cite{Dinh:JSAC:Dec2017}, a larger cluster size with more distinct channel conditions among DL users is more desirable. This is because a large size of a DL cluster with very different channel gains plays a vital role in NOMA systems. This mainly depends on the number of DL users and cell size. In this paper, we focus on a small-cell setup, which is merely due to current practical limitations of FD radios \cite{Dan:TWC:14,Aquilina:TCOMM:2017,Dinh:Access,Yadav:Access, Dinh:JSAC:18,Tam:TCOM:16}. As such, it is reasonable to consider a two-zone NOMA for FD small-cell systems, as in the following case study. 
\end{remark}
}

\textit{{Case study with $Z = 2$}:} {\hili To reduce the system load and processing delay, we will study user pairing for DL transmission, where  NOMA is applied to  pairs of two DL users. Here, each pair includes one near DL user $\DLUi{1k}$ in the inner zone and one far DL user $\DLUi{2j}$ in the outer zone.} We should note that the two-zone NOMA  for DL transmission has been widely adopted in the literature \cite{Dinh:JSAC:Dec2017,Dinh:JSAC:18,DSP16}. In this case, $ \tensor{T} $  includes two UA matrices as $ \mathbf{C}_1=\mathbf{I}_K $ and  $ \mathbf{C}_2\in\{0,1\}^{K\times K} $. Without loss of generality, let $ \boldsymbol{\alpha}=\mathbf{T}^{12}=\mathbf{C}_2 $ be a unique matrix of UA variables in $ \tensor{T} $, and then the entries of  $ \boldsymbol{\alpha} $ are  $ \alpha_{kj} \in \{0,1\} $, indicating whether $\DLUi{1k}$ in zone 1 is paired with $\DLUi{2j}$ in zone 2.  From \eqref{eq: DL SINR general form}, the  SINRs at $\DLUi{1k}$ and $\DLUi{2j}$ are respectively simplified as
\begingroup
\allowdisplaybreaks\begin{subequations} \label{eq: DL SINR}
	\begin{align} 
	\gamma_{1k}^{\dl}(\mathbf{w},\mathbf{p},\boldsymbol{\alpha}) & = \mfrac{|(\mathbf{h}_{1k}^\dl)^H \mathbf{w}_{1k}|^2}{\phi_{k}(\mathbf{w},\mathbf{p},\boldsymbol{\alpha})}, \label{eq: DL SINR zone 1} \\
	\gamma_{2j}^{\dl}(\mathbf{w},\mathbf{p},\boldsymbol{\alpha}) & = \min\Biggl\{ \underset{ k\in\mathcal{K}}{\max}\Bigl\{\mfrac{\alpha_{kj}|(\mathbf{h}_{1k}^\dl)^H \mathbf{w}_{2j}|^2}{\psi_{j}^{k}(\mathbf{w},\mathbf{p})}\Bigr\},  \mfrac{|(\mathbf{h}_{2j}^\dl)^H \mathbf{w}_{2j}|^2}{\varphi_{j}(\mathbf{w},\mathbf{p})}\Biggr\}, \label{eq: DL SINR zone 2}
	\end{align}
\end{subequations}\endgroup
where the IN $\phi_{k}(\mathbf{w},\mathbf{p},\boldsymbol{\alpha})$ experienced by $\DLUi{1k}$ is
\begin{align} 
\phi_{k}(\mathbf{w},\mathbf{p},\boldsymbol{\alpha}) =  \sum\nolimits_{k'\in\mathcal{K}\setminus k} |(\mathbf{h}_{1k}^\dl)^H \mathbf{w}_{1k'}|^2 + \sum\nolimits_{\ell\in\mathcal{L}} p_{\ell}^2 |g_{\ell,1k}|^2  
 + \sum\nolimits_{j'\in\mathcal{K}} (1-\alpha_{kj'})|(\mathbf{h}_{1k}^\dl)^H \mathbf{w}_{2j'}|^2 + \sigma_{1k}^2,\nonumber
\end{align}
while the INs involved in the SINRs for decoding the $\DLUi{2j}$'s message at $\DLUi{1k}$ and itself, denoted by $\psi_{j}^{k}(\mathbf{w},\mathbf{p})$ and $\varphi_{j}(\mathbf{w},\mathbf{p})$, are respectively given as
\begin{align}
\psi_{j}^{k}(\mathbf{w},\mathbf{p}) = & \sum\nolimits_{k'\in\mathcal{K}} |(\mathbf{h}_{1k}^\dl)^H \mathbf{w}_{1k'}|^2 + \sum\nolimits_{j'\in\mathcal{K}\setminus j}|(\mathbf{h}_{1k}^\dl)^H \mathbf{w}_{2j'}|^2  + \sum\nolimits_{\ell\in\mathcal{L}} p_{\ell}^2 |g_{\ell,1k}|^2 + \sigma_{1k}^2,\nonumber \\
\varphi_{j}(\mathbf{w},\mathbf{p}) = & \sum\nolimits_{k'\in\mathcal{K}} |(\mathbf{h}_{2j}^\dl)^H \mathbf{w}_{1k'}|^2 + \sum\nolimits_{j'\in\mathcal{K}\setminus j}|(\mathbf{h}_{2j}^\dl)^H \mathbf{w}_{2j'}|^2  + \sum\nolimits_{\ell\in\mathcal{L}} p_{\ell}^2 |g_{\ell,2j}|^2 + \sigma_{2j}^2.\nonumber
\end{align}
We remark that the first term in \eqref{eq: DL SINR zone 2} is the SINR for decoding the $\DLUi{2j}$'s message at $\DLUi{1k}$, which is imposed on $\gamma_{2j}^{\dl}(\mathbf{w},\mathbf{p},\boldsymbol{\alpha})$ to ensure that $\DLUi{1k}$ can successfully decode the $\DLUi{2j}$'s message by SIC \cite{Dinh:JSAC:Dec2017,ChenTSP16}. Toward this end, we consider the following modified problem of \eqref{eq: prob. general form tensor} 
\begingroup
\allowdisplaybreaks\begin{subequations} \label{eq: prob. general form}
	\setlength{\jot}{0.05pt}
	\begin{align}
		\underset{\mathbf{w}, \mathbf{p},\boldsymbol{\alpha},\boldsymbol{\beta}}{\max} & \; R_{\Sigma}\triangleq\sum\nolimits_{i\in\mathcal{Z}}\sum\nolimits_{k\in\mathcal{K}}R_{ik}^{\dl}\bigl(\mathbf{w}, \mathbf{p},\boldsymbol{\alpha}\bigr)+\sum\nolimits_{\ell\in\mathcal{L}}R_{\ell}^{\ul}\bigl(\mathbf{w}, \mathbf{p},\boldsymbol{\beta}\bigr)  \label{eq: prob. general form a} \qquad\;\\
		\st & \; \eqref{eq: prob. general form tensor :: b}, \eqref{eq: prob. general form tensor :: c}, \eqref{eq: prob. general form tensor :: f}, \eqref{eq: prob. general form tensor :: j}-\eqref{eq: prob. general form tensor :: m}, \label{eq: prob. general form b}\\
		& \; 
		R_{ik}^{\dl}\bigl(\mathbf{w}, \mathbf{p},\boldsymbol{\alpha}\bigr) \geq \bar{R}_{ik}^{\dl}, \; \forall i\in\mathcal{Z},  k \in \mathcal{K}, \label{eq: prob. general form c} \\
		& \; \alpha_{kj} \in \{0,1\}, \; \forall k,j \in \mathcal{K}, \label{eq: prob. general form d} \\
		& \; \sum\nolimits_{k\in\mathcal{K}}\alpha_{kj} = 1, \; \sum\nolimits_{j\in\mathcal{K}}\alpha_{kj} = 1, \forall k,j \in \mathcal{K}. \label{eq: prob. general form e} 
	\end{align}							
\end{subequations}\endgroup
{\hili where $ R_{ik}^{\dl}\bigl(\mathbf{w}, \mathbf{p},\boldsymbol{\alpha}\bigr) $, is derived by replacing $ \gamma_{ik}^{\dl}(\mathbf{w}, \mathbf{p},\tensor{T}) $ in \eqref{eq: DL rate - nonconvex} with $ \gamma_{ik}^{\dl}(\mathbf{w}, \mathbf{p},\boldsymbol{\alpha}) $ given by \eqref{eq: DL SINR}.} In what follows, three iterative algorithms to solve the SEM problem \eqref{eq: prob. general form} will be presented. However, the mathematical presentation can be slightly modified to address the general SEM problem \eqref{eq: prob. general form tensor}, and numerical results for the three-zone NOMA scenario will be elaborated in Section \ref{NumericalResults}.

{\hili In this paper, we focus on slowly time-varying channels in small cell systems and adopt the channel reciprocity of UL and DL channels in time division duplex (TDD) mode. At the beginning of each time block, BS obtains the CSI of DL and UL channels by requesting all users to  simultaneously transmit mutually orthogonal pilot sequences of length $ \tau $ symbols (i.e., $ \tau \geq M+L $). Generally, the pilots are transmitted via public channel and the framework of training sequence can be known at UL users. Therefore, for the acquisition of the CSI of CCI channels, UL users  can take advantage of the pilots sent from DL users and then estimate CCI channels through TDD. Then, during the transmission phase, the estimated CCI channels acquired by the BS can be periodically embedded in the data packets of UL users. For slowly time-varying channels, it is reasonable to assume that the  CSI of all links in the network is available at the BS, where the system optimization is carried out. The performance under perfect CSI unveils an upper bound for the achievement of FD-NOMA systems.}

\vspace{-5pt}
\section{Proposed Suboptimal Designs} \label{sec: Continuous Relaxation Problems}

{\hili
In general, finding a globally optimal solution to \eqref{eq: prob. general form} is very challenging due to two obvious reasons. First, solving problem \eqref{eq: prob. general form} for an optimal solution may end up with an exhaustive search due to the strong coupling between the continuous variables $ (\mathbf{w},\mathbf{p}) $ and binary variables $ (\boldsymbol{\alpha},\boldsymbol{\beta}) $. Second, even if the binary variables are fixed,  problem \eqref{eq: prob. general form} still remains highly non-convex in the continuous variables. This method is of little practical use in wireless communication designs  since the resulting computational complexity grows exponentially when the number of users increases. This motivates us to develop  more practically appealing methods which can find a good solution with much less complexity. In particular, we first relax the binary UA variables to tackle the binary nature of the problem \eqref{eq: prob. general form}, and then, two suboptimal low-complexity algorithms are proposed to solve the resulting non-convex CR problem. In the first algorithm, we  employ the ICA framework \cite{Marks:78,Beck:JGO:10} to convexify the non-convex continuous parts, which aims at finding the approximate, but efficient solution to the CR problem. In the second algorithm, we propose an approach combining ICA and penalty method to produce a higher quality solution.}


\subsection{ICA-CR based Design}\vspace{-5pt}
We first consider the following CR of problem \eqref{eq: prob. general form} 
\begingroup
\allowdisplaybreaks
\begin{subequations} \label{eq: prob. general form - relax.}
	\setlength{\jot}{0.05pt}
	\begin{align}
		\underset{\mathbf{w}, \mathbf{p},\boldsymbol{\alpha},\boldsymbol{\beta}}{\max} & \quad \sum\nolimits_{i\in\mathcal{Z}}\sum\nolimits_{k\in\mathcal{K}}R_{ik}^{\dl}\bigl(\mathbf{w}, \mathbf{p},\boldsymbol{\alpha}\bigr)+\sum\nolimits_{\ell\in\mathcal{L}}R_{\ell}^{\ul}\bigl(\mathbf{w}, \mathbf{p},\boldsymbol{\beta}\bigr) \quad\quad \label{eq: prob. general form - relax. a} \\
		\st & \quad \eqref{eq: prob. general form tensor :: b}, \eqref{eq: prob. general form tensor :: c}, \eqref{eq: prob. general form tensor :: f}, \eqref{eq: prob. general form tensor :: k}-\eqref{eq: prob. general form tensor :: m}, 
		\eqref{eq: prob. general form c}, \eqref{eq: prob. general form e},  \label{eq: prob. general form - relax. b} \\
		& \quad 0 \leq \alpha_{kj} \leq 1,\;\forall k,j\in\mathcal{K}, \label{eq: prob. general form - relax. c} \\
		& \quad 0 \leq \beta_{\ell m}\leq 1,\;\forall \ell,m\in\mathcal{L}, \label{eq: prob. general form - relax. d}
	\end{align}							
\end{subequations}
\endgroup
where $\boldsymbol{\alpha}$ and $\boldsymbol{\beta}$ are relaxed to be continuous as in \eqref{eq: prob. general form - relax. c} and \eqref{eq: prob. general form - relax. d}, respectively.
By following the relaxation property, the feasible region of \eqref{eq: prob. general form - relax.} is larger than that of \eqref{eq: prob. general form}, and thus any feasible point of the latter is also feasible for the former. It is obvious that the difficulty in solving problem \eqref{eq: prob. general form - relax.} is due to  the non-concave objective  \eqref{eq: prob. general form - relax. a} and non-convex constraints in \eqref{eq: prob. general form tensor :: f}, \eqref{eq: prob. general form tensor :: m} and \eqref{eq: prob. general form c}. In light of the ICA method, convex approximations are required to tackle the non-convexity of  \eqref{eq: prob. general form - relax.}.

\textit{Concavity of the objective \eqref{eq: prob. general form - relax. a}:}
Let us start by handling  the non-concavity of $ R_{ik}^{\dl}(\mathbf{w}, \mathbf{p},\boldsymbol{\alpha}) $.  For $ \DLUi{1k}$ (DL users in zone 1), we introduce new variables $ \omega_{1k}, \forall k\in \mathcal{K} $ to explicitly expose the non-convex parts of $R_{1k}^{\dl}(\mathbf{w},\mathbf{p},\boldsymbol{\alpha})$ as
\begingroup
\allowdisplaybreaks\begin{subequations} \label{eq: SINR zone-1 convexification}
	\setlength{\jot}{0.05pt}
	\begin{align} 
	R_{1k}^{\dl}(\mathbf{w},\mathbf{p},\boldsymbol{\alpha}) & \geq \ln\bigl(1+\mfrac{1}{\omega_{1k}}\bigr), \label{eq: objective for omega1}\\
	\mfrac{|(\mathbf{h}_{1k}^\dl)^H \mathbf{w}_{1k}|^2}{\phi_{k}(\mathbf{w},\mathbf{p},\boldsymbol{\alpha})} & \geq \mfrac{1}{\omega_{1k}}, \label{eq: var. change omega1}
	\end{align}\end{subequations}\endgroup
which does not affect the optimality. The reason is that \eqref{eq: var. change omega1}  must hold with equality at the optimum; otherwise, we can decrease $\omega_{1k}$ to obtain a higher objective without violating the constraint. We are now ready to use ICA for approximating \eqref{eq: SINR zone-1 convexification}. {\hili Suppose the value of $ x $ at iteration $ \kappa $ in the proposed iterative algorithm is denoted by $ x^{(\kappa)} $, which is referred as a feasible point of $ x $ at iteration $ \kappa+1 $.} From \eqref{eq: objective for omega1} and as an effort to reduce the complexity of log function, the concave minorant of $ R_{1k}^{\dl}(\mathbf{w}, \mathbf{p},\boldsymbol{\alpha}) $ at the $ (\kappa+1) $-th iteration is derived as
\begin{align} 
R_{1k}^{\dl}(\mathbf{w},\mathbf{p},\boldsymbol{\alpha})  \geq \mathtt{A}(\omega_{1k}^{(\kappa)}) + \mathtt{B}(\omega_{1k}^{(\kappa)})\omega_{1k} := \ddot{R}_{1k}^{\dl, (\kappa)}, \label{eq: DL rate concave minorant1}
\end{align}
due to the convexity of $\ln\bigl(1+\mfrac{1}{\omega_{1k}}\bigr), $ where $\mathtt{A}(\omega_{1k}^{(\kappa)})  \triangleq \ln\bigl(1+\mfrac{1}{\omega_{1k}^{(\kappa)}}\bigr) + \mfrac{1}{(\omega_{1k}^{(\kappa)}+1)}$ and $
\mathtt{B}(\omega_{1k}^{(\kappa)})  \triangleq -\mfrac{1}{\omega_{1k}^{(\kappa)}(\omega_{1k}^{(\kappa)}+1)}$ \cite[Eq. (82)]{Dinh:JSAC:Dec2017}.
It is obvious that  the equality in \eqref{eq: DL rate concave minorant1} holds true whenever $\omega_{1k}=\omega_{1k}^{(\kappa)}$.  In other words, we obtain $ R_{1k}^{\dl}(\mathbf{w},\mathbf{p},\boldsymbol{\alpha}) = \ddot{R}_{1k}^{\dl, (\kappa)} $ as $ \kappa \rightarrow \infty $.
By applying the ICA method, we iteratively replace the non-convex constraint \eqref{eq: var. change omega1} by
\begin{equation} \label{eq: var. change omega convex - relax.}
\phi_{k}(\mathbf{w},\mathbf{p},\boldsymbol{\alpha}) \leq \omega_{1k} \tilde{\gamma}_{1k}^{(\kappa)}\bigl(\mathbf{w}\bigr),
\end{equation}
over the trust region (i.e., the feasible domain):
\begin{align} \label{eq: positive effective channel DL 1,k 1}
\tilde{\gamma}_{1k}^{(\kappa)}(\mathbf{w}) \triangleq \; 2\Re\{(\mathbf{h}_{1k}^\dl)^H\mathbf{w}_{1k}^{(\kappa)}\}\Re\{(\mathbf{h}_{1k}^\dl)^H \mathbf{w}_{1k}\}  - \bigl(\Re\{(\mathbf{h}_{1k}^\dl)^H \mathbf{w}_{1k}^{(\kappa)}\}\bigr)^2  > 0,\; \forall k\in\mathcal{K}, 
\end{align}
where $\tilde{\gamma}_{1k}^{(\kappa)}(\mathbf{w})$ is the first order approximation of $|(\mathbf{h}_{1k}^\dl)^H \mathbf{w}_{1k}|^2$ around the point $\mathbf{w}_{1k}^{(\kappa)}$ found at iteration $\kappa$. It can be seen that \eqref{eq: var. change omega convex - relax.} is still a non-convex constraint due to non-convexity in the third term of function $\phi_{k}(\mathbf{w},\mathbf{p},\boldsymbol{\alpha})$. To overcome this issue, we introduce the following lemma.
\begin{lemma} \label{lem: UB xy2}
	Consider a function $ h(x,y)\triangleq xy^2,\; x>0 $. The convex majorant of $ h(x,y) $ is expressed as
	\begin{align} \label{eq: UB xy2}
		h(x,y)\leq \tilde{h}(x,z) \triangleq xz \leq \mfrac{z^{(\kappa)}}{2x^{(\kappa)}}x^2 + \mfrac{x^{(\kappa)}}{2z^{(\kappa)}}z^2 \triangleq \tilde{h}^{(\kappa)}(x,z),
	\end{align}
	by imposing an SOC constraint: $ y^2\leq z$, where $ z > 0 $ is a new variable.
\end{lemma}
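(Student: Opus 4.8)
The plan is to establish the chain of inequalities in \eqref{eq: UB xy2} in two stages, first the substitution $y^2 \le z$ and then an arithmetic–geometric-mean (AM–GM) type bound applied to the bilinear term $xz$. First I would observe that, since $x>0$, the map $z \mapsto xz$ is increasing, so imposing the SOC constraint $y^2 \le z$ (which is convex in $(y,z)$, being of the form $\|(y,\cdot)\| \le \cdot$ after the standard rotated-cone reformulation $y^2 \le z \Leftrightarrow \bigl\|\bigl[\begin{smallmatrix} 2y \\ z-1 \end{smallmatrix}\bigr]\bigr\| \le z+1$) immediately yields $h(x,y) = xy^2 \le xz =: \tilde h(x,z)$. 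This gives the first inequality and records why the reformulation is tractable for the ICA framework.

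Next I would handle the second inequality, namely that the still-nonconvex bilinear term $xz$ is majorized by the separable convex quadratic $\tilde h^{(\kappa)}(x,z) = \frac{z^{(\kappa)}}{2x^{(\kappa)}}x^2 + \frac{x^{(\kappa)}}{2z^{(\kappa)}}z^2$. The key step is the weighted AM–GM inequality $ab \le \frac{1}{2}(\lambda a^2 + \lambda^{-1} b^2)$ valid for any $\lambda>0$; choosing $a=x$, $b=z$, and $\lambda = z^{(\kappa)}/x^{(\kappa)}$ (which is positive because the feasible point satisfies $x^{(\kappa)}>0$ and, through the trust region, $z^{(\kappa)}>0$) produces exactly $xz \le \tilde h^{(\kappa)}(x,z)$. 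Equivalently, one can note $\tilde h^{(\kappa)}(x,z) - xz = \frac{1}{2}\bigl(\sqrt{z^{(\kappa)}/x^{(\kappa)}}\,x - \sqrt{x^{(\kappa)}/z^{(\kappa)}}\,z\bigr)^2 \ge 0$, which simultaneously shows the bound holds and that equality is attained when $x/x^{(\kappa)} = z/z^{(\kappa)}$, in particular at the expansion point $(x,z)=(x^{(\kappa)},z^{(\kappa)})$. This last observation is what makes $\tilde h^{(\kappa)}$ a legitimate ICA surrogate: it is a convex upper bound that touches $\tilde h$ (hence $h$, along $y^2=z$) with matching value at the current iterate, so the usual monotone-convergence argument for inner approximation applies.

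The only mildly delicate point is bookkeeping rather than mathematics: one must ensure the newly introduced variable $z$ stays strictly positive along the iterations so that $\lambda = z^{(\kappa)}/x^{(\kappa)}$ is well-defined, and that replacing $h$ by $\tilde h^{(\kappa)}$ inside the surrogate of \eqref{eq: var. change omega convex - relax.} preserves feasibility of the original constraint (which it does, since $\tilde h^{(\kappa)} \ge \tilde h \ge h$ means any $(\mathbf{w},\mathbf{p},\boldsymbol{\alpha},z,\ldots)$ feasible for the tightened constraint is feasible for \eqref{eq: var. change omega1}). I do not expect a substantive obstacle here; the lemma is essentially a packaged application of AM–GM combined with an epigraph trick, and the proof is a two-line computation once the weight $\lambda$ is identified.
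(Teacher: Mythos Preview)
Your proposal is correct and follows essentially the same two-step route as the paper: first use $y^2\le z$ with $x>0$ to get $xy^2\le xz$, then majorize the bilinear term $xz$ by a separable convex quadratic. The only cosmetic difference is that the paper obtains the second inequality by linearizing the concave function $\sqrt{uv}$ at $(u,v)=\bigl((x^{(\kappa)})^2,(z^{(\kappa)})^2\bigr)$ and substituting $u=x^2$, $v=z^2$, which is exactly your weighted AM--GM bound with $\lambda=z^{(\kappa)}/x^{(\kappa)}$ written differently.
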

\begin{proof}
	Please see Appendix \ref{app: UB xy2}.
\end{proof}
{\hili By applying \textbf{Lemma} \textbf{\ref{lem: UB xy2}}, the convex upper bound of $\phi_{k}(\mathbf{w},\mathbf{p},\boldsymbol{\alpha})$ is
\begin{align} \label{eq: var. change lambda}
\phi_{k}(\mathbf{w},\mathbf{p},\boldsymbol{\alpha}) &\leq\sum_{k'\in\mathcal{K}\setminus k} |(\mathbf{h}_{1k}^\dl)^H \mathbf{w}_{1k'}|^2 + \sum_{\ell\in\mathcal{L}} p_{\ell}^2 |g_{\ell,1k}|^2  \nonumber + \sum_{j'\in\mathcal{K}}\tilde{h}^{(\kappa)}(\lambda_{kj'},\mu_{kj'}) 
+ \sigma_{1k}^2 
:=  \tilde{\phi}_{k}^{(\kappa)}(\mathbf{w},\mathbf{p},\boldsymbol{\lambda},\boldsymbol{\mu}), \nonumber
\end{align}
\color{black} where} $ \boldsymbol{\lambda}\triangleq[\lambda_{kj}]_{k,j\in\mathcal{K}} $ and $ \boldsymbol{\mu}\triangleq[\mu_{kj}]_{k,j\in\mathcal{K}} $ are  alternative variables, in which $ \lambda_{kj} $ and $ \mu_{kj} $ satisfy the following convex constraints:
\begin{subequations} \label{eq: var. change mu}
	\setlength{\jot}{0.05pt}
	\begin{align} 
	\lambda_{kj}&=1-\alpha_{kj},\; \forall k,j\in\mathcal{K}, \\
	|(\mathbf{h}_{1k}^{\dl})^H\mathbf{w}_{2j}|^2&\leq\mu_{kj},\; \; \forall k,j\in\mathcal{K}.
	\end{align}
\end{subequations}
In this regard, we iteratively replace \eqref{eq: var. change omega convex - relax.} by the convex constraint:
\begin{equation} \label{eq: var. change omega convex - relax. 1}
\tilde{\phi}_{k}^{(\kappa)}(\mathbf{w},\mathbf{p},\boldsymbol{\lambda},\boldsymbol{\mu}) \leq \omega_{1k} \tilde{\gamma}_{1k}^{(\kappa)}\bigl(\mathbf{w}\bigr),\; \forall k\in\mathcal{K},
\end{equation} 
which is the SOC representative.
 To address $ R_{2j}^{\dl}(\mathbf{w},\mathbf{p},\boldsymbol{\alpha})$ (DL users in zone 2), we  can  equivalently express  the SINR $ \gamma_{2j}^{\dl} $ of $ \DLUi{2j} $  as
\begin{align} \label{eq: DL SINR zone 2 - modified}
\gamma_{2j}^{\dl}(\mathbf{w},\mathbf{p},\boldsymbol{\alpha})  = \min\Biggl\{ \underset{ k\in\mathcal{K}}{\min}\Bigl\{\mfrac{|(\mathbf{h}_{1k}^\dl)^H \mathbf{w}_{2j}|^2}{\alpha_{kj}\psi_{j}^{k}(\mathbf{w},\mathbf{p})}\Bigr\}, 
 \mfrac{|(\mathbf{h}_{2j}^\dl)^H \mathbf{w}_{2j}|^2}{\varphi_{j}(\mathbf{w},\mathbf{p})}\Biggr\}.
\end{align}
Note that $ \underset{ k\in\mathcal{K}}{\max}\Bigl\{\mfrac{\alpha_{kj}|(\mathbf{h}_{1k}^\dl)^H \mathbf{w}_{2j}|^2}{\psi_{j}^{k}(\mathbf{w},\mathbf{p})}\Bigr\} = \underset{ k\in\mathcal{K}}{\min}\Bigl\{\mfrac{|(\mathbf{h}_{1k}^\dl)^H \mathbf{w}_{2j}|^2}{\alpha_{kj}\psi_{j}^{k}(\mathbf{w},\mathbf{p})}\Bigr\}$ at the optimality due to the inverse function of $ \alpha_{kj} $. We replace $\alpha_{kj}$ by $\alpha_{kj}+\varepsilon$ to avoid the numerical problem design when $\alpha_{kj}=0$, where $ \varepsilon $ is a given small number. In the same manner to \eqref{eq: DL rate concave minorant1}, the non-smoothness and non-concavity of $ R_{2j}^{\dl}(\mathbf{w},\mathbf{p},\boldsymbol{\alpha}) $ are tackled as \setlength{\jot}{0.05pt}
\begin{align} 
R_{2j}^{\dl}(\mathbf{w},\mathbf{p},\boldsymbol{\alpha}) & \geq \ln\Bigl(1+\mfrac{1}{\omega_{2j}}\Bigr)\geq \mathtt{A}(\omega_{2j}^{(\kappa)}) + \mathtt{B}(\omega_{2j}^{(\kappa)})\omega_{2j} := \ddot{R}_{2j}^{\dl, (\kappa)}, \label{eq: DL rate var. change 2 1}
\end{align}
by imposing the following  constraints
\begin{subequations} \label{eq: var. change omega 2 1}
	\setlength{\jot}{0.1pt}
	\begin{align}
	\mfrac{|(\mathbf{h}_{1k}^\dl)^H \mathbf{w}_{2j}|^2}{(\alpha_{kj}+\varepsilon)\psi_{j}^{k}(\mathbf{w},\mathbf{p})} & \geq \mfrac{1}{\omega_{2j}}, \label{eq: var. change omega 2 a 1}\\
	\mfrac{|(\mathbf{h}_{2j}^\dl)^H \mathbf{w}_{2j}|^2}{\varphi_{j}(\mathbf{w},\mathbf{p})} & \geq \mfrac{1}{\omega_{2j}}. \label{eq: var. change omega 2 b 1}
	\end{align}
\end{subequations}
As in \eqref{eq: var. change omega1}, the non-convex constraints \eqref{eq: var. change omega 2 a 1} and \eqref{eq: var. change omega 2 b 1} are  innerly convexified by
\begin{subequations} \label{eq: var. change omega convex 2 1}
	\setlength{\jot}{0.05pt}
	\begin{align}
	\psi_{j}^{k}(\mathbf{w},\mathbf{p}) & \leq \omega_{2j} \ddot{\gamma}_{2,kj}^{(\kappa)}\bigl(\mathbf{w},\boldsymbol{\alpha}\bigr),\; \forall k,j\in\mathcal{K}, \label{eq: var. change omega convex 2 a 1}\\
	\varphi_{j}(\mathbf{w},\mathbf{p}) & \leq \omega_{2j} \tilde{\gamma}_{2j}^{(\kappa)}\bigl(\mathbf{w}\bigr), \; \forall j\in\mathcal{K}\label{eq: var. change omega convex 2 b 1},
	\end{align}
\end{subequations}
over the trust regions:
\begin{subequations} \label{eq: positive effective channel DL 2,k 1}
	\setlength{\jot}{0.05pt}
	\begin{align}
	\ddot{\gamma}_{2,kj}^{(\kappa)}(\mathbf{w},\boldsymbol{\alpha}) \triangleq & \; \mfrac{2\Re\bigl\{\bigl((\mathbf{h}_{1k}^\dl)^H \mathbf{w}_{2j}^{(\kappa)}\bigr)^*\bigl((\mathbf{h}_{1k}^\dl)^H \mathbf{w}_{2j}\bigr)\bigr\} }{\alpha_{kj}^{(\kappa)}+\varepsilon}
	-\mfrac{|(\mathbf{h}_{1k}^\dl)^H \mathbf{w}_{2j}^{(\kappa)}|^2}{(\alpha_{kj}^{(\kappa)}+\varepsilon)^2}(\alpha_{kj}+\varepsilon) > 0, \; \forall k,j\in\mathcal{K}, \label{eq: positive effective channel DL 2,k a 1}\\
	\tilde{\gamma}_{2j}^{(\kappa)}(\mathbf{w}) \triangleq & \; 2\Re\{(\mathbf{h}_{2j}^\dl)^H\mathbf{w}_{2j}^{(\kappa)}\}\Re\{(\mathbf{h}_{2j}^\dl)^H \mathbf{w}_{2j}\}  - \bigl(\Re\{(\mathbf{h}_{2j}^\dl)^H \mathbf{w}_{2j}^{(\kappa)}\}\bigr)^2  > 0, \; \forall j\in\mathcal{K}. \label{eq: positive effective channel DL 2,k b 1}
	\end{align}
\end{subequations}
Next, at the feasible point $(\mathbf{w}^{(\kappa)}, \mathbf{p}^{(\kappa)},\boldsymbol{\beta}^{(\kappa)})$, the UL rate $ R_{\ell}^{\ul}\bigl(\mathbf{w}, \mathbf{p},\boldsymbol{\beta}\bigr) $ is globally lower bounded by
\begingroup
\allowdisplaybreaks
\begin{align} \label{eq: UL rate convex - relax.}
	R_{\ell}^{\ul}(\mathbf{w}, \mathbf{p},\boldsymbol{\beta})  \geq  \tilde{\mathtt{A}}\bigl(\gamma_{\ell}^{\ul}\bigl(\mathbf{w}^{(\kappa)}, \mathbf{p}^{(\kappa)},\boldsymbol{\beta}^{(\kappa)}\bigr)\bigr) - \Phi_{\ell}^{(\kappa)}(\mathbf{w}, \mathbf{p},\boldsymbol{\beta}) +  \mfrac{2\gamma_{\ell}^{\ul}\bigl(\mathbf{w}^{(\kappa)}, \mathbf{p}^{(\kappa)},\boldsymbol{\beta}^{(\kappa)}\bigr)}{p_{\ell}^{(\kappa)}}p_{\ell},
\end{align}
where 
{\small \setlength{\jot}{0.05pt}\begin{align}
&\tilde{\mathtt{A}}\bigl(\gamma_{\ell}^{\ul}(\mathbf{w}^{(\kappa)}, \mathbf{p}^{(\kappa)},\boldsymbol{\beta}^{(\kappa)})\bigr)\triangleq  \ln\bigl(1+\gamma_{\ell}^{\ul}(\mathbf{w}^{(\kappa)}, \mathbf{p}^{(\kappa)},\boldsymbol{\beta}^{(\kappa)})\bigr)-\gamma_{\ell}^{\ul}(\mathbf{w}^{(\kappa)}, \mathbf{p}^{(\kappa)},\boldsymbol{\beta}^{(\kappa)}), \nonumber \\
& \Phi_{\ell}^{(\kappa)}(\mathbf{w}, \mathbf{p},\boldsymbol{\beta}) \triangleq
p_{\ell}^2\Lambda_{\ell}     +\sum\nolimits_{m\in\mathcal{L}\setminus \ell}\beta_{\ell m}p_{m}^2\Lambda_{m}
+ \rho^2\sum\nolimits_{k\in\mathcal{K}} \bigl(\mathbf{w}_{k}\bigr)^H \mathbf{G}_{\SI}\boldsymbol{\Xi}_{\ell}^{(\kappa)}\mathbf{G}_{\SI}^H \mathbf{w}_{k}+\sigma_{\mathtt{U}}^2\tr(\boldsymbol{\Xi}_{\ell}^{(\kappa)}) \nonumber, \\
& \Lambda_{\ell}\triangleq(\mathbf{h}_{\ell}^{\ul})^H\boldsymbol{\Xi}_{\ell}^{(\kappa)}\mathbf{h}_{\ell}^{\ul}, \quad \boldsymbol{\Xi}_{\ell}^{(\kappa)} \triangleq (\boldsymbol{\Psi}_{\ell}^{(\kappa)})^{-1} - ((p_{\ell}^{(\kappa)})^2\mathbf{h}_{\ell}^{\ul}(\mathbf{h}_{\ell}^{\ul})^H+\boldsymbol{\Psi}_{\ell}^{(\kappa)})^{-1}, \nonumber \\
& \boldsymbol{\Psi}_{\ell}^{(\kappa)} \triangleq  \sum\nolimits_{m\in\mathcal{L}\setminus \ell}\beta_{\ell m}^{(\kappa)} \bigl(p_{m}^{(\kappa)}\bigr)^2 \mathbf{h}_{m}^{\ul}\bigl(\mathbf{h}_{m}^{\ul}\bigr)^H    + \rho^2\sum\nolimits_{k\in\mathcal{K}} \mathbf{G}_{\SI}^H \mathbf{w}_{k}^{(\kappa)} \bigl(\mathbf{w}_{k}^{(\kappa)}\bigr)^H\mathbf{G}_{\SI}+\sigma_{\mathtt{U}}^2\mathbf{I}. \nonumber
\end{align}}
\endgroup
The right-hand side (RHS) of \eqref{eq: UL rate convex - relax.} is still non-concave, resulting from the fact that $ \Phi_{\ell}^{(\kappa)}(\mathbf{w}, \mathbf{p},\boldsymbol{\beta}) $ is a non-convex function. {\hili By applying \textbf{Lemma} \textbf{\ref{lem: UB xy2}} to the second term of $ \Phi_{\ell}^{(\kappa)}(\mathbf{w}, \mathbf{p},\boldsymbol{\beta}) $, it is true that
\begingroup \setlength{\jot}{0.05pt}
\allowdisplaybreaks
\begin{align}
\Phi_{\ell}^{(\kappa)}(\mathbf{w}, \mathbf{p},\boldsymbol{\beta}) 
& \leq p_{\ell}^2\Lambda_{\ell} + \rho^2\sum\nolimits_{k\in\mathcal{K}} \bigl(\mathbf{w}_{k}\bigr)^H \mathbf{G}_{\SI}\boldsymbol{\Xi}_{\ell}^{(\kappa)}\mathbf{G}_{\SI}^H \mathbf{w}_{k} 
+\sum\nolimits_{m\in\mathcal{L}\setminus\ell}\Lambda_{m}\tilde{h}^{(\kappa)}(\beta_{\ell m},\nu_{ m})
+\sigma_{\mathtt{U}}^2\tr(\boldsymbol{\Xi}_{\ell}^{(\kappa)}) \nonumber \\ & 
:= \tilde{\Phi}_{\ell}^{(\kappa)}(\mathbf{w}, \mathbf{p},\boldsymbol{\beta},\boldsymbol{\nu}),
\end{align}
\endgroup
\color{black} with }the additional SOC and linear constraints: \setlength{\jot}{0.05pt}
\begin{align}  \label{eq: var. change nu}
p_{m}^2\leq\nu_{m}\leq P_{m}^{\text{max}},\; \;\forall m\in\mathcal{L},
\end{align}
where $ \boldsymbol{\nu}\triangleq[\nu_{m}]_{ m\in\mathcal{L}} $ are new variables. The concave quadratic minorant of $R_{\ell}^{\ul}(\mathbf{w}, \mathbf{p},\boldsymbol{\beta})$ at  iteration $\kappa+1$ is given by
\begin{align} \label{eq: UL rate convex - relax. 1}
R_{\ell}^{\ul}(\mathbf{w}, \mathbf{p},\boldsymbol{\beta})  \geq \tilde{\mathtt{A}}\bigl(\gamma_{\ell}^{\ul}\bigl(\mathbf{w}^{(\kappa)}, \mathbf{p}^{(\kappa)},\boldsymbol{\beta}^{(\kappa)}\bigr)\bigr)  +  \mfrac{2\gamma_{\ell}^{\ul}\bigl(\mathbf{w}^{(\kappa)}, \mathbf{p}^{(\kappa)},\boldsymbol{\beta}^{(\kappa)}\bigr)}{p_{\ell}^{(\kappa)}}p_{\ell}- \tilde{\Phi}_{\ell}^{(\kappa)}(\mathbf{w}, \mathbf{p},\boldsymbol{\beta},\boldsymbol{\nu})     := \ddot{R}_{\ell}^{\ul,(\kappa)}.
\end{align}
Clearly, the objective function and constraints \eqref{eq: prob. general form tensor :: f}, \eqref{eq: prob. general form c} are innerly convexified by replacing non-concave functions $R_{ik}^{\dl}(\mathbf{w}, \mathbf{p},\boldsymbol{\alpha})$ and $R_{\ell}^{\ul}(\mathbf{w}, \mathbf{p},\boldsymbol{\beta})$ with concave functions $\ddot{R}_{ik}^{\dl,(\kappa)}$ and $\ddot{R}_{\ell}^{\ul,(\kappa)}$, respectively.

{\textit{Convexity of constraint \eqref{eq: prob. general form tensor :: m}:}}  We now are in position to tackle  constraint \eqref{eq: prob. general form tensor :: m}. It can be observed that the absolute function $\bigl|\sum\nolimits_{m\in\mathcal{L}}\beta_{\ell m} - \sum\nolimits_{m\in\mathcal{L}}\beta_{\ell' m}\bigr|$  is quasi-convex, leading to the non-convexity of constraint \eqref{eq: prob. general form tensor :: m}. To overcome this issue, we replace the absolute function with the maximum form, i.e., $ |s_{\ell \ell'}|=\max(s_{\ell \ell'},-s_{\ell \ell'}), \;\ell\neq\ell',\; \forall\ell, \ell'\in\mathcal{L} $, where $ s_{\ell \ell'}\triangleq\sum_{m=1}^{L}\beta_{\ell m}-\sum_{m=1}^{L}\beta_{\ell' m} $, and thus, constraint \eqref{eq: prob. general form tensor :: m} becomes
\begin{align}\label{eq:43from9k}
	\max(s_{\ell \ell'},-s_{\ell \ell'}) \geq 1, \;\ell\neq\ell',\; \forall\ell, \ell'\in\mathcal{L}.
\end{align}
We then apply a smooth approximation via the log-sum-exp (LSE) function \cite{Stephen}, to handle \eqref{eq:43from9k} as 
\begin{align}
	\max(s_{\ell \ell'},-s_{\ell \ell'}) \geq \mfrac{1}{\Omega}\ln\bigl(\exp(\Omega s_{\ell \ell'})+\exp(-\Omega s_{\ell \ell'})\bigr)-\mfrac{1}{\Omega}\ln(2)
	 := f_{\mathrm{LSE}}(s_{\ell \ell'}),
\end{align}
where $ \Omega $ is a predefined large number. {\hili This naturally leads to a direct application of the ICA method to approximate $ f_{\mathrm{LSE}}(s_{\ell \ell'}) $ around the point $s_{\ell \ell'}^{(\kappa)}$ as
\begin{align}
	f_{\mathrm{LSE}}(s_{\ell \ell'}) & \geq f_{\mathrm{LSE}}(s_{\ell \ell'}^{(\kappa)})
	+\mfrac{\partial f_{\mathrm{LSE}}}{\partial s_{\ell \ell'} }\Big|_{s_{\ell \ell'}=s_{\ell \ell'}^{(\kappa)}}\bigl(s_{\ell \ell'}-s_{\ell \ell'}^{(\kappa)}\bigr) 
	= f_{\mathrm{LSE}}(s_{\ell \ell'}^{(\kappa)})
	+ \tanh\bigl(\Omega s_{\ell \ell'}^{(\kappa)}\bigr) 
	:= f_{\mathrm{LSE}}^{(\kappa)}(s_{\ell\ell'}), \nonumber
\end{align}
where $ s_{\ell\ell'}^{(\kappa)}\triangleq\sum_{m=1}^{L}\beta_{\ell m}^{(\kappa)} - \sum_{m=1}^{L}\beta_{\ell' m}^{(\kappa)}$ and $ \tanh(x) $ is the hyperbolic tangent function of $ x $.} As a result, we iteratively replace \eqref{eq:43from9k} with the linear constraint:
\begin{align} \label{eq: abs function smooth approx.}
	f_{\mathrm{LSE}}^{(\kappa)}(s_{\ell\ell'}) \geq 1, \;\ell\neq\ell',\; \forall\ell, \ell'\in\mathcal{L}.
\end{align}

From the discussions above, the successive convex program  to solve \eqref{eq: prob. general form - relax.} at  iteration $ \kappa+1 $ is given~as
\vspace{-20pt}
\begingroup
\allowdisplaybreaks
\begin{subequations} \label{eq: prob. general form - relax. 1}
	\setlength{\jot}{0.05pt}
	\begin{align}
		\underset{\substack{\mathbf{w}, \mathbf{p},\boldsymbol{\alpha},\boldsymbol{\beta}, \\ \boldsymbol{\omega},\boldsymbol{\lambda},\boldsymbol{\mu},\boldsymbol{\nu}}}{\max} & \quad \ddot{R}_{\Sigma}^{(\kappa+1)}\triangleq\sum\nolimits_{i\in\mathcal{Z}}\sum\nolimits_{k\in\mathcal{K}}\ddot{R}_{ik}^{\dl,(\kappa)}+\sum\nolimits_{\ell\in\mathcal{L}}\ddot{R}_{\ell}^{\ul,(\kappa)}\label{eq: prob. general form - relax. 1 a} \\
		\st & \quad \eqref{eq: prob. general form tensor :: b}, \eqref{eq: prob. general form tensor :: k}, \eqref{eq: prob. general form tensor :: l}, \eqref{eq: prob. general form e}, \eqref{eq: prob. general form - relax. c},\eqref{eq: prob. general form - relax. d},  
		\eqref{eq: positive effective channel DL 1,k 1}, \eqref{eq: var. change mu}, \eqref{eq: var. change omega convex - relax. 1},  \eqref{eq: var. change omega convex 2 1}, \eqref{eq: positive effective channel DL 2,k 1},  \eqref{eq: var. change nu}, \eqref{eq: abs function smooth approx.}, \qquad \label{eq: prob. general form - relax. 1 b} \quad\\
		& \quad 
		\ddot{R}_{ik}^{\dl,(\kappa)} \geq \bar{R}_{ik}^{\dl}, \; \forall i\in\mathcal{Z},\;  k \in \mathcal{K}, \label{eq: prob. general form - relax. 1 c} \\
		& \quad 
		\ddot{R}_{\ell}^{\ul,(\kappa)} \geq \bar{R}_{\ell}^{\ul}, \; \forall \ell \in \mathcal{L}, \label{eq: prob. general form - relax. 1 d} 
	\end{align}							
\end{subequations}
\endgroup
where $ \boldsymbol{\omega}\triangleq\bigl[\omega_{ik}\bigr]_{i\in\mathcal{Z},\;k\in\mathcal{K}} $. We have numerically observed that the ICA-CR based algorithm for solving \eqref{eq: prob. general form - relax.}  yields nearly binary values at convergence, but some relaxed variables are non-binary. It simply means that the solution obtained by \eqref{eq: prob. general form - relax. 1} can be infeasible to problem \eqref{eq: prob. general form}. To make $\alpha_{kj}$ and $\beta_{\ell m}$ exact binary, we further introduce the round function at iteration $ \kappa $ as
	\begin{align}\label{eq: rounding binary var.}
	\alpha_{kj}^{*}  = \Bigl\lfloor\alpha_{kj}^{(\kappa)}+\mfrac{1}{2}\Bigr\rfloor,\; \forall k,j\in\mathcal{K}, \ \text{and}\	\beta_{\ell m}^{*}  = \Bigl\lfloor\beta_{\ell m}^{(\kappa)}+\mfrac{1}{2}\Bigr\rfloor,\; \forall \ell,m\in\mathcal{L}.
	\end{align}
 Upon defining $ \mathbf{X}\triangleq \bigl(\mathbf{w},\mathbf{p},\boldsymbol{\alpha}, \boldsymbol{\beta}, \boldsymbol{\omega},\boldsymbol{\lambda}, \boldsymbol{\mu}, \boldsymbol{\nu}\bigr) $, the iterative algorithm based on the ICA-CR method for solving \eqref{eq: prob. general form} is briefly described in \textbf{Algorithm} \textbf{\ref{alg: Continuous relaxation problem}}, where the procedure for finding an initial feasible point will be detailed in Section \ref{sec: Init, Converg, Complexity}. We note that the actual solutions of $(\mathbf{w},\mathbf{p})$ are straightforward to find by a post-processing method, i.e., repeating Steps 4-6 in \textbf{Algorithm} \textbf{\ref{alg: Continuous relaxation problem}} for  the fixed values of ($\boldsymbol{\alpha}, \boldsymbol{\beta}$) obtained at Step 8 until convergence.

\begin{algorithm}[t]
	\begin{algorithmic}[1]
		\fontsize{8.5}{9.5}\selectfont
		\protect\caption{\fontsize{9}{10}\selectfont Proposed ICA-CR Based Design for the Relaxed SEM Problem \eqref{eq: prob. general form - relax.}}
		
		\label{alg: Continuous relaxation problem}

		\STATE \textbf{Initialization:} Set $ R_{\Sigma} := -\infty $ and $ (\mathbf{w}^{*},\mathbf{p}^{*},\boldsymbol{\alpha}^{*},\boldsymbol{\beta}^{*}):=\mathbf{0} $.

		\STATE Set $\kappa:=0$ and solve \eqref{eq: prob. CRP initialization} and \eqref{eq: prob. CRP initialization 1} to generate an initial feasible point $ \mathbf{X}^{(0)} $.
		
		\REPEAT
		\STATE Solve \eqref{eq: prob. general form - relax. 1} to obtain  $ \mathbf{X}^{\star} $ and  $\ddot{R}_{\Sigma}^{(\kappa+1)}$.
		
		\STATE Update $ \mathbf{X}^{(\kappa+1)} :=\mathbf{X}^{\star} $.
		\STATE Set $ \kappa := \kappa + 1 $.
		\UNTIL Convergence
		
		\STATE\ Update $ (\mathbf{w}^{*},\mathbf{p}^{*}) := (\mathbf{w}^{(\kappa)},\mathbf{p}^{(\kappa)})$ and ($ \boldsymbol{\alpha}^{*},\boldsymbol{\beta}^{*} $) as in \eqref{eq: rounding binary var.}.
		\STATE\ Use $ (\mathbf{w}^{*},\mathbf{p}^{*},\boldsymbol{\alpha}^{*},\boldsymbol{\beta}^{*}) $ to compute $ R_{\Sigma} $  in \eqref{eq: prob. general form a}.

		\STATE {\textbf{Output:}  $ R_{\Sigma} $ and the optimal solution $ ( \mathbf{w}^{*},\mathbf{p}^{*},\boldsymbol{\alpha}^{*},\boldsymbol{\beta}^{*}) $.}
	\end{algorithmic} 
\end{algorithm}

\vspace{-8pt}
\subsection{ICA-CR based Design with Penalty Function}
\vspace{-8pt}
We now present the  ICA-CR based design with the penalty function (i.e., ICA-CR-PF) is to further process the uncertain values of  binary variables, which helps speed up the convergence. We can see that $\alpha_n^2=\alpha_n$ for any $\alpha_n\in\{0,1\}$. On the other hand, it is also true that $\alpha_n^2\leq\alpha_n$ for $\alpha_n\in[0,1]$. The idea of the PF method is to allow   the uncertainties of  binary variables to be penalized, which enforce $\alpha_n^2=\alpha_n$ quickly. In this regard, we first introduce the following theorem to derive the property of the PF method.
\begin{theorem} \label{thm: relax. prob with pen.}
	Suppose that $ (\mathbf{x},\boldsymbol{\alpha}) $ is a 2-tuple variable of the following mixed-integer problem:
	\begin{subequations} \label{eq: prob. mixed-integer theorem}
		\setlength{\jot}{0.05pt}
		\begin{align}
			\underset{\mathbf{x}, \boldsymbol{\alpha}}{\max} & \quad f_0(\mathbf{x}, \boldsymbol{\alpha})\label{eq: prob. mixed-integer theorem a} \\
			\st & \quad \mathbf{x}\in\mathcal{X}, \label{eq: prob. mixed-integer theorem b} \\
			& \quad 
			\boldsymbol{\alpha}\in\{0,1\}^{N_{\alpha}\times 1}, \label{eq: prob. mixed-integer theorem c}
		\end{align}					
	\end{subequations}
	where the objective function $ f_0(\mathbf{x}, \boldsymbol{\alpha}) $ is closed and bounded. The optimality of \eqref{eq: prob. mixed-integer theorem} is guaranteed by using the following CR form:
	\begin{subequations} \label{eq: prob. relax theorem}
		\setlength{\jot}{0.05pt}
		\begin{align}
			\underset{\mathbf{x}, \boldsymbol{\alpha}}{\max} & \quad f_0(\mathbf{x}, \boldsymbol{\alpha})+\sum\nolimits_{n\in\mathcal{N}} f_p(\alpha_n) \label{eq: prob. relax theorem a} \\
			\st & \quad \mathbf{x}\in\mathcal{X}, \label{eq: prob. relax theorem b} \\
			& \quad 
			\boldsymbol{0} \preceq \boldsymbol{\alpha} \preceq \boldsymbol{1}, \label{eq: prob. relax theorem c} 
		\end{align}					
	\end{subequations}
	where $ f_p(\alpha_n)\triangleq\varrho_n(\alpha_n^2-\alpha_n) $ is the PF, with $ \varrho_n>0 $ as the constant penalty parameter w.r.t.  $ \alpha_n $, and $ \mathcal{N}\triangleq\{1,2,\dots,N_{\alpha}\} $. If $ \mathcal{X} $ is a compact convex set, the feasible region of \eqref{eq: prob. relax theorem} is a compact convex set.
\end{theorem}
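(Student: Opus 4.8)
The plan is to split \textbf{Theorem}~\textbf{\ref{thm: relax. prob with pen.}} into two independent assertions and attack them separately: (a) the feasible set of the penalized continuous relaxation \eqref{eq: prob. relax theorem} is compact and convex whenever $\mathcal{X}$ is, and (b) \eqref{eq: prob. relax theorem} and the mixed-integer program \eqref{eq: prob. mixed-integer theorem} are equivalent in optimal value and optimal solutions. Assertion (a) I would dispose of in one line: the feasible set of \eqref{eq: prob. relax theorem} is the Cartesian product $\mathcal{X}\times[0,1]^{N_{\alpha}}$, the box $[0,1]^{N_{\alpha}}$ is compact and convex, and a finite product of compact convex sets is compact and convex. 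The substance lies in (b), and the key elementary fact is the rewriting $f_p(\alpha_n)=\varrho_n(\alpha_n^2-\alpha_n)=-\varrho_n\,\alpha_n(1-\alpha_n)$: since $\varrho_n>0$ and $\alpha_n(1-\alpha_n)\ge 0$ on $[0,1]$ with equality precisely at $\alpha_n\in\{0,1\}$, one has $f_p(\alpha_n)\le 0$ on $[0,1]$ with equality iff $\alpha_n$ is binary, hence $\sum_{n\in\mathcal{N}}f_p(\alpha_n)\le 0$ with equality iff $\boldsymbol{\alpha}$ is binary.

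From this sign property I would first derive a ``sandwich'' on optimal values. Let $p^{\star}$, $q^{\star}$ and $r^{\star}$ denote the optimal values of \eqref{eq: prob. mixed-integer theorem}, of \eqref{eq: prob. relax theorem}, and of \eqref{eq: prob. relax theorem} with the penalty deleted. Any $(\mathbf{x},\boldsymbol{\alpha})$ feasible for \eqref{eq: prob. mixed-integer theorem} is feasible for \eqref{eq: prob. relax theorem} and, the penalty vanishing at it, carries the same objective value, so $q^{\star}\ge p^{\star}$; and since the penalty is nonpositive over $\mathcal{X}\times[0,1]^{N_{\alpha}}$, $q^{\star}\le r^{\star}$. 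Hence $p^{\star}\le q^{\star}\le r^{\star}$ for every $\boldsymbol{\varrho}\succ\mathbf{0}$, and in particular any binary optimizer returned by \eqref{eq: prob. relax theorem} is automatically optimal for \eqref{eq: prob. mixed-integer theorem}. To obtain the matching bound $q^{\star}\le p^{\star}$ I would run an exact-penalty argument: at an optimizer $(\mathbf{x}^{\star},\boldsymbol{\alpha}^{\star})$ of \eqref{eq: prob. relax theorem} with some $\alpha_n^{\star}=t\in(0,1)$, compare the objective against the two admissible points obtained by replacing the $n$-th coordinate by $0$ and by $1$; using a Lipschitz-type bound $|f_0(\dots,s,\dots)-f_0(\dots,s',\dots)|\le M_n|s-s'|$ for the variation of $f_0$ in its $n$-th entry over the compact feasible set, optimality yields $\varrho_n(1-t)\le M_n$ and $\varrho_n t\le M_n$, whose sum reads $\varrho_n\le 2M_n$ --- impossible once $\varrho_n>2M_n$. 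Therefore for $\varrho_n$ large enough every optimizer of \eqref{eq: prob. relax theorem} has binary $\boldsymbol{\alpha}$, the penalty vanishes there, $q^{\star}=f_0(\mathbf{x}^{\star},\boldsymbol{\alpha}^{\star})\le p^{\star}$, and combining with $q^{\star}\ge p^{\star}$ gives $q^{\star}=p^{\star}$ with the optimizer feasible, hence optimal, for \eqref{eq: prob. mixed-integer theorem}.

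The step I expect to be the real obstacle is precisely this last one, because the stated hypothesis that $f_0$ is merely closed and bounded does not by itself supply the constants $M_n$ needed for the finite-$\boldsymbol{\varrho}$ argument. I would resolve it in one of two ways. Either I strengthen the hypothesis to (local) Lipschitz continuity of $f_0$ in $\boldsymbol{\alpha}$ on the compact feasible set --- which costs nothing in this paper, since in \eqref{eq: prob. general form - relax.} the penalized objective is assembled from smooth rate functions --- in which case the equivalence is exact for all $\varrho_n$ exceeding the explicit threshold $2M_n$. Or, keeping only (upper semi)continuity, I fall back on the asymptotic version: $q^{\star}(\boldsymbol{\varrho})$ is nonincreasing in $\boldsymbol{\varrho}$ (the penalty being nonpositive and becoming more negative) and bounded below by $p^{\star}$, hence convergent; taking $\boldsymbol{\varrho}\to\infty$, extracting a convergent subsequence of optimizers by compactness of $\mathcal{X}\times[0,1]^{N_{\alpha}}$, and noting that boundedness of $f_0$ forces each penalty term, and therefore each $\alpha_n(1-\alpha_n)$, to vanish in the limit, the limit point is binary and feasible for \eqref{eq: prob. mixed-integer theorem}, which forces $q^{\star}(\boldsymbol{\varrho})\downarrow p^{\star}$ and makes every limit point of optimizers optimal for \eqref{eq: prob. mixed-integer theorem}. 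Either route yields the conclusion that solving \eqref{eq: prob. relax theorem} recovers the optimum of \eqref{eq: prob. mixed-integer theorem}.
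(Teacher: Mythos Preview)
Your argument is correct, but it follows a different route from the paper's Appendix~\ref{app: relax. prob with pen.}. For the exact-penalty part you compare the optimizer against its two binary neighbours in the $n$-th coordinate and invoke a Lipschitz bound $M_n$ on $f_0$, which yields the clean threshold $\varrho_n>2M_n$; you then honestly flag that ``closed and bounded'' alone does not deliver $M_n$, and offer the asymptotic $\boldsymbol{\varrho}\to\infty$ variant as a fallback. The paper instead stays with the boundedness hypothesis: it carves out an $\epsilon$-interior region $\bar{\mathcal{F}}=\{\epsilon\le\alpha_n\le 1-\epsilon,\ \forall n\}$, observes that on $\bar{\mathcal{F}}$ the total penalty is at most $N_\alpha\varrho\,\epsilon(\epsilon-1)<0$, and chooses $\varrho$ large enough that the penalized supremum over $\bar{\mathcal{F}}$ drops below $\inf_{\mathcal{F}}f_0$, producing the explicit bound $\varrho\ge\frac{1}{N_\alpha\epsilon(1-\epsilon)}\bigl(\sup_{\mathcal{F}_R}f_0-\inf_{\mathcal{F}}f_0\bigr)$. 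What this buys is a finite threshold computable from the \emph{range} of $f_0$ rather than its Lipschitz constants, at the price that the optimizer is only guaranteed to lie in $\mathcal{F}_R\setminus\bar{\mathcal{F}}\cong\mathcal{F}$ (i.e., $\epsilon$-close to binary rather than exactly binary). Your Lipschitz route gives exact binariness at a finite $\varrho_n$ but under a stronger regularity assumption; your asymptotic route and the paper's $\epsilon$-argument are essentially two sides of the same approximation, the paper making the dependence on $\epsilon$ explicit. The compactness/convexity part is handled identically in both, yours just more tersely.
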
 
\begin{proof}
	Please see Appendix \ref{app: relax. prob with pen.}.
\end{proof}

\begin{remark} \label{rem: pen. par. selection}
	The values of $ \varrho_n>0 $ can be theoretically selected as in \eqref{eq: weight of pen. func.} of Appendix \ref{app: relax. prob with pen.} to guarantee an optimal solution. In implementation, $ \varrho_n $ should not be too large to adapt to the error tolerance of the binary variables, such that the corresponding values of $\sum\nolimits_{n\in\mathcal{N}} f_p(\alpha_n)$ can be comparable to the actual objective value of $f_0(\mathbf{x}, \boldsymbol{\alpha})$. In addition, a small value of $\varrho_n$ yields to a significant gap between the penalty and objective functions, leading to a slow convergence. Hence, $ \varrho_n $ must be adaptively selected (e.g., w.r.t. the iteration index of the  iterative algorithm) to speed up the convergence. 
\end{remark}

For given binary variables $ \boldsymbol{\alpha} $ and $ \boldsymbol{\beta} $, the convex program \eqref{eq: prob. general form - relax. 1} can be reformulated  as 
\begin{subequations} \label{eq: prob. general form - relax. pen.}
	\setlength{\jot}{0.05pt}
	\begin{align}
		\underset{\substack{\mathbf{w}, \mathbf{p},\boldsymbol{\alpha},\boldsymbol{\beta}, \\ \boldsymbol{\omega},\boldsymbol{\lambda},\boldsymbol{\mu},\boldsymbol{\nu}}}{\max} & \quad \ddot{R}_{\Sigma}^{(\kappa+1)} \label{eq: prob. general form - relax. pen. a} \\
		\st & \quad \eqref{eq: prob. general form tensor :: b}, \eqref{eq: prob. general form tensor :: k}, \eqref{eq: prob. general form tensor :: l}, \eqref{eq: prob. general form e},  \eqref{eq: positive effective channel DL 1,k 1}, 
		\eqref{eq: var. change mu}, \eqref{eq: var. change omega convex - relax. 1},  \eqref{eq: var. change omega convex 2 1}, \eqref{eq: positive effective channel DL 2,k 1},  \eqref{eq: var. change nu}, \eqref{eq: abs function smooth approx.}, \eqref{eq: prob. general form - relax. 1 c}, \eqref{eq: prob. general form - relax. 1 d}, \label{eq: prob. general form - relax. pen. b} \\
		& \quad 
		\eqref{eq: prob. general form tensor :: j},\eqref{eq: prob. general form d} \label{eq: prob. general form - relax. pen. c},
	\end{align}							
\end{subequations}
where  constraints \eqref{eq: prob. general form - relax. pen. b} and \eqref{eq: prob. general form - relax. pen. c} are specified by  sets \eqref{eq: prob. mixed-integer theorem b} and \eqref{eq: prob. mixed-integer theorem c}, respectively. By  \textbf{Theorem}  \textbf{\ref{thm: relax. prob with pen.}}, the CR problem of \eqref{eq: prob. general form - relax. pen.}  with  the PF can be expressed as
\begin{subequations} \label{eq: prob. general form - relax. pen. 1}
	\setlength{\jot}{0.05pt}
	\begin{align}
		\underset{\substack{\mathbf{w}, \mathbf{p},\boldsymbol{\alpha},\boldsymbol{\beta}, \\ \boldsymbol{\omega},\boldsymbol{\lambda},\boldsymbol{\mu},\boldsymbol{\nu}}}{\max} & \quad \ddot{R}_{\Sigma}^{(\kappa+1)}+\sum\nolimits_{k\in\mathcal{K}}\sum\nolimits_{j\in\mathcal{K}}f_p^{\dl}(\alpha_{kj})+\sum\nolimits_{\ell\in\mathcal{L}}\sum\nolimits_{m\in\mathcal{L}}f_p^{\ul}(\beta_{\ell m}) \qquad\; \label{eq: prob. general form - relax. pen. 1 a} \\
		\st & \quad \eqref{eq: prob. general form - relax. pen. b}, \label{eq: prob. general form - relax. pen. 1 b} \\
		& \quad \eqref{eq: prob. general form - relax. c},\eqref{eq: prob. general form - relax. d}, \label{eq: prob. general form - relax. pen. 1 c}
	\end{align}							
\end{subequations}
where $ f_p^{\dl}(\alpha_{kj})\triangleq\varrho_{kj}^{\dl}(\alpha_{kj}^2-\alpha_{kj}) $ and $ f_p^{\ul}(\beta_{\ell m})\triangleq\varrho_{\ell m}^{\ul}(\beta_{\ell m}^2-\beta_{\ell m}) $, with $ \varrho_{kj}^{\dl}>0 $ and $ \varrho_{\ell m}^{\ul}>0 $ being the constant penalty parameters. 

{\hili
\begin{remark} In \eqref{eq: prob. general form - relax. pen. 1},  the penalty terms $(\alpha_{kj}^2-\alpha_{kj})$ and $(\beta_{\ell m}^2-\beta_{\ell m})$ are always non-positive, and thus,  are very useful to assess the degree of satisfaction of  binary
constraints \eqref{eq: prob. general form tensor :: j} and \eqref{eq: prob. general form d}. Moreover,  the positive values of $ \varrho_{kj}^{\dl} $ and $ \varrho_{\ell m}^{\ul} $ are of crucial  importance to recover binary values for $(\boldsymbol{\alpha},\boldsymbol{\beta})$ through the SEM problem. With a proper selection of penalty parameters to force the PF values to zeros, problem \eqref{eq: prob. general form - relax. pen. 1} is equivalent to problem \eqref{eq: prob. general form - relax. pen.} in the sense that they share the same objective value.  
\end{remark}
}

It is clear that the quadratic PF $\sum_{k\in\mathcal{K}}\sum_{j\in\mathcal{K}}f_p^{\dl}(\alpha_{kj})+\sum_{\ell\in\mathcal{L}}\sum_{m\in\mathcal{L}}f_p^{\ul}(\beta_{\ell m})$ provides strict convexity for the objective  \eqref{eq: prob. general form - relax. pen. 1 a} w.r.t. $(\boldsymbol{\alpha},\boldsymbol{\beta})$. In other words, problem \eqref{eq: prob. general form - relax. pen. 1} is always solvable. Here we apply the ICA method to iteratively approximate $f_p^{\dl}(\alpha_{kj})$ and $f_p^{\ul}(\beta_{\ell m})$ at the feasible points $\alpha_{kj}^{(\kappa)}$ and $\beta_{\ell m}^{(\kappa)}$, respectively, as
\begingroup
\allowdisplaybreaks
\begin{subequations}
	\setlength{\jot}{0.05pt}
	\begin{align}
	f_p^{\dl}(\alpha_{kj}) & \geq \varrho_{kj}^{\dl}\bigl((2\alpha_{kj}^{(\kappa)}-1)\alpha_{kj}-(\alpha_{kj}^{(\kappa)})^2\bigr) := \tilde{f}_p^{\dl,(\kappa)}(\alpha_{kj}), \\
	f_p^{\ul}(\beta_{\ell m}) & \geq \varrho_{\ell m}^{\ul}\bigl((2\beta_{\ell m}^{(\kappa)}-1)\beta_{\ell m}-(\beta_{\ell m}^{(\kappa)})^2\bigr)  := \tilde{f}_p^{\ul,(\kappa)}(\beta_{\ell m}).
	\end{align}
\end{subequations}\endgroup

{\color{black}
	\begin{algorithm}[t]
		\begin{algorithmic}[1]
			\fontsize{8.5}{9.5}\selectfont
			\protect\caption{\fontsize{9}{10}\selectfont Proposed ICA-CR Based Design with Penalty Function  for the Relaxed SEM Problem \eqref{eq: prob. general form - relax.}}

			\label{alg: Continuous relaxation problem - pen}
			
			\STATE \textbf{Initialization:} Set $ R_{\Sigma} := -\infty $ and $ \bigl(\mathbf{w}^{*},\mathbf{p}^{*},\boldsymbol{\alpha}^{*},\boldsymbol{\beta}^{*}\bigr):=\mathbf{0} $.

			\STATE Set $\kappa:=0$ and generate an initial feasible point $ \mathbf{X}^{(0)} $.
			
			\REPEAT
			\STATE Set $ \varrho_{kj}^{\dl} = \varrho_{\ell m}^{\ul}=\varrho,\;\forall k,j\in\mathcal{K},\; \forall \ell,m\in\mathcal{L} $. 
			\STATE Solve \eqref{eq: prob. general form - relax. pen. 2} to obtain  $ \mathbf{X}^{\star} $ and  $\mathring{R}_{\Sigma}^{(\kappa+1)}$.
			
			\STATE Update $ \mathbf{X}^{(\kappa+1)} :=\mathbf{X}^{\star} $.
			\STATE Set $ \kappa := \kappa + 1 $.
			\UNTIL Convergence
			
			\STATE\ Update $ \bigl(\mathbf{w}^{*},\mathbf{p}^{*}\bigr) := \bigl(\mathbf{w}^{(\kappa)},\mathbf{p}^{(\kappa)}\bigr)$ and ($ \boldsymbol{\alpha}^{*},\boldsymbol{\beta}^{*} $) as in \eqref{eq: rounding binary var.}.
			\STATE\ Use $ \bigl(\mathbf{w}^{*},\mathbf{p}^{*},\boldsymbol{\alpha}^{*},\boldsymbol{\beta}^{*}\bigr) $ to compute $ R_{\Sigma} $ as in \eqref{eq: prob. general form a}.

			\STATE {\textbf{Output:}  $ R_{\Sigma} $ and the optimal solution $ \bigl( \mathbf{w}^{*},\mathbf{p}^{*},\boldsymbol{\alpha}^{*},\boldsymbol{\beta}^{*}\bigr) $.}
		\end{algorithmic} 
	\end{algorithm}
}

At iteration $ \kappa+1 $ of the proposed algorithm, the successive convex program follows as
\begin{subequations} \label{eq: prob. general form - relax. pen. 2}
	\setlength{\jot}{0.05pt}
	\begin{align}
		\underset{\substack{\mathbf{w}, \mathbf{p},\boldsymbol{\alpha},\boldsymbol{\beta}, \\ \boldsymbol{\omega},\boldsymbol{\lambda},\boldsymbol{\mu},\boldsymbol{\nu}}}{\max} & \quad \mathring{R}_{\Sigma}^{(\kappa+1)}\triangleq \ddot{R}_{\Sigma}^{(\kappa+1)}+\tilde{f}_p^{(\kappa)} \label{eq: prob. general form - relax. pen. 2 a} \\
		\st & \quad \eqref{eq: prob. general form - relax. c}, \eqref{eq: prob. general form - relax. d}, \eqref{eq: prob. general form - relax. pen. b}, \label{eq: prob. general form - relax. pen. 2 b}
	\end{align}							
\end{subequations}
where $ \tilde{f}_p^{(\kappa)} \triangleq \sum_{k\in\mathcal{K}}\sum_{j\in\mathcal{K}}\tilde{f}_p^{\dl,(\kappa)}(\alpha_{kj})+\sum_{\ell\in\mathcal{L}}\sum_{m\in\mathcal{L}}\tilde{f}_p^{\ul,(\kappa)}(\beta_{\ell m}) $.
To summarize,  \textbf{Algorithm} \textbf{\ref{alg: Continuous relaxation problem - pen}} outlines the proposed ICA-CR-PF algorithm to solve \eqref{eq: prob. general form - relax.}. Without loss of optimality, the constant penalty parameters can be uniformly selected through  $ \varrho\triangleq\max\{\varrho_{kj}^{\dl}, \varrho_{\ell m}^{\ul}\}_{k,j\in\mathcal{K},\;\ell,m\in\mathcal{L}} $, as proved in Appendix \ref{app: relax. prob with pen.}. With an appropriate choice of $ \varrho $, the ICA-CR-PF algorithm  provides an optimal solution to the original problem \eqref{eq: prob. general form}. Further discussions on the choice of the  penalty parameter $\varrho$ are given in Section \ref{NumericalResults}-C.

{\hili
\subsection{Discussion on Solution for the General Case $ (Z>=2) $:}
	Our emphasis on the case study with $ Z=2 $ is merely due to current practical limitations of small cell FD-NOMA systems. However, finding an optimal solution for the general case of FD-NOMA systems (if practically implementable) is straightforward after slight modifications. To see this, by using the binary variable \tensor{T}, we first transform the SINR of DL users in \eqref{eq: DL SINR general form} into a more tractable form as:
\begin{align} \label{eq: DL SINR - ext}
\gamma_{ik}^{\dl}(\mathbf{w},\mathbf{p},\tensor{T}) & = \min\Biggl\{ \underset{ j\in\mathcal{K}}{\max}\Bigl\{\mfrac{T_{jk}^{1i}|(\mathbf{h}_{1j}^\dl)^H \mathbf{w}_{ik}|^2}{\Theta_{jk}^{1i}(\mathbf{w},\mathbf{p},\tensor{T})}\Bigr\}, \underset{z\in\mathcal{Z}_{i}\setminus \{1\} }{\min}\; \underset{ j\in\mathcal{K}}{\max}\Bigl\{\mfrac{T_{jk}^{zi}|(\mathbf{h}_{zj}^\dl)^H \mathbf{w}_{ik}|^2}{\Theta_{jk}^{zi}(\mathbf{w},\mathbf{p},\tensor{T})}\Bigr\} \Biggr\}, \; \forall i\in \mathcal{Z}.
\end{align}
It is observed that $ \tensor{T} $ includes one identity matrix and $ (Z-1) $ variable matrices, as equivalent to $ \mathbf{T}^{1i},\; \forall i\in\mathcal{Z}\setminus\{1\} $ in \textbf{Theorem} \textbf{\ref{thm: user association matrix}}. We then relax $ \mathbf{T}^{1i} $ to be continuous as $ \mathbf{T}^{1i} \in [0,1] $, while the matrix $ \mathbf{T}^{zi} $, $ \forall z \in\mathcal{Z}_{i}\setminus\{1\} $, is treated as a constant updated on $ \mathbf{T}^{1z} $ and $ \mathbf{T}^{1i} $. By applying the ICA method and exploiting the structure of $ \tensor{T} $, at each iteration, the UA variable matrix $ \mathbf{T}^{1i} $ is jointly optimized to find a minorant maximization, while $ \mathbf{T}^{zi} $ is calculated via \eqref{eq: user association matrix} with $ \mathbf{T}^{1z} $ and $ \mathbf{T}^{1i} $ given in the previous iteration. Consequently, all the fractional functions in \eqref{eq: DL SINR - ext} can be tackled by following the same steps as \eqref{eq: SINR zone-1 convexification}-\eqref{eq: positive effective channel DL 2,k 1}. Note that although \eqref{eq: DL SINR} can be expressed by \eqref{eq: DL SINR - ext}, the expressions \eqref{eq: DL SINR zone 1} and \eqref{eq: DL SINR zone 2} help to show up all possible cases of numerators and denominators of the fractions which need to be convexified in the general case.
}

\section{Optimal Solution based on Brute-force Search}\label{sec: sum rate maximization}
For benchmarking purposes, this section presents the BFS combining with the ICA method, where the optimization subproblems corresponding to all feasible cases of $(\boldsymbol{\alpha},\boldsymbol{\beta})$ are successively solved to find the optimal solution. We note that a typical approach for BFS is to generate all possible cases of  $( \boldsymbol{\alpha},  \boldsymbol{\beta} )$, and then check the feasibility through the branch-and-bound method before handling the optimization problem of the power control. However, this approach takes the complexity $ \mathcal{O}\bigl(2^{K^2}2^{L^2}\bigr) $ as the number of subproblems, which is computationally expensive even for networks of medium size. Alternatively, we generate the permutations of the orders for DL and UL users, which strictly satisfy  constraints \eqref{eq: prob. general form tensor :: k}-\eqref{eq: prob. general form tensor :: m} and \eqref{eq: prob. general form d}-\eqref{eq: prob. general form e}, to relieve the complexity of searching. The values of $ \boldsymbol{\alpha} $ and $ \boldsymbol{\beta} $ are then derived for each permutation. In this case, the number of possible cases is calculated from enumerative combinatorics $ (K!\times L!) $. {\hilidra In particular, the inner-zone DL users keep the orders as their indices, while the orders of the outer-zone DL users are permuted in all possible cases, i.e., a set $ \mathcal{P}_K $ contains $ K $-permutations of $ \{1,2,\dots,K\} $. For each permutation of $ \mathcal{P}_K $, one inner-zone DL user is paired with one outer-zone DL user with the same index.} For UL decoding, we also generate $ L $-permutations of $ \{1,2,\dots,L\} $, establishing the set $ \mathcal{P}_L $, in which each permutation represents the UL users' decoding order. Therefore,   the SEM subproblem of \eqref{eq: prob. general form} generated by each element of $ \mathcal{P}_K\times\mathcal{P}_L $ is expressed as
\begin{subequations} \label{eq: prob. BB form}
	\setlength{\jot}{0.5pt}
	\begin{align}
		\underset{\mathbf{w}, \mathbf{p}}{\max} & \quad \sum\nolimits_{i\in\mathcal{Z}}\sum\nolimits_{k\in\mathcal{K}}R_{ik}^{\dl}\bigl(\mathbf{w}, \mathbf{p}|\boldsymbol{\alpha}\bigr)+\sum\nolimits_{\ell\in\mathcal{L}}R_{\ell}^{\ul}\bigl(\mathbf{w}, \mathbf{p}|\boldsymbol{\beta}\bigr) \qquad \label{eq: prob. BB form a} \\
		\st & \quad \eqref{eq: prob. general form tensor :: b}, \eqref{eq: prob. general form tensor :: c}, \label{eq: prob. BB form b} \\
		& \quad 
		R_{ik}^{\dl}(\mathbf{w}, \mathbf{p}|\boldsymbol{\alpha}) \geq \bar{R}_{ik}^{\dl}, \; \forall i\in\mathcal{Z},\;  k \in \mathcal{K}, \label{eq: prob. BB form e} \\
		& \quad 
		R_{\ell}^{\ul}(\mathbf{w}, \mathbf{p}|\boldsymbol{\beta}) \geq \bar{R}_{\ell}^{\ul}, \; \forall \ell \in \mathcal{L}, \label{eq: prob. BB form f} 
	\end{align}						
\end{subequations}
where $ R_{ik}^{\dl}(\mathbf{w}, \mathbf{p}|\boldsymbol{\alpha}) $ and $ R_{\ell}^{\ul}(\mathbf{w}, \mathbf{p}|\boldsymbol{\beta}) $ are obtained from $ R_{ik}^{\dl}(\mathbf{w}, \mathbf{p},\boldsymbol{\alpha}) $ and $ R_{\ell}^{\ul}(\mathbf{w}, \mathbf{p},\boldsymbol{\beta}) $ for given values of $ \boldsymbol{\alpha} $ and $ \boldsymbol{\beta} $, respectively. 

Even for  subproblem \eqref{eq: prob. BB form} at hand, the optimization problem in $(\mathbf{w}, \mathbf{p})$ remains non-convex. Specifically,  subproblem \eqref{eq: prob. BB form} has a non-concave objective  \eqref{eq: prob. BB form a} and  non-convex QoS constraints \eqref{eq: prob. BB form e} and \eqref{eq: prob. BB form f}. Fortunately, we will show shortly that the developments presented in Section \ref{sec: Continuous Relaxation Problems} are useful to convexify the non-convex parts of  \eqref{eq: prob. BB form}. In what follows, the notations defined in the previous section will be reused, unless otherwise mentioned.
Similarly to \eqref{eq: DL rate concave minorant1}, the concave minorant of $ R_{1k}^{\dl}(\mathbf{w}, \mathbf{p}|\boldsymbol{\alpha}) $ at the $ (\kappa+1) $-th iteration is 
\begin{align} 
R_{1k}^{\dl}(\mathbf{w},\mathbf{p}|\boldsymbol{\alpha}) & \geq \ln\bigl(1+\mfrac{1}{\omega_{1k}}\bigr) 
\geq \mathtt{A}(\omega_{1k}^{(\kappa)}) + \mathtt{B}(\omega_{1k}^{(\kappa)})\omega_{1k} := \tilde{R}_{1k}^{\dl, (\kappa)}, \label{eq: DL rate concave minorant}
\end{align}
with the trust region in \eqref{eq: positive effective channel DL 1,k 1} and the following  constraint
\begin{equation} \label{eq: var. change omega convex - relax. 2}
\phi_{k}(\mathbf{w},\mathbf{p}|\boldsymbol{\alpha}) \leq \omega_{1k} \tilde{\gamma}_{1k}^{(\kappa)}\bigl(\mathbf{w}\bigr), \; \omega_{1k} > 0,
\end{equation}
where $ \phi_{k}(\mathbf{w},\mathbf{p}|\boldsymbol{\alpha}) $ is the IN of $ \DLUi{1k} $ for a given value of $ \boldsymbol{\alpha} $. Different from $ \phi_{k}(\mathbf{w},\mathbf{p},\boldsymbol{\alpha}) $, $ \phi_{k}(\mathbf{w},\mathbf{p}|\boldsymbol{\alpha}) $ is a quadratic function, leading to the convex constraint \eqref{eq: var. change omega convex - relax. 2}.   The SINR of $ \DLUi{2j}$ in \eqref{eq: DL SINR zone 2} is reformulated as
\begin{equation} 
\hspace{-0pt} \gamma_{2j}^{\dl}(\mathbf{w},\mathbf{p}|\boldsymbol{\alpha}) = \min\Bigl\{ \mfrac{|(\mathbf{h}_{1k}^\dl)^H \mathbf{w}_{2j}|^2}{\psi_{j}^{k}(\mathbf{w},\mathbf{p})}, \mfrac{|(\mathbf{h}_{2j}^\dl)^H \mathbf{w}_{2j}|^2}{\varphi_{j}(\mathbf{w},\mathbf{p})}\Bigr\}, \hspace{-7pt} \label{eq: DL SINR zone 2 - fixed alpha}
\end{equation} 
for $ \alpha_{kj}=1 $. From \eqref{eq: DL rate var. change 2 1}, the concave minorant of $ R_{2j}^{\dl}\bigl(\mathbf{w}, \mathbf{p}|\boldsymbol{\alpha}\bigr) $ at the $ (\kappa+1) $-th iteration is given as
\begin{align} 
R_{2j}^{\dl}(\mathbf{w},\mathbf{p}|\boldsymbol{\alpha}) & \geq \ln\Bigl(1+\mfrac{1}{\omega_{2j}}\Bigr)\geq \mathtt{A}(\omega_{2j}^{(\kappa)}) + \mathtt{B}(\omega_{2j}^{(\kappa)})\omega_{2j} := \tilde{R}_{2j}^{\dl, (\kappa)}, \label{eq: DL rate concave minorant 2} 
\end{align}
by imposing the constraints \eqref{eq: var. change omega 2 b 1} and replacing constraint \eqref{eq: var. change omega 2 a 1} with
$	\mfrac{|(\mathbf{h}_{1k}^\dl)^H \mathbf{w}_{2j}|^2}{\psi_{j}^{k}(\mathbf{w},\mathbf{p})} \geq \mfrac{1}{\omega_{2j}}$, which is innerly convexified by
	\begin{align} \label{eq: var. change omega convex 2}
	\psi_{j}^{k}(\mathbf{w},\mathbf{p}) & \leq \omega_{2j} \bar{\gamma}_{2,kj}^{(\kappa)}\bigl(\mathbf{w}\bigr),\; k\in \mathcal{K}_j\triangleq\{k\in\mathcal{K}|\alpha_{kj}=1\} , 
	\end{align}
over the trust regions:
	\begin{align}\label{eq: positive effective channel DL 2,k}
	\bar{\gamma}_{2,kj}^{(\kappa)}(\mathbf{w}) \triangleq & \; 2\Re \bigl\{\bigl((\mathbf{h}_{1k}^\dl)^H\mathbf{w}_{2j}^{(\kappa)}\bigr)^*\bigl((\mathbf{h}_{1k}^\dl)^H\mathbf{w}_{2j}\bigr)\bigr\}  - \bigl|(\mathbf{h}_{1k}^\dl)^H \mathbf{w}_{2j}^{(\kappa)}\bigr|^2  > 0, \; k\in \mathcal{K}_j.
	\end{align}
We note that $\gamma_{2j}^{\dl}(\mathbf{w},\mathbf{p}|\boldsymbol{\alpha})$ in \eqref{eq: DL SINR zone 2 - fixed alpha} can be simplified to $\gamma_{2j}^{\dl}(\mathbf{w},\mathbf{p}|\boldsymbol{\alpha})=|(\mathbf{h}_{2j}^\dl)^H \mathbf{w}_{2j}|^2/\varphi_{j}(\mathbf{w},\mathbf{p})$ for  $\alpha_{kj}=0,\forall k,j$, and thus constraints \eqref{eq: var. change omega convex 2} and \eqref{eq: positive effective channel DL 2,k} are neglected in this case.

Let us turn our attention to handle the non-concavity of $ R_{\ell}^{\ul}\bigl(\mathbf{w}, \mathbf{p}|\boldsymbol{\beta}\bigr) $. By following the development in \cite[Eq. (20)]{Dinh:TCOMM:2017}, a concave quadratic minorant of $ R_{\ell}^{\ul}(\mathbf{w}, \mathbf{p}|\boldsymbol{\beta})$ at the feasible point $(\mathbf{w}^{(\kappa)}, \mathbf{p}^{(\kappa)})$ is 
\begin{IEEEeqnarray} {cl} \label{eq: UL rate convex}
	R_{\ell}^{\ul}(\mathbf{w}, \mathbf{p}|\boldsymbol{\beta}) \geq  \tilde{\mathtt{A}}\bigl(\gamma_{\ell}^{\ul}\bigl(\mathbf{w}^{(\kappa)}, \mathbf{p}^{(\kappa)}|\boldsymbol{\beta}\bigr)\bigr) +  \mfrac{2\gamma_{\ell}^{\ul}\bigl(\mathbf{w}^{(\kappa)}, \mathbf{p}^{(\kappa)}|\boldsymbol{\beta}\bigr)}{p_{\ell}^{(\kappa)}}p_{\ell}  - \Phi_{\ell}^{(\kappa)}(\mathbf{w}, \mathbf{p}|\boldsymbol{\beta}) := \tilde{R}_{\ell}^{\ul,(\kappa)},\qquad
\end{IEEEeqnarray}
where 
\begingroup \setlength{\jot}{0.05pt}
\allowdisplaybreaks\begin{align}
\tilde{\mathtt{A}}\bigl(\gamma_{\ell}^{\ul}\bigl(\mathbf{w}^{(\kappa)}, \mathbf{p}^{(\kappa)}|\boldsymbol{\beta}\bigr)\bigr)&\triangleq \ln\bigl(1+\gamma_{\ell}^{\ul}\bigl(\mathbf{w}^{(\kappa)}, \mathbf{p}^{(\kappa)}|\boldsymbol{\beta}\bigr)\bigr)  -\gamma_{\ell}^{\ul}\bigl(\mathbf{w}^{(\kappa)}, \mathbf{p}^{(\kappa)}|\boldsymbol{\beta}\bigr), \nonumber \\
\Phi_{\ell}^{(\kappa)}(\mathbf{w}, \mathbf{p}|\boldsymbol{\beta})  &\triangleq
\ds\tr\Bigl(\bigl(p_{\ell}^2\mathbf{h}_{\ell}^{\ul}(\mathbf{h}_{\ell}^{\ul})^H + \boldsymbol{\Psi}_{\ell}\bigl)\bigl( (\boldsymbol{\Psi}_{\ell}^{(\kappa)})^{-1}  - ((p_{\ell}^{(\kappa)})^2\mathbf{h}_{\ell}^{\ul}(\mathbf{h}_{\ell}^{\ul})^H+\boldsymbol{\Psi}_{\ell}^{(\kappa)})^{-1}\bigr)\Bigr),   \nonumber \\
\boldsymbol{\Psi}_{\ell}^{(\kappa)} &\triangleq  \sum\nolimits_{m\in\mathcal{L}}\beta_{\ell m} \bigl(p_{m}^{(\kappa)}\bigr)^2 \mathbf{h}_{m}^{\ul}\bigl(\mathbf{h}_{m}^{\ul}\bigr)^H  + \rho^2\sum\nolimits_{i\in\mathcal{Z}}\sum\nolimits_{k\in\mathcal{K}} \mathbf{G}_{\SI}^H \mathbf{w}_{ik}^{(\kappa)} \bigl(\mathbf{w}_{ik}^{(\kappa)}\bigr)^H\mathbf{G}_{\SI} +\sigma_{\mathtt{U}}^2\mathbf{I}.\nonumber
\end{align}\endgroup

\begin{algorithm}[t]
	\begin{algorithmic}[1]
		\fontsize{8.5}{9.5}\selectfont
		\protect\caption{\fontsize{9}{10}\selectfont Proposed ICA-BFS Based Design for the SEM Problem \eqref{eq: prob. general form}}

		\label{alg: Brute-Force Search}
		
		\STATE \textbf{Initialization:} Set $ R_{\Sigma} := -\infty $ and $ \bigl(\mathbf{w}^{*},\mathbf{p}^{*},\boldsymbol{\alpha}^{*},\boldsymbol{\beta}^{*}\bigr)$\footnotemark $:=\mathbf{0} $.
		
		\FOR[solving subproblem \eqref{eq: prob. BB form}] {each   element of $ \mathcal{P}_K\times\mathcal{P}_L $}	
		
		\STATE Compute ($ \boldsymbol{\alpha},\boldsymbol{\beta} $) for each   element of $ \mathcal{P}_K\times\mathcal{P}_L $.
		\STATE Set $\kappa:=0$ and solve \eqref{eq: prob. BFS initialization} to generate an initial feasible point $ (\mathbf{w}^{(0)},\mathbf{p}^{(0)},\boldsymbol{\omega}^{(0)}) $.
		
		\REPEAT
		\STATE Solve \eqref{eq: prob. BB form convex prog.} to obtain  $ \bigl(\mathbf{w}^{\star},\mathbf{p}^{\star},\boldsymbol{\omega}^{\star}\bigr) $ and  $\tilde{R}_{\Sigma}^{(\kappa+1)}$.
		
		\STATE Update $(\mathbf{w}^{(\kappa+1)},\mathbf{p}^{(\kappa+1)}, \boldsymbol{\omega}^{(\kappa+1)}) :=(\mathbf{w}^{\star}, \mathbf{p}^{\star}, \boldsymbol{\omega}^{\star}) $.
		\STATE Set $ \kappa := \kappa + 1 $.
		\UNTIL Convergence
		\IF {$\tilde{R}_{\Sigma}^{(\kappa)}>R_{\Sigma} $}
		\STATE Update $ R_{\Sigma}:= \tilde{R}_{\Sigma}^{(\kappa)} $ and $ \bigl(\mathbf{w}^{*},\mathbf{p}^{*},\boldsymbol{\alpha}^{*},\boldsymbol{\beta}^{*}\bigr) := \bigl(\mathbf{w}^{(\kappa)},\mathbf{p}^{(\kappa)},\boldsymbol{\alpha},\boldsymbol{\beta}\bigr)$.
		\ENDIF
		\ENDFOR
		\STATE {\textbf{Output:}  $ R_{\Sigma} $ and the optimal solution $ \bigl( \mathbf{w}^{*},\mathbf{p}^{*},\boldsymbol{\alpha}^{*},\boldsymbol{\beta}^{*}\bigr) $.}
	\end{algorithmic} 
\end{algorithm}
\footnotetext{Note that $ \bigl( \mathbf{w}^{*},\mathbf{p}^{*},\boldsymbol{\alpha}^{*},\boldsymbol{\beta}^{*}\bigr) $ represents the optimal solution for \eqref{eq: prob. general form}, while $ \bigl(\mathbf{w}^{\star},\mathbf{p}^{\star},\boldsymbol{\omega}^{\star}\bigr) $ only denotes a per-iteration optimal solution for subproblem \eqref{eq: prob. BB form}.}

In summary, to solve  subproblem  \eqref{eq: prob. BB form}, the successive convex program  at the $ (\kappa+1) $-th iteration of the proposed algorithm is given by
\begin{subequations} \label{eq: prob. BB form convex prog.}
	\setlength{\jot}{0.01pt}
	\begin{align}
		\underset{\mathbf{w}, \mathbf{p},\boldsymbol{\omega}}{\max} & \quad \tilde{R}_{\Sigma}^{(\kappa+1)}\triangleq \sum\nolimits_{i\in\mathcal{Z}}\sum\nolimits_{k\in\mathcal{K}}\tilde{R}_{ik}^{\dl,(\kappa)}+\sum\nolimits_{\ell\in\mathcal{L}}\tilde{R}_{\ell}^{\ul,(\kappa)}\label{eq: prob. BB form convex prog. a} \\
		\st & \quad \eqref{eq: positive effective channel DL 1,k 1},  \eqref{eq: var. change omega convex 2 b 1}, \eqref{eq: positive effective channel DL 2,k b 1}, \eqref{eq: prob. BB form b}, \eqref{eq: var. change omega convex - relax. 2},  \eqref{eq: var. change omega convex 2}, \eqref{eq: positive effective channel DL 2,k}, \label{eq: prob. BB form convex prog. b} \quad\\
		& \quad 
		\tilde{R}_{ik}^{\dl,(\kappa)} \geq \bar{R}_{ik}^{\dl}, \; \forall i\in\mathcal{Z},\;  k \in \mathcal{K}, \label{eq: prob. BB form convex prog. c} \\
		& \quad 
		\tilde{R}_{\ell}^{\ul,(\kappa)} \geq \bar{R}_{\ell}^{\ul}, \; \forall \ell \in \mathcal{L}, \label{eq: prob. BB form convex prog. d} 
	\end{align}						
\end{subequations}
The iterative algorithm for solving \eqref{eq: prob. general form} based on the ICA-BFS method is summarized in \textbf{Algorithm} \textbf{\ref{alg: Brute-Force Search}}.

%

\section{Initial Point, Convergence and Complexity Analysis} \label{sec: Init, Converg, Complexity}

\subsection{Finding Initial Feasible Points}
\vspace{-5pt}
In general, the ICA framework used in solving a non-convex problem requires an initial feasible point to start the iterative algorithm, which is challenging to find due to the QoS constraints. A simple way using a random initial point may take many iterations or even fail to initialize the computational procedure.   Thus, it is nontrivial  to find an initial feasible point such that the proposed algorithms are successfully solved in the first iteration. Here we will present a heuristic way to generate a feasible point of the three proposed algorithms. 

\subsubsection{Initial Feasible Point of the ICA-CR and ICA-CR-PF Based Algorithms}
In some problems such as the one in \cite{Dinh:JSAC:Dec2017}, the feasible point can be found easily by violating the QoS constraints in \eqref{eq: prob. general form - relax.} and maximizing $\chi \triangleq \underset{i\in \mathcal{Z},k \in \mathcal{K},\ell \in \mathcal{L}}{\min}\{R_{ik}^{\dl}\bigl(\mathbf{w}, \mathbf{p},\boldsymbol{\alpha}\bigr) - \bar{R}_{ik}^{\dl}, R_{\ell}^{\ul}\bigl(\mathbf{w}, \mathbf{p},\boldsymbol{\beta}\bigr) - \bar{R}_{\ell}^{\ul}\}$ until reaching a positive objective value, i.e., $\chi \geq 0$. Unfortunately, such simple manipulations do not apply to our considered problem due to a joint UA in both DL and UL transmissions. For this reason, we propose a two-step initialization for the ICA-CR-based methods. In the ICA-CR based algorithm, we first randomly generate $ \boldsymbol{\beta} $ as $ \boldsymbol{\bar{\beta}} $ over the interval $(\boldsymbol{0},\boldsymbol{1})$ and solve the following convex program:
	\begin{align}\label{eq: prob. CRP initialization}
		\underset{\substack{\mathbf{w}, \mathbf{p},\boldsymbol{\alpha}, \boldsymbol{\beta}=\boldsymbol{\bar{\beta}}, \\ \boldsymbol{\omega},\boldsymbol{\lambda},\boldsymbol{\mu},\boldsymbol{\nu}}}{\max} & \quad \sum_{i\in\mathcal{Z}}\sum_{k\in\mathcal{K}}\ddot{R}_{ik}^{\dl,(\kappa)}, \qquad  
		\st \quad \eqref{eq: prob. general form - relax. 1 b}-\eqref{eq: prob. general form - relax. 1 d}.
	\end{align}							
In the second step, by using the optimal solution obtained from \eqref{eq: prob. CRP initialization} as the initial point and fixing $ \boldsymbol{\alpha} = \boldsymbol{\bar{\alpha}} $, we successively solve the following convex program:
	\begin{align}\label{eq: prob. CRP initialization 1}
		\underset{\substack{\mathbf{w}, \mathbf{p},\boldsymbol{\alpha}=\boldsymbol{\bar{\alpha}}, \boldsymbol{\beta}, \\ \boldsymbol{\omega},\boldsymbol{\lambda},\boldsymbol{\mu},\boldsymbol{\nu}}}{\max} & \quad \sum_{\ell\in\mathcal{L}}\ddot{R}_{\ell}^{\ul,(\kappa)},\quad 
		\st \qquad \eqref{eq: prob. general form - relax. 1 b}-\eqref{eq: prob. general form - relax. 1 d}.
	\end{align}							
Whenever problems \eqref{eq: prob. CRP initialization} and \eqref{eq: prob. CRP initialization 1} are found to be feasible, we output the initial feasible point as $\mathbf{X}^{(0)}\triangleq(\mathbf{w}^{(0)}, \mathbf{p}^{(0)},\boldsymbol{\alpha}^{(0)}, \boldsymbol{\beta}^{(0)},  \boldsymbol{\omega}^{(0)},\boldsymbol{\lambda}^{(0)},\boldsymbol{\mu}^{(0)},\boldsymbol{\nu}^{(0)})$ and start the ICA-CR based algorithm.

Similarly, the initial feasible point for the ICA-CR-PF based algorithm is easily found by replacing the objective functions in \eqref{eq: prob. CRP initialization} and  \eqref{eq: prob. CRP initialization 1} with $\sum_{i\in\mathcal{Z}}\sum_{k\in\mathcal{K}}\ddot{R}_{ik}^{\dl,(\kappa)} + \sum_{k\in\mathcal{K}}\sum_{j\in\mathcal{K}}\tilde{f}_p^{\dl,(\kappa)}(\alpha_{kj})$ and $\sum_{\ell\in\mathcal{L}}\ddot{R}_{\ell}^{\ul,(\kappa)} + \sum_{\ell\in\mathcal{L}}\sum_{m\in\mathcal{L}}\tilde{f}_p^{\ul,(\kappa)}(\beta_{\ell m}) $, respectively.

\subsubsection{Initial Feasible Point of the ICA-BFS Based Algorithm}
For given values of $(\boldsymbol{\alpha},\boldsymbol{\beta})$, we propose the one-step initialization for the ICA-BFS based algorithm inspired by \cite{Dinh:JSAC:Dec2017}. Particularly, let us consider the following modification of \eqref{eq: prob. BB form convex prog.}:
\begingroup
\allowdisplaybreaks\begin{subequations} \label{eq: prob. BFS initialization}
	\setlength{\jot}{0.05pt}
	\begin{align}
		\underset{\mathbf{w}, \mathbf{p},\boldsymbol{\omega}, \eta}{\max} & \quad \eta \label{eq: prob. BFS initialization :: a} \\
		\st 
		& \quad \eqref{eq: prob. BB form convex prog. b}, \\
		& \quad \tilde{R}_{ik}^{\dl, (\kappa)}-\bar{R}_{ik}^{\dl} \geq \eta, \; \forall i\in \mathcal{Z},  k \in \mathcal{K}, \label{eq: prob. BFS initialization :: c} \\
		& \quad \tilde{R}_{\ell}^{\ul, (\kappa)}-\bar{R}_{\ell}^{\ul} \geq \eta,\ \forall \ell \in \mathcal{L}, \label{eq: prob. BFS initialization :: d}
	\end{align}
\end{subequations}\endgroup
where $ \eta $ is a new variable.  Note that 
constraints \eqref{eq: prob. BFS initialization :: c} and \eqref{eq: prob. BFS initialization :: d} are satisfied if the objective  of \eqref{eq: prob. BFS initialization :: a} is a positive value. Initialized by any feasible point $ (\mathbf{w}^{(0)},\mathbf{p}^{(0)},\boldsymbol{\omega}^{(0)}) $ to the convex constraints in \eqref{eq: prob. BB form convex prog. b}, we successively solve \eqref{eq: prob. BFS initialization} at iteration $\kappa$ until reaching $\eta \geq 0 $.

\vspace{-10pt}
\subsection{Convergence Analysis}
\vspace{-5pt}
For the sake of notational convenience, we define the convex feasible sets of \eqref{eq: prob. general form - relax. 1}, \eqref{eq: prob. general form - relax. pen. 2}, and \eqref{eq: prob. BB form convex prog.} at iteration $ \kappa $ as
\begin{align}
	\mathcal{F}_{1}^{(\kappa)} & \triangleq \bigl\{\mathbf{X}^{(\kappa)}\bigr| \eqref{eq: prob. general form - relax. 1 b}-\eqref{eq: prob. general form - relax. 1 d} \text{ hold}\bigr\}, \\
	\mathcal{F}_{2}^{(\kappa)} & \triangleq \bigl\{\mathbf{X}^{(\kappa)}\bigr| \eqref{eq: prob. general form - relax. pen. 2 b} \text{ holds}\bigr\}, \\
	\mathcal{F}_{3}^{(\kappa)} & \triangleq \bigl\{(\mathbf{w}^{(\kappa)},\mathbf{p}^{(\kappa)},\boldsymbol{\omega}^{(\kappa)})\bigr| \eqref{eq: prob. BB form convex prog. b}-\eqref{eq: prob. BB form convex prog. d} \text{ hold}\bigr\}.
\end{align}
\begin{theorem}\label{pro:Convergence}
The proposed \textbf{Algorithms} $p\in\{1,2,3\}$ yield a sequence of  improved solutions converging to at least to a local optimum. 
\end{theorem}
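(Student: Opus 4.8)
\textit{Proof approach (proposal).} The plan is to show that each of the three successive convex programs --- \eqref{eq: prob. general form - relax. 1}, \eqref{eq: prob. general form - relax. pen. 2} and \eqref{eq: prob. BB form convex prog.} --- is a genuine inner approximation of its parent problem, so that the standard convergence argument of the ICA/SCA framework \cite{Marks:78,Beck:JGO:10} applies. Concretely, at a generic iterate $\mathbf{X}^{(\kappa)}$ I would verify the three defining properties for every surrogate introduced in Section~\ref{sec: Continuous Relaxation Problems} and Section~\ref{sec: sum rate maximization} --- namely $\ddot{R}_{ik}^{\dl,(\kappa)}$, $\ddot{R}_{\ell}^{\ul,(\kappa)}$, $\tilde{R}_{ik}^{\dl,(\kappa)}$, $\tilde{R}_{\ell}^{\ul,(\kappa)}$, the bounds supplied by \textbf{Lemma}~\textbf{\ref{lem: UB xy2}}, the linearizations in \eqref{eq: positive effective channel DL 1,k 1} and \eqref{eq: positive effective channel DL 2,k 1}, $f_{\mathrm{LSE}}^{(\kappa)}$, and the penalty minorants $\tilde{f}_p^{\dl,(\kappa)}$, $\tilde{f}_p^{\ul,(\kappa)}$: (a) each surrogate lies below (for a constraint, inside) the true function/set; (b) equality holds at $\mathbf{X}=\mathbf{X}^{(\kappa)}$; (c) the gradient of the surrogate equals that of the true function at $\mathbf{X}^{(\kappa)}$. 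Properties (a)--(b) follow from first-order Taylor expansions of convex functions and directly from \textbf{Lemma}~\textbf{\ref{lem: UB xy2}}, while (c) holds because every surrogate is either a first-order expansion or a tangent quadratic majorant at $\mathbf{X}^{(\kappa)}$.

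Granting these properties, I would first argue feasibility and monotonicity. Because at $\mathbf{X}^{(\kappa)}$ every surrogate constraint collapses to the corresponding original constraint (which $\mathbf{X}^{(\kappa)}$ satisfies, by induction), $\mathbf{X}^{(\kappa)}$ is feasible for the convex subproblem solved at iteration $\kappa+1$; hence that subproblem is always solvable and its optimum $\mathbf{X}^{(\kappa+1)}$ attains value at least $\ddot{R}_{\Sigma}^{(\kappa+1)}(\mathbf{X}^{(\kappa)})=R_{\Sigma}(\mathbf{X}^{(\kappa)})$. Combining with the minorization (a) at $\mathbf{X}^{(\kappa+1)}$ yields the chain $R_{\Sigma}(\mathbf{X}^{(\kappa)})=\ddot{R}_{\Sigma}^{(\kappa+1)}(\mathbf{X}^{(\kappa)})\le\ddot{R}_{\Sigma}^{(\kappa+1)}(\mathbf{X}^{(\kappa+1)})\le R_{\Sigma}(\mathbf{X}^{(\kappa+1)})$, i.e.\ the true objective is non-decreasing along the iterates; for \textbf{Algorithm}~2 the same chain holds with the penalised objective $\mathring{R}_{\Sigma}$, using that $\tilde{f}_p^{(\kappa)}$ minorises the (non-positive) penalty and is tight at $\mathbf{X}^{(\kappa)}$, which by \textbf{Theorem}~\textbf{\ref{thm: relax. prob with pen.}} costs no optimality. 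Boundedness above is immediate: the power budgets \eqref{eq: prob. general form tensor :: b}, \eqref{eq: prob. general form tensor :: c} and the box constraints on $(\boldsymbol{\alpha},\boldsymbol{\beta})$ make each feasible set compact and the rate functions are continuous, so the monotone sequence of objective values converges.

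To upgrade value-convergence to convergence to a local (KKT) optimum, I would extract a convergent subsequence $\mathbf{X}^{(\kappa)}\to\mathbf{X}^{\ast}$ (possible by compactness), write the KKT system of the convex subproblem at iteration $\kappa+1$, and pass to the limit: by the gradient-consistency property (c), the gradients of the objective surrogate and of all constraint surrogates converge to the gradients of the true objective and constraints at $\mathbf{X}^{\ast}$, so the KKT conditions of the subproblems converge to the KKT conditions of the parent problem --- problem \eqref{eq: prob. general form - relax.} (before the rounding \eqref{eq: rounding binary var.}) for \textbf{Algorithm}~1, its penalised counterpart hence \eqref{eq: prob. general form} for \textbf{Algorithm}~2 via \textbf{Theorem}~\textbf{\ref{thm: relax. prob with pen.}} and \textbf{Remark}~\textbf{\ref{rem: pen. par. selection}}, and subproblem \eqref{eq: prob. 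BB form} for \textbf{Algorithm}~3. Under a constraint qualification at $\mathbf{X}^{\ast}$ --- which is generic here because the two-step and one-step initialisation procedures of Section~\ref{sec: Init, Converg, Complexity} deliver strictly feasible starting points --- $\mathbf{X}^{\ast}$ is a KKT point, and for the two continuous-relaxation algorithms the ICA-BFS benchmark together with the post-processing Step~8 certifies a feasible (indeed near-optimal) point of the original mixed-integer problem.

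The main obstacle I expect is establishing gradient-consistency (c) for the composite surrogates, most delicately the UL-rate minorant \eqref{eq: UL rate convex - relax. 1}, which chains the matrix-inverse expansion in $\boldsymbol{\Xi}_{\ell}^{(\kappa)}$ with a second application of \textbf{Lemma}~\textbf{\ref{lem: UB xy2}} and couples the added variables $(\boldsymbol{\omega},\boldsymbol{\lambda},\boldsymbol{\mu},\boldsymbol{\nu})$. One must check that the slack constraints \eqref{eq: var. change mu}, \eqref{eq: var. change nu}, \eqref{eq: positive effective channel DL 1,k 1} and \eqref{eq: positive effective channel DL 2,k 1} are active at optimality and pin the auxiliary variables uniquely at $\mathbf{X}^{(\kappa)}$, so that the chain rule produces exact gradient matching for the reformulated objective; a secondary subtlety is that, for \textbf{Algorithms}~1 and~2, ``local optimum'' must be read with respect to the relaxation, the penalty (with $\varrho$ chosen as in Appendix~\ref{app: relax. prob with pen.}) driving $(\boldsymbol{\alpha},\boldsymbol{\beta})$ to binary so that \eqref{eq: rounding binary var.} is asymptotically vacuous.
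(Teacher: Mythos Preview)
Your proposal is correct and follows essentially the same route as the paper: both invoke the ICA/SCA framework of \cite{Marks:78,Beck:JGO:10}, verify that the surrogates satisfy the required minorization, tightness, and gradient-consistency properties (the paper's ``Property~A''), deduce that $\mathcal{F}_p^{(\kappa)}\subseteq\mathcal{F}_p^{(\kappa+1)}$ with monotone objective improvement, and appeal to compactness from the power and box constraints to pass to a KKT limit. Your write-up is considerably more explicit than the paper's brief proof --- you spell out the monotonicity chain, the subsequence/KKT-limit argument, and the delicate points (activity of the auxiliary slacks, gradient-matching for the composite UL surrogate \eqref{eq: UL rate convex - relax. 1}) that the paper leaves to the cited references --- but the underlying strategy is identical.
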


\begin{proof}
The convergence analysis of the ICA-based algorithms is similar to that in \cite{Beck:JGO:10}. Thus, we only need to verify the convergent conditions of the proposed algorithms. 
First, we recall that the approximate functions presented in Section \ref{sec: Continuous Relaxation Problems} and \ref{sec: sum rate maximization} satisfy the properties listed in \cite[Property A]{Beck:JGO:10}. In other words, the optimal solutions achieved by the proposed algorithms at  iteration $\kappa$ are also feasible for the problem at  iteration $\kappa+1$, i.e., $ \mathcal{F}_{p}^{(\kappa)}\subseteq\mathcal{F}_{p}^{(\kappa+1)},\;p\in\{1,2,3\} $ \cite{Beck:JGO:10}. Moreover, $ \mathcal{F}_{p}^{(\kappa)} $ is closed and bounded due to the ICA method and  the power constraints \eqref{eq: prob. general form tensor :: b} and \eqref{eq: prob. general form tensor :: c}. According to \textbf{Theorem \ref{thm: relax. prob with pen.}}, the feasible sets $ \mathcal{F}_{p}^{(\kappa)} $ are compact and nonempty. Thus it satisfies the connectedness condition for Karush-Kuhn-Tucker (KKT) invexity as $ \kappa\rightarrow\infty $ \cite{Bestuzheva:InvexOp:Jul2017}. Therefore, the iterative solutions of the proposed convex programs towards the KKT-point are monotonically improved and converge at least to a local
optimum. For the practical implementation, a given error tolerance between two successive objective values can be used to terminate the proposed algorithms, i.e., $ d_{\mathtt{R}}\triangleq \mathtt{R}^{(\kappa+1)}-\mathtt{R}^{(\kappa)}<\varepsilon$ for $ \mathtt{R}\in\bigl\{\ddot{R}_{\Sigma},\mathring{R}_{\Sigma},\tilde{R}_{\Sigma}\bigr\} $.
\end{proof}

\vspace{-10pt}
\subsection{Complexity Analysis}
\vspace{-5pt}
We now provide  the worst-case per-iteration complexity analysis of \textbf{Algorithm} $ p\in\{1,2,3\} $. 
We first observe that the convex programs given in  \eqref{eq: prob. general form - relax. 1},  \eqref{eq: prob. general form - relax. pen. 2} and \eqref{eq: prob. BB form convex prog.} involve only the SOC and linear constraints, thus leading to a low computational complexity. For ease of presentation, let us define $ \upsilon_p $ and $ c_p $ be the numbers of  scalar variables and linear/SOC constraints, respectively. As a result, the per-iteration cost of solving  \eqref{eq: prob. general form - relax. 1}, \eqref{eq: prob. general form - relax. pen. 2} and \eqref{eq: prob. BB form convex prog.} is a polynomial time complexity of $ \mathcal{O}\bigl(c_p^{2.5}\upsilon_p^{2}+c_p^{3.5}\bigr) $ \cite{SeDuMi:2002}, which is detailed in Table \ref{tab: Complexities}.

{\hili From the above complexity estimates, it can be seen that   the per-iteration complexity of the ICA-CR and ICA-CR-PF based algorithms is higher than that of the ICA-BFS based algorithm for one subproblem due to the newly-added binary variables.} However, the computational complexity of the latter is extremely large when  the number of users increases resulting in a large number of subproblems, while the formers require solving  one convex program only.  

\begin{table*}[t]
	\centering
	\captionof{table}{Complexity Comparison}
	\label{tab: Complexities}
	\vspace{-10pt}
	\scalebox{0.7}{
		\begin{tabular}{l|c|c|c|c}
			\hline
			Method & \# of subproblems & $ \upsilon_p $ & $ c_p $ & Per-iteration Complexity\\
			\hline
			ICA-CR (\textbf{Alg. \ref{alg: Continuous relaxation problem}}) & 1 & $ \upsilon_{1}=\upsilon_{3}+3K^2+L^2+L $ & $ c_{1}=c_{3} + 6K^2+3L^2 $ & $ \mathcal{O}\bigl(c_1^{2.5}\upsilon_1^{2}+c_1^{3.5}\bigr) $ \\ \hline
			ICA-CR-PF (\textbf{Alg. \ref{alg: Continuous relaxation problem - pen}}) &  1 & $ \upsilon_{2}=\upsilon_{3}+3K^2+L^2+L $ & $ c_{2}=c_{3} + 6K^2+3L^2 $ & $ \mathcal{O}\bigl(c_2^{2.5}\upsilon_2^{2}+c_2^{3.5}\bigr) $ \\
			\hline\hline
			ICA-BFS (\textbf{Alg. \ref{alg: Brute-Force Search}})   & $ K!\times L! $ & $ \upsilon_{3}=2K(N+1)+L $ & $ c_{3}=8K+3L+1 $ & $ (K!\times L!).\mathcal{O}\bigl(c_3^{2.5}\upsilon_3^{2}+c_3^{3.5}\bigr) $ \\
			\hline		   				
	\end{tabular} }
	\vspace{5pt}
\end{table*}

{\hili
\begin{remark}
	Recall that $ N $, $ L $ and $ M = ZK $ denote the number of antennas at the BS, the number of UL users and the number of DL users, respectively, where $ Z $ is the number of zones for the general case. For any Z, the numbers of variables and constraints required by \textbf{Algorithm} \ref{alg: Brute-Force Search} are $ \upsilon_{3,Z}=NZK+ZK+L $ and $ c_{3,Z}=(Z+2)ZK+3L+1 $, respectively. \textbf{Algorithms} \ref{alg: Continuous relaxation problem} and \ref{alg: Continuous relaxation problem - pen} require the same numbers of variables and constraints as $ \upsilon_{1,Z}=\upsilon_{2,Z}=\upsilon_{3,Z}+3(Z-1)K^2+L^2+L $ and $ c_{1,Z}=c_{2,Z}=(Z-1)(Z+4)K^2+(5Z-2)K+3L^2+3L+1 $, respectively. The total complexities for $ Z\geq 2 $ are derived by replacing $ \upsilon_p $ and $ c_p $ in Table \ref{tab: Complexities} with $ \upsilon_{p,Z} $ and $ c_{p,Z} $, respectively. Note that the number of subproblems for ICA-CR-based methods is still one, while that for ICA-BFS is generalized as $ (K!)^{Z-1}\times L! $.
\end{remark}
}


\section{Numerical Results}\label{NumericalResults}

\subsection{Simulation Setup}
In this section, we numerically evaluate the performance of the developed algorithms in the MATLAB environment. The results are based on the following settings, unless otherwise specified. The radius of small-cell is set to 100 meters, and the BS located at the cell-center serves $L = 4$ UL users and $M = 8$ DL users. All UL users are randomly placed in the cell, while four DL users are randomly placed in zone-$1$ between 10 and 50 meters, and the other four DL users are randomly located in zone-2  between 50 and 100 meters. The channel vector between the BS and user $ \mathtt{U} \in \{\ULU, \DLU\} $ is generated as $\mathbf{h}=\sqrt{10^{-\mathtt{PL}_{\mathtt{BS,U}}/10}}\hat{\mathbf{h}}$, with  $\mathbf{h}\in\{\mathbf{h}_{\ell}^{\ul},\mathbf{h}_{ik}^{\dl}\}$, while the channel response from $\ULU$ to $\DLU$ is generated as $ g_{\ell k}=\sqrt{10^{-\mathtt{PL}_{\ell k}/10}}\hat{g}_{\ell k} $. Here, $\mathtt{PL}_{\mathtt{BS,U}}$ and $\mathtt{PL}_{\ell k}$ represent the path loss (in dB), as given in Table \ref{tab: parameter}, with $d_{\mathtt{BS,U}}$ and $d_{\ell k}$ being  the distances (in km) between BS and user $\mathtt{U}$ and  from $ \ULU $ to $ \DLU $, respectively.  $\hat{\mathbf{h}}$ and $ \hat{g}_{\ell k} $ are the small-scale fading and distributed as $\mathcal{CN}(\boldsymbol{0},\mathbf{I})$. The entries of the SI channel $ \mathbf{G}_{\SI} $ are modeled as independent and identically distributed Rician random variables, with the Rician factor of $ 5 $ dB \cite{Dinh:JSAC:18}. {\hili Unless specifically stated otherwise, the other parameters are set as given in Table~\ref{tab: parameter}, following the studies in \cite{Bharadia13,Bharadia14,Duarte:TWC:12,3GPP}.} The average SEs are plotted for 1000 topologies and 500 random channel realizations for each topology. The SEs are divided by $ \ln2 $ to be presented in bits/s/Hz. The scheme proposed in this paper is referred to as FD-NOMA with three different algorithms: ICA-CR (\textbf{Alg. \ref{alg: Continuous relaxation problem}}), ICA-CR-PF (\textbf{Alg. \ref{alg: Continuous relaxation problem - pen}}), and ICA-BFS (\textbf{Alg. \ref{alg: Brute-Force Search}}). For comparison purpose, the following schemes are also considered:
\begin{enumerate}
	\item ``FD-NOMA + RUA'':  Both UL users' decoding order and DL user pairing are randomly selected, which is referred to as the strategy of RUA. For a given random user association, the ICA-BFS based design is applied to handle the problem of power control.
	\item ``FD-Conventional'': A conventional FD scheme  in \cite{Dan:TWC:14} without  applying NOMA is used.
	\item ``HD-NOMA'': Under FD-NOMA with RUA, BS serves DL  and UL users separately in two  independent communication time blocks. In this scheme, there are no  SI and CCI, but the effective SEs are divided by two.
\end{enumerate}

\begin{table}[t]
	\centering
	\captionof{table}{Simulation Parameters}
	\label{tab: parameter}
	\vspace{-10pt}
	\scalebox{0.68}{
		\begin{tabular}{l|l}
			\hline
			Parameter & Value \\
			\hline\hline
			Radius of small cell                   & 100 m \\
			Noise power at receivers, $\sigma^2_{ik}=\sigma^2_{\mathtt{U}}\equiv\sigma^2$ & -104 dBm \\
			Residual SiS parameter, $\rho^2$                                          & -90 dB\\
			Distance from BS to closest users & $>$ 10 m\\
			PL  between BS and a user $\mathtt{U}$,  $\mathtt{PL}_{\mathtt{BS,U}}$ & 103.8 + 20.9$\log_{10}(d_{\mathtt{BS,U}})$ dB\\
			PL from $\ULU$ to $\DLU$,	$\mathtt{PL}_{\ell k} $ & 145.4 + 37.5$\log_{10}(d_{\ell k})$ dB\\
			Power budget at UL users, $P_{\ell}^{\max},\forall\ell $ & 18 dBm \\
			Power budget at BS, $ P_{\mathtt{bs}}^{\max}$  & 38 dBm \\
			Number of antennas at BS, $ N $ & 10\\
			Rate threshold, $ \bar{R}_{\ell}^{\ul} = \bar{R}_{ik}^{\dl} \equiv \bar{R},\;\forall \ell, i, k $	&  1 bits/s/Hz\\
			Error tolerance, $\varepsilon$ & $10^{-3}$\\
			\hline		   				
		\end{tabular}
	}
\end{table}

\subsection{Performance Evaluation}
\vspace{-5pt}
	\begin{figure}[t]
	\begin{subfigure}[Average SE versus $ P_{\mathtt{bs}}^{\max}$ for different schemes.]
				{
			\includegraphics[width=0.45\columnwidth, trim={0cm, 0cm, 0cm, 0cm}]{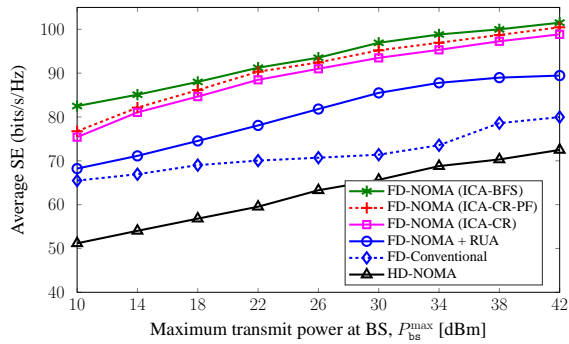}
						\label{fig: SumRate vs Pbs}
					}
	\end{subfigure}
		\hfill
	\begin{subfigure}[Average SE versus $ P_{\mathtt{bs}}^{\max}$ with RUA for DL and/or UL transmission.]
		{\includegraphics[width=0.45\columnwidth, trim={0cm, 0cm, 0cm, 0cm}]{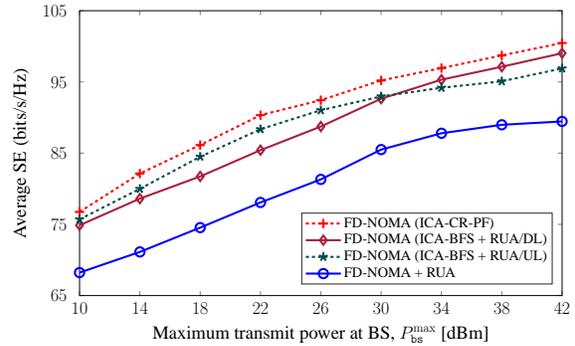}
						\label{fig: Gain SumRate vs Pbs}
				}
	\end{subfigure}
	\vspace{-5pt}
	\caption{Average SE versus the BS transmit power, $ P_{\mathtt{bs}}^{\max}$.}
	\label{fig: Aver SumRate vs Pbs}
\end{figure}
\hspace{6pt} \vrule \hspace{2pt}
\begin{figure}[t] 
	\centering
	\begin{subfigure}
		[Average SE versus $N$.]
		{
\includegraphics[width=0.45\columnwidth, trim={0cm, 0cm, 0cm, 0cm}]{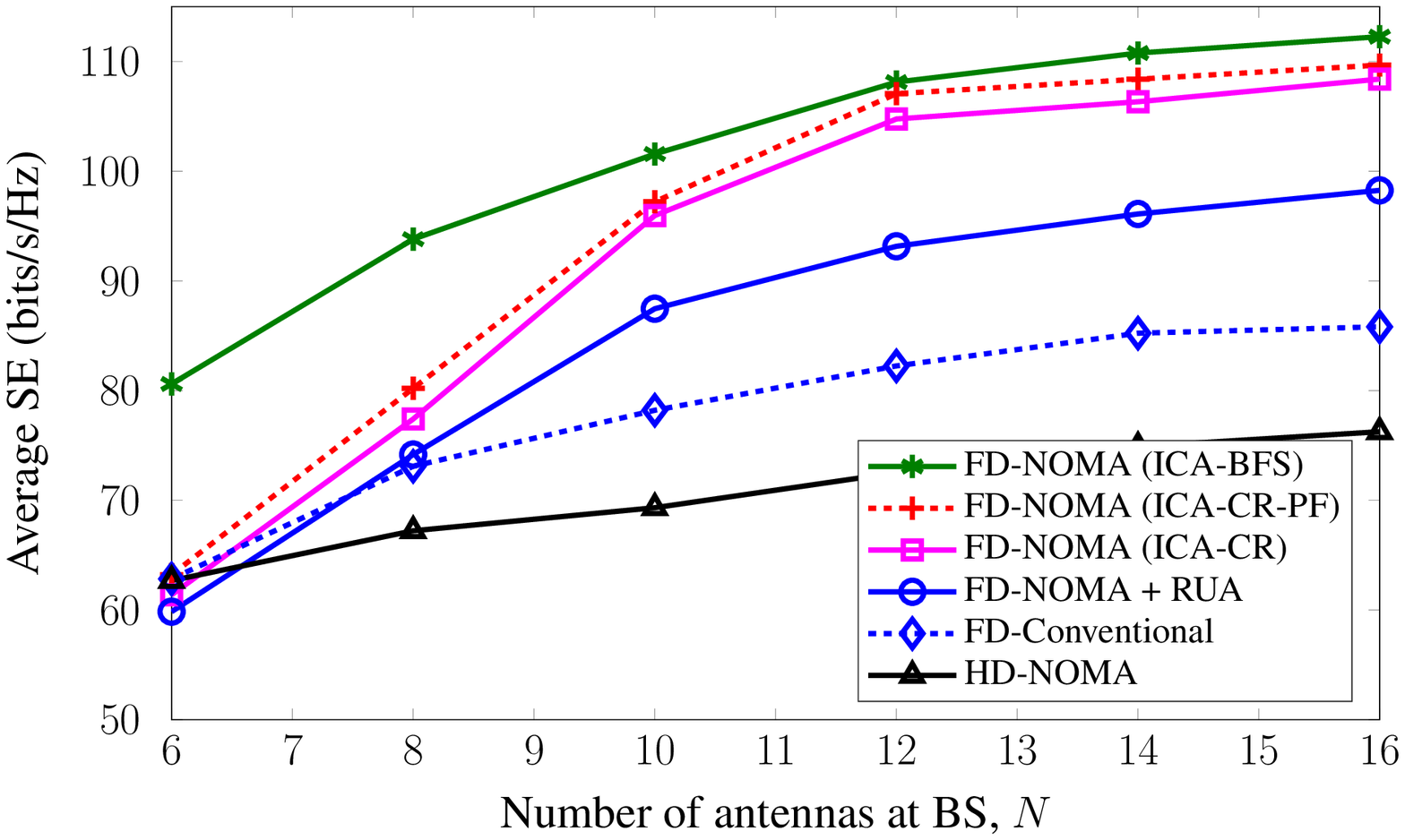}
			\label{fig: SumRate vs No. Antennas}
		}
	\end{subfigure}
	\hfill 
	\begin{subfigure}
	[Percentage of loss-SE versus  $N$.]
		{\includegraphics[width=0.45\columnwidth, trim={-0cm, 0cm, 0cm, 0cm}]{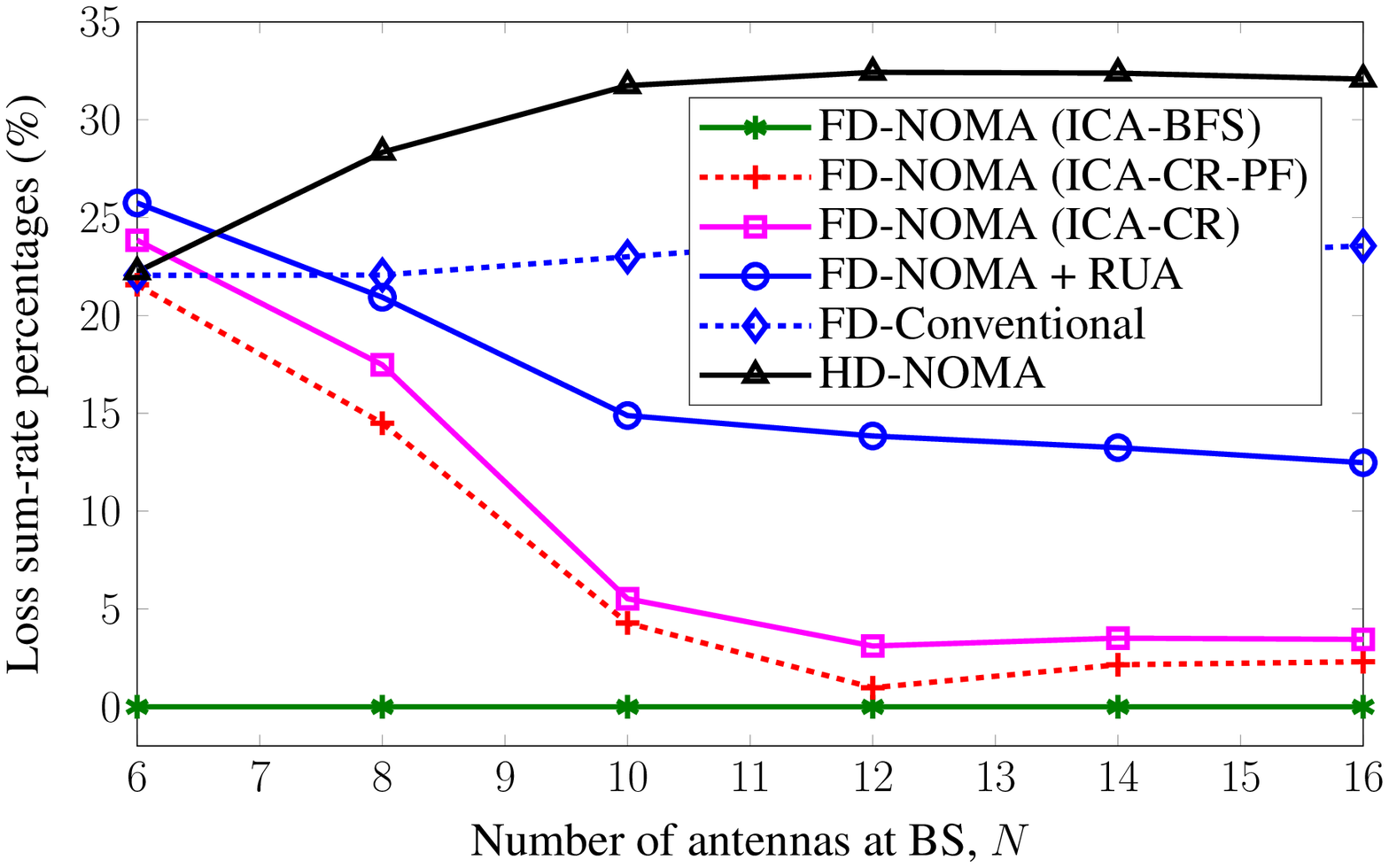}
			\label{fig: LossRate vs No. Antennas}
		}
	\end{subfigure}
	\vspace{-8pt}
	\caption{The change of average SE versus the number of antennas at the BS, $N$.}
\end{figure}


Fig.~\ref{fig: Aver SumRate vs Pbs} depicts the average SEs versus the maximum transmit power at the BS for different resource allocation schemes. As expected in Fig. \ref{fig: SumRate vs Pbs}, the ICA-BFS based algorithm provides the best SE due to finding the best UA for the FD-NOMA. However, we recall that it requires extremely high complexity, and thus, only plays as a benchmark. In addition, the ICA-CR-PF based algorithm tends to outperform the ICA-CR based algorithm as it aims at finding a high-performance UA solution.  On the other hand, it can be seen that both ICA-CR based algorithms deviate only 1$\%\sim2\%$ from the optimal SE, meaning that  performances are very good but with much less complexity compared to the ICA-BFS based algorithm. Unsurprisingly, our proposed FD-NOMA schemes outperform the conventional ones.  Fig. \ref{fig: Gain SumRate vs Pbs} further demonstrates the role of UA in DL and UL transmissions. Upon utilizing the BFS method, two other cases are considered: ($i$) ``ICA-BFS + RUA/DL'': We use the ICA-BFS for optimizing UL users' decoding order and random DL user pairing, and $(ii)$ ``ICA-BFS + RUA/UL'': The ICA-BFS is used for random UL users' decoding order and optimizing DL user pairing. As shown in Fig. \ref{fig: Gain SumRate vs Pbs}, the optimization of UA  in either DL or UL transmission enjoys a significant improvement of the SE as compared to FD-NOMA with RUA. The results also show that the UA in both DL and UL transmissions has significant influence on the performance. Although performances of ``ICA-BFS + RUA/DL'' and ``ICA-BFS + RUA/UL'' are comparable, they are distinguished from each other when $ P_{\mathtt{bs}}^{\max}$ varies. It is clear that increasing $ P_{\mathtt{bs}}^{\max} $ merely assists the DL transmission. The performance of ``ICA-BFS + RUA/DL'' is inferior when $ P_{\mathtt{bs}}^{\max}<30 $ dBm, since the total SE  is mainly determined by the DL transmission. Increasing $ P_{\mathtt{bs}}^{\max}$ brings much benefit to the optimization of UL users' decoding order as the performance of ``ICA-BFS + RUA/UL'' is lower than that of ``ICA-BFS + RUA/DL'' when  $ P_{\mathtt{bs}}^{\max} > 32$ dBm.  These results again confirm the important roles of user association in FD-NOMA systems.


In Fig. \ref{fig: SumRate vs No. Antennas}, we plot the average SE versus the number of antennas at the BS, $N\in[6,16]$. First, for small $N$ (i.e., $N\leq 6$),  ICA-CR based schemes provide the same performance (even worse) compared to the conventional ones. This is because the BS has a limited degrees-of-freedom to exploit multiuser diversity, which may result in a severe network interference situation in FD-NOMA based schemes. In this case, the use of UA in ICA-CR based schemes becomes less efficient. However, when $N$ increases, the SE gains of the proposed FD-NOMA schemes over the other ones are remarkable.  The reason is that the BS in FD-NOMA has sufficient degrees-of-freedom to select the best UA solutions, without causing much interference to the other users.
To further comprehend the benefit of using UA, Fig. \ref{fig: LossRate vs No. Antennas} shows the percentages of  loss-SE in comparison with the optimal performance obtained by the ICA-BFS.


Fig. \ref{fig: CDF vs Rate Threshold} shows the cumulative distribution function (CDF) of the FD-based schemes as a function of  the QoS requirement, $ \bar{R} $. The HD-NOMA scheme is omitted here  since its DL and UL transmissions are separately executed. It can be seen that the probabilities of feasibility of all the considered schemes are smaller when $ \bar{R} $ is higher. As expected, FD-NOMA schemes can maintain much rate fairness among all the DL and UL users as compared to the FD-Conventional scheme. In addition,  FD-NOMA schemes using the ICA-CR and ICA-CR-PF based algorithms offset about 0.5 bits/s/Hz and 2 bits/s/Hz of the rate threshold more than the scheme of FD-NOMA with RUA, respectively, in about 50$\%$ of the simulated trials. It further confirms that the joint optimization of UA  might help satisfy  higher QoS levels for the FD-NOMA schemes. Clearly, the ICA-CR-PF based algorithm assisted by the PF outperforms the ICA-CR one. The reason is that the former can quickly find the satisfactory solution of DL user pairing and UL users' decoding, at which the QoS constraints are satisfied even under unexpected channel conditions. The CDFs w.r.t. the rate threshold validate the advantage of UA, and reflect the characteristics of the proposed algorithms.

\begin{figure}
	\vspace{-35pt}
	\centering
	\begin{minipage}[t]{0.45\columnwidth}
		\centering
		\includegraphics[width=\columnwidth, trim={0cm, 0cm, 0cm, 0cm}]{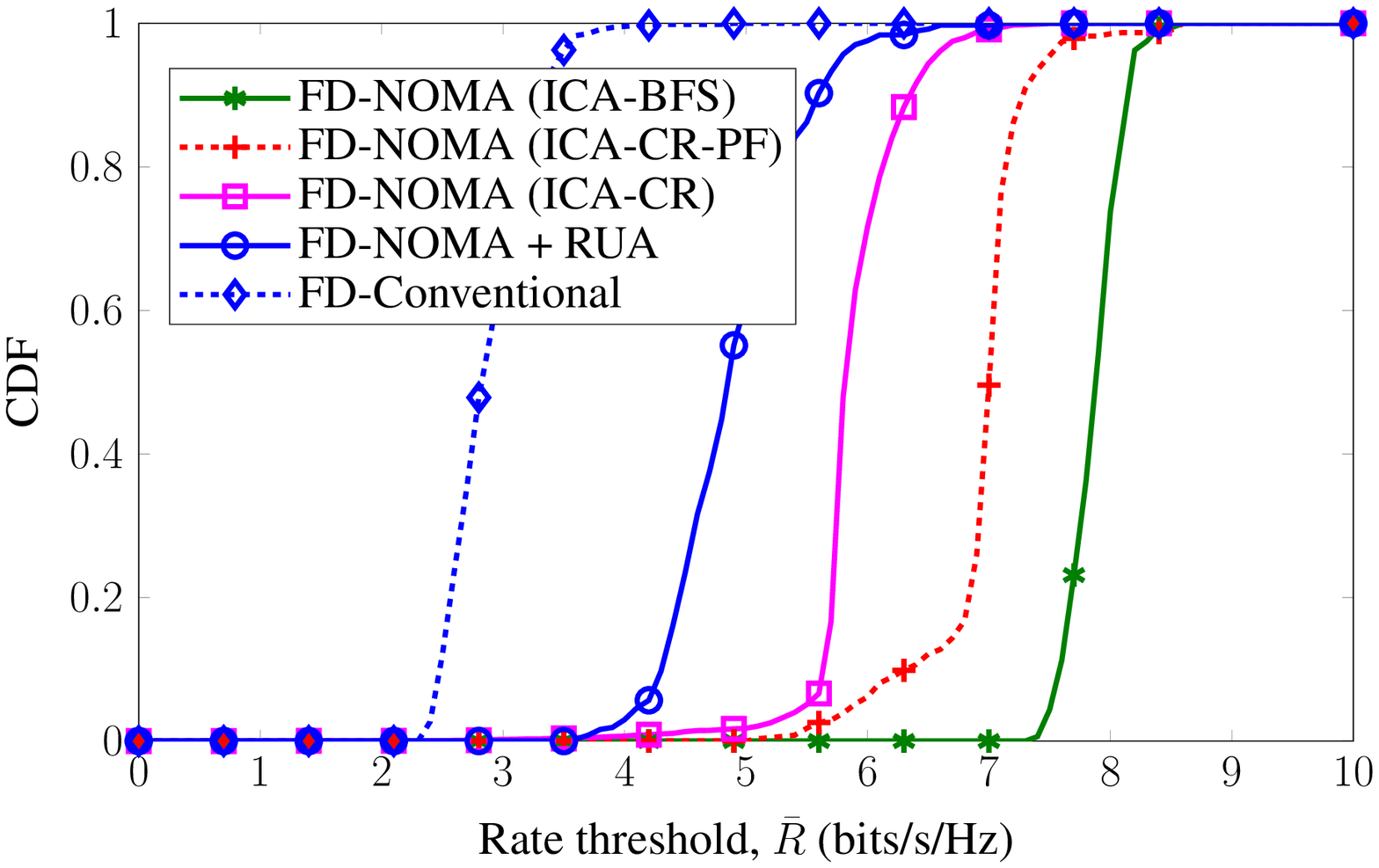} 
				\vspace{-20pt}
		\caption{Cumulative distribution function versus the rate threshold.}
		\label{fig: CDF vs Rate Threshold}
	\end{minipage}
	\hfill
	\begin{minipage}[t]{0.45\columnwidth}
				\centering
	\includegraphics[width=\columnwidth, trim={0cm, 0cm, 0cm, 0cm}]{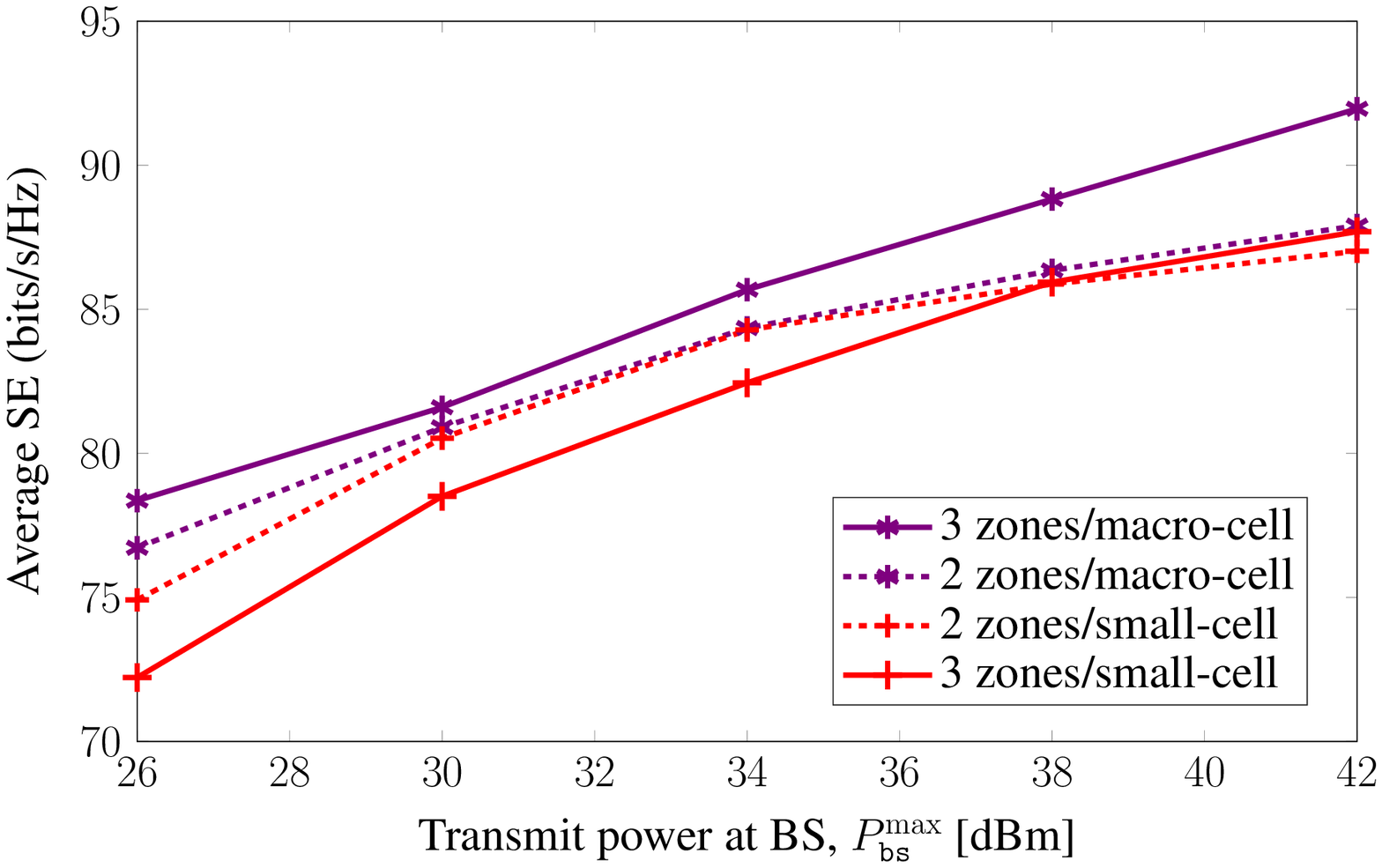}
		\vspace{-20pt}
		\caption{Average SE versus $ P_{\mathtt{bs}}^{\max}$ for different numbers of zones/clusters using the ICA-CR-PF based algorithm.}
		\label{fig: SumRate vs Pbs - multizones}
	\end{minipage}
	\vspace{-10pt}
\end{figure}

As mentioned previously in Section \ref{System Model and Problem Formulation}-C, we now provide simulation results for two- and three-zone in  scenarios of small- (100-meter radius) and macro-cells (500-meter radius), as illustrated in Fig. \ref{fig: SumRate vs Pbs - multizones}. The number of UL users is the same as before. We place 12 DL users to fairly compare the system performance of 2- and 3-zone DL transmissions, where NOMA is applied to 6 clusters (2 users in each) and 4 clusters (3 users in each), respectively. We use the ICA-CR-PF based algorithm for this investigation since it provides a good performance, with low complexity per iteration and fast convergence rate. In general, DL NOMA used for the macro-cell offers more efficient than that in small-cell, since DL users in the same cluster in macro-cell have significantly different channel gains. In addition, the dense multi-user interference in the small-cell deteriorates the system performance. Numerically, it is observed that the 3-zone NOMA  provides the best SE for the macro-cell. However,  it tends to  perform the worst in the small-cell. This is because  FD-NOMA systems for the small-cell scenario suffer from both the similar channel conditions and mutually strong interference. These results corroborate that in a small-cell scenario, the NOMA using 2-zone model outperforms that using the 3-zone one. In other words, a number of DL clusters should be properly chosen depending on the realistic scenarios.

{\hili 
\subsection{Effects of SI, CCI and Rate Threshold}

\begin{figure}[t]
	\centering
	\vspace{-35pt}
	\begin{subfigure}[Average DURR versus residual SiS, $\rho^2$.]
		{	
		\includegraphics[width=.45\columnwidth, trim={0cm, 0cm, 0cm, 0cm}]{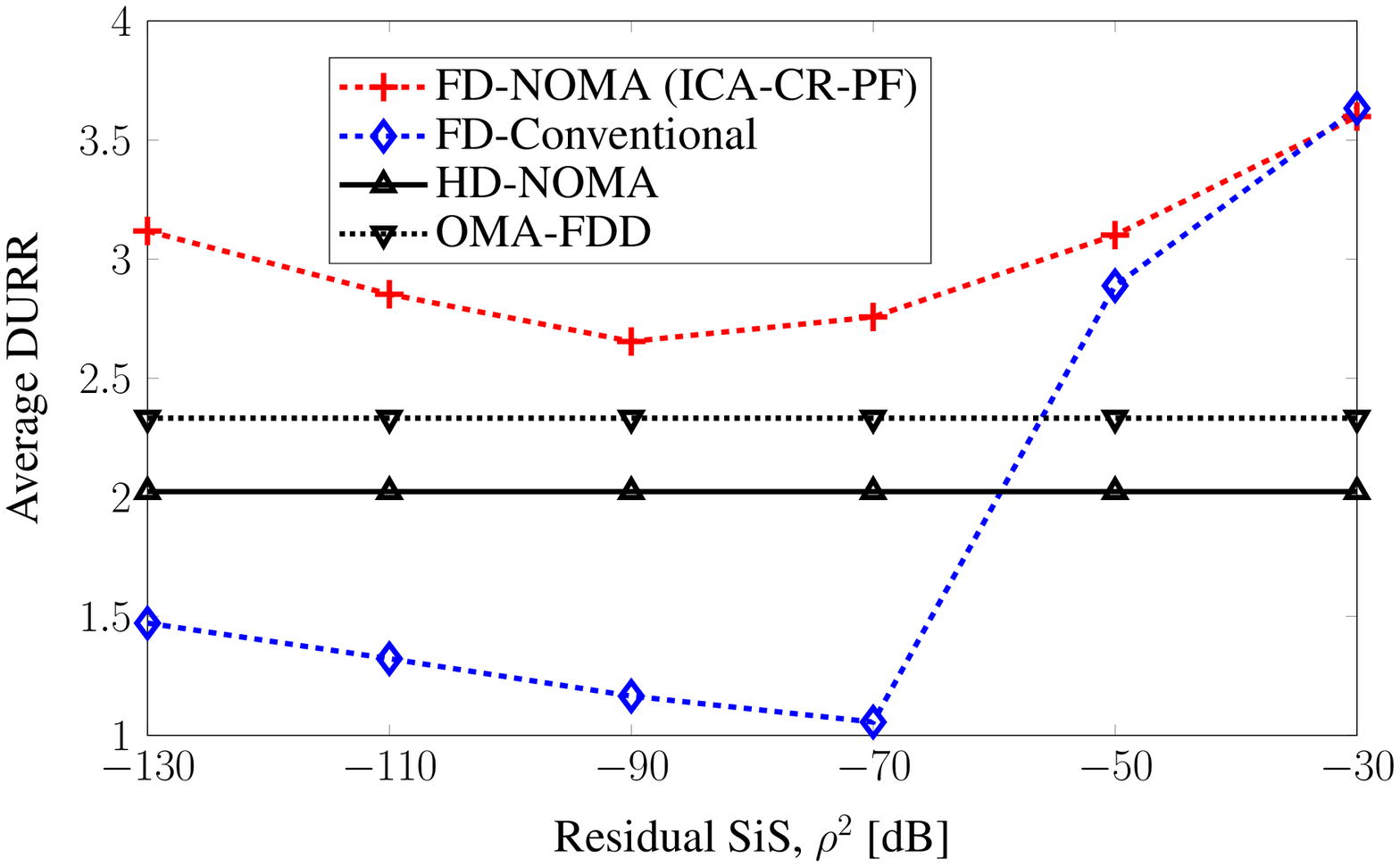}
			\label{fig: SumRate vs rho a}
		}
	\end{subfigure}
	\begin{subfigure}[Average SE versus residual SiS, $\rho^2$.]
		{\includegraphics[width=.45\columnwidth, trim={0cm, 0cm, 0cm, 0cm}]{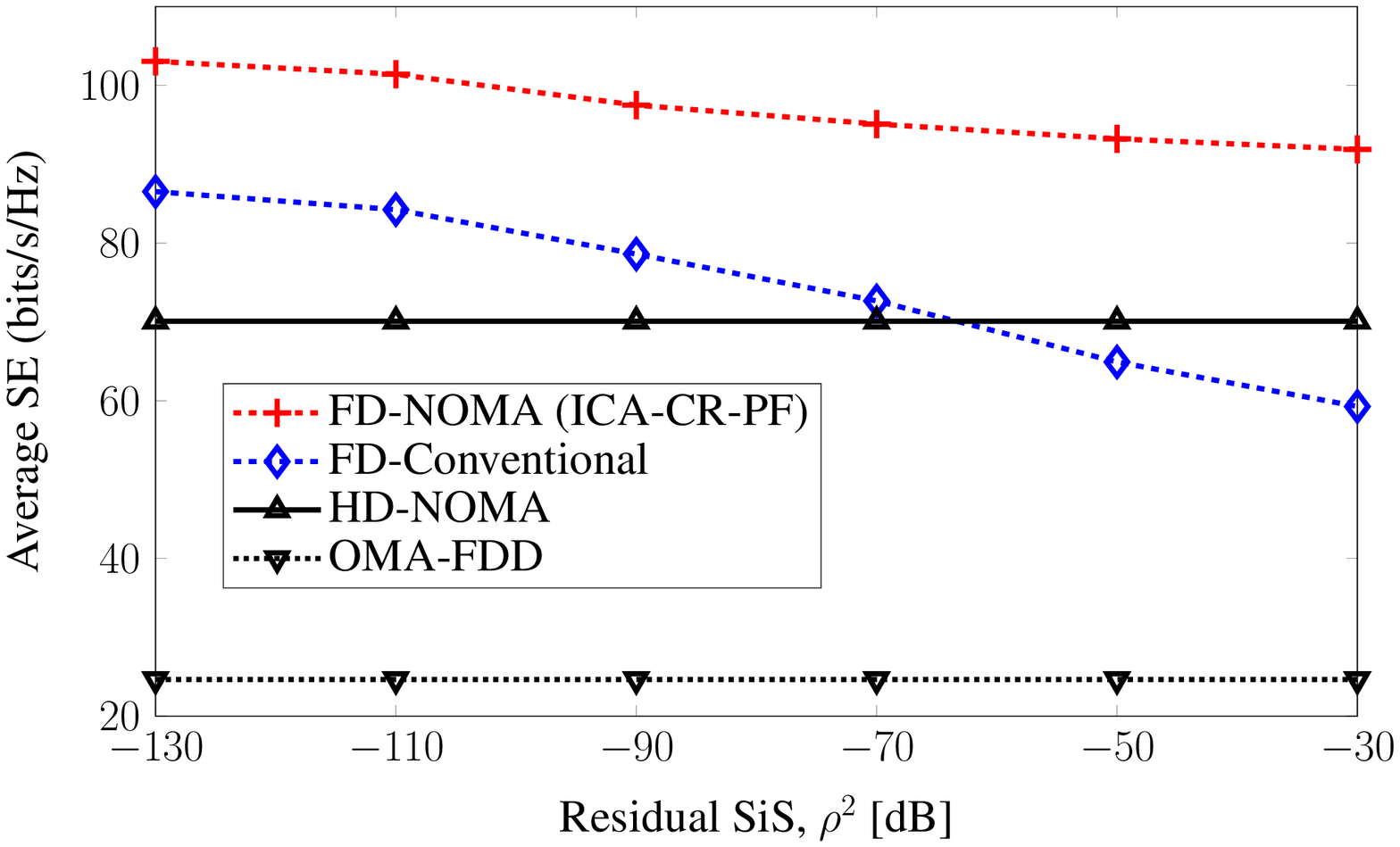}
			\label{fig: SumRate vs rho b}
		}
	\end{subfigure}
	\vspace{-10pt}
	\caption{\hili The effects of SI on DURR and SE.}
	\label{fig: SumRate vs rho}
\end{figure}

In Fig. \ref{fig: SumRate vs rho}, we show the effects of SI on the system performance by varying $ \rho^2 $ from $ -130 $ dB to $ -30 $ dB. For this purpose, we also consider  the orthogonal multiple access with frequency division duplexing (OMA-FDD) scheme. In particular, the entire system bandwidth  is partitioned into $ (L+M) $ orthogonal sub-bandwidths, and each sub-bandwidth is allocated to at most one user to avoid interference. Fig. \ref{fig: SumRate vs rho}(a) plots the DL-to-UL rate ratio (DURR) with respect to $ \rho^2 $. As expected, DURRs of HD-NOMA and OMA-FDD are unchanged with varying $ \rho^2 $ since there is no SI on these schemes. We also observe that for FD schemes, DURRs first decrease and then increase when $ \rho^2 $ increases. This reveals an interesting result which can be explained as follows. For very small $ \rho^2 $,  BS pays more attention to DL users to maximize the total SE since the effect of SI is negligible. For moderate $ \rho^2 $, BS aims to balance the performance between DL and UL by scaling down its transmit power to mitigate the effect of SI. When $ \rho^2 $ becomes more stringent, BS sacrifices the performance of UL by scaling up  its transmit power to boost that of DL, as long as the total SE is maximized. Such phenomena also confirm the performance degradation of FD schemes with $ \rho^2 $ in Fig. \ref{fig: SumRate vs rho}(b). Interestingly, the SE of the proposed FD-NOMA is quite robust to the SI and always outperforms the HD-NOMA for a given range of $ \rho^2 $, which further confirms the benefit of optimizing UL users' decoding order. Moreover,  the proposed FD-NOMA provides much better SE than the OMA-FDD   at $ \rho^2=-30 $ dB due to its potential to improve both SE and edge throughput.

To evaluate the effect of CCI on system performance, we examine a simulation setup as illustrated in Fig. \ref{fig: CCI layout}, in which the centered-BS equipped with 4 antennas serves 4 DL users with fixed locations and 2 UL users with fixed distances to the BS simultaneously moving along with a rotation angle, $ \varphi_{\mathtt{r}} $. The group of DL users on the right-hand side (RHS group) includes $ \DLUi{11} $ and $ \DLUi{21} $, while the group of DL users on the left-hand side (LHS group) is formed by $ \DLUi{12} $ and $ \DLUi{22} $. To better illustrate the effect of CCI, we define the sum SE of DL users in the RHS and LHS groups as  $ R_{11+21}^{\dl} $ and $ R_{12+22}^{\dl} $, respectively. Fig. \ref{fig: CCI RHS ratio} depicts $ R_{11+21}^{\dl} $ and its ratio over $ R_{11+21}^{\dl}+R_{12+22}^{\dl} $. As expected, $ R_{11+21}^{\dl} $ increases when UL users move far away from DL users in the RHS group, i.e., $ \varphi_{\mathtt{r}}\in\{0,\pi/4,\pi/2\} $. When $ \varphi_{\mathtt{r}}\in\{3\pi/4,\pi\} $, $ R_{11+21}^{\dl} $ decreases due to the strong effect of CCI on DL users in the LHS group. The reason is that the BS needs to allocate more power to DL users in the LHS group to combat the strong CCI, leading to a lower power for DL users in the RHS group. The medium ratio of $ R_{11+21}^{\dl} $ to $ R_{11+21}^{\dl}+R_{12+22}^{\dl} $ w.r.t. $ \varphi_{\mathtt{r}} $  on the right y-axis implies that the proposed FD-NOMA offers uniform service to DL users. Accordingly, it is expected to outperform other FD schemes in maximizing the total SE, as seen from results on the left y-axis. Fig. \ref{fig: CCI RHS RateThreshold} further examines $ R_{11+21}^{\dl} $ versus $\varphi_{\mathtt{r}}$ with different values of $\bar{R} \in \{0,1,2,4\}$. Clearly, $ R_{11+21}^{\dl} $ deteriorates with $\bar{R}$, leading to the infeasibility of the proposed algorithm under the strong effect of CCI and higher required rate, i.e., $\varphi_{\mathtt{r}}\in\{0,\pi\}$ and $\bar{R}=4$ bits/s/Hz.

\begin{figure}[t]
	\centering
	\vspace{-20pt}
	\begin{subfigure}[Simulation setup considered in Figs. \ref{fig: CCI RHS ratio} and \ref{fig: CCI RHS RateThreshold}, with $ r_1  = 25$m, $ r_2 = 75 $m, $ \Delta r = 1 $m, $ N = 4 $, $ K=2 $ and $ L = 2 $.]
		{	
		\includegraphics[width=.28\columnwidth, trim={0cm, 0cm, 0cm, 0cm}]{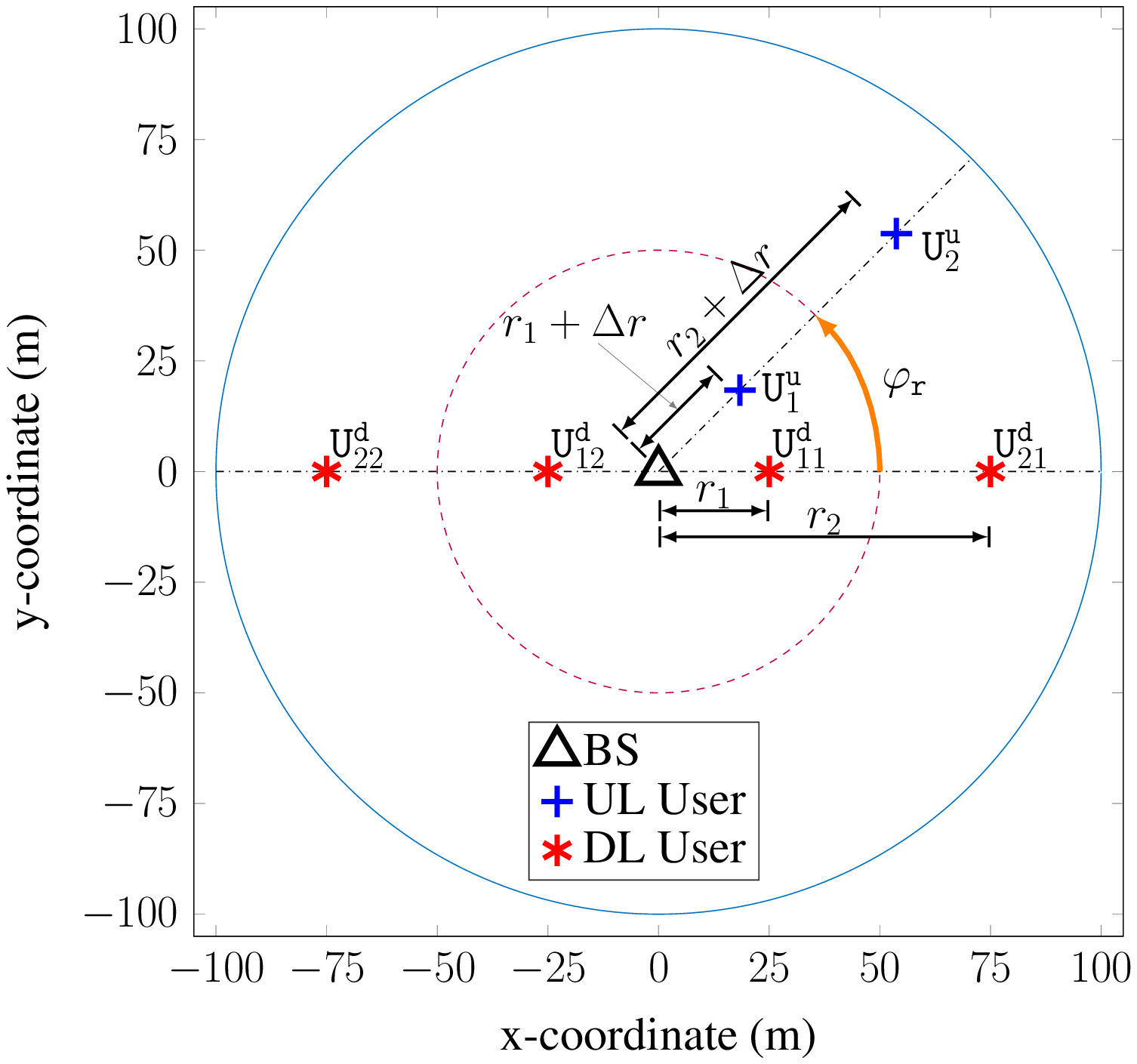}
			\vspace{-50pt}
			\label{fig: CCI layout}
		}
	\end{subfigure}
	\begin{subfigure}[$ R_{11+21}^{\dl} $ and its ratio over $ R_{11+21}^{\dl}+R_{12+22}^{\dl} $.]
		{
			\includegraphics[width=.3\columnwidth, trim={0cm, 0cm, 0cm, 0cm}]{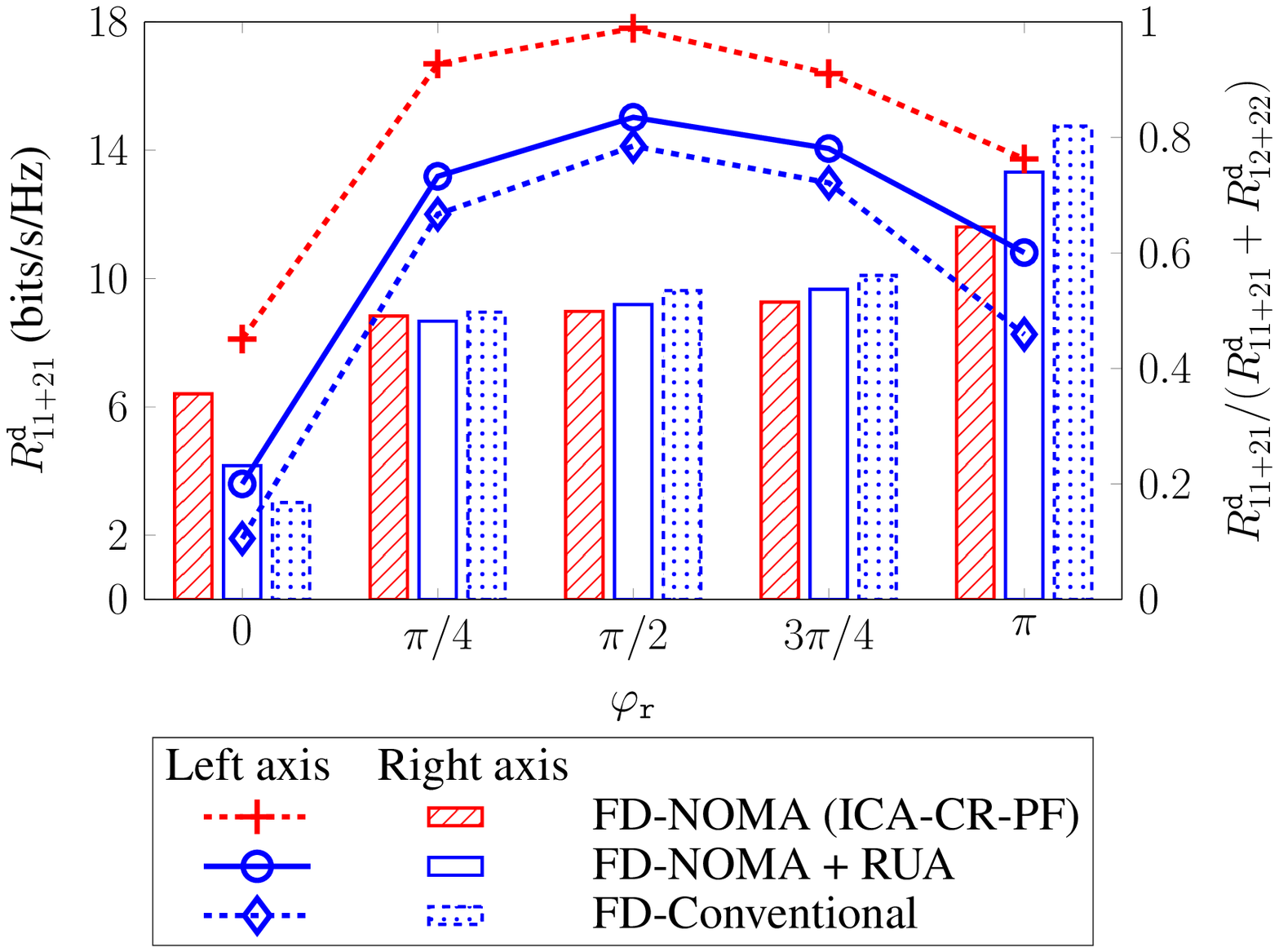}
			\vspace{-50pt}
			\label{fig: CCI RHS ratio}
		}
	\end{subfigure}
	\begin{subfigure}
		[$ R_{11+21}^{\dl} $ with different rate thresholds using FD-NOMA (ICA-CRP-PF), $\bar{R}$.]
		{
		\includegraphics[width=0.3\columnwidth, trim={0cm, 0cm, 0cm, 0cm}]{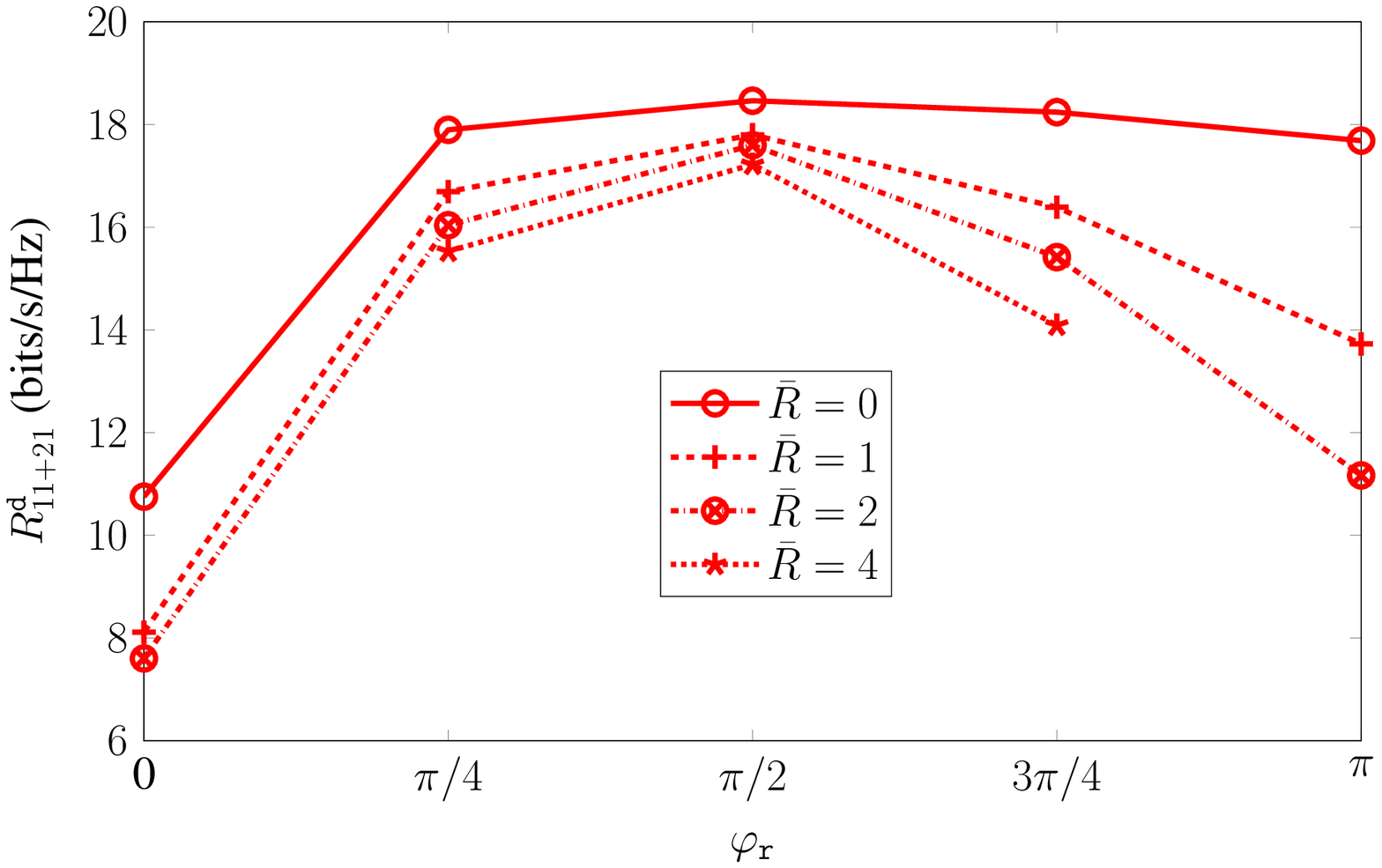}
			\vspace{-10pt}
			\label{fig: CCI RHS RateThreshold}
		}
	\end{subfigure}
	\vspace{-10pt} 
	\caption{\hili The effects of CCI and rate threshold, with the simulation setup in Fig. \ref{fig: CCI layout}.}
	\label{fig: }
\end{figure}
}

\subsection{Convergence Behavior}

\begin{figure}[t]
	\centering
	\vspace{-40pt}
	\centering
	\begin{subfigure}[Convergence speed of ICA-CR-PF with  different values of $ \varrho $.]
		{
			\includegraphics[width=.4\columnwidth, trim={0cm, 0cm, 0cm, 0cm}]{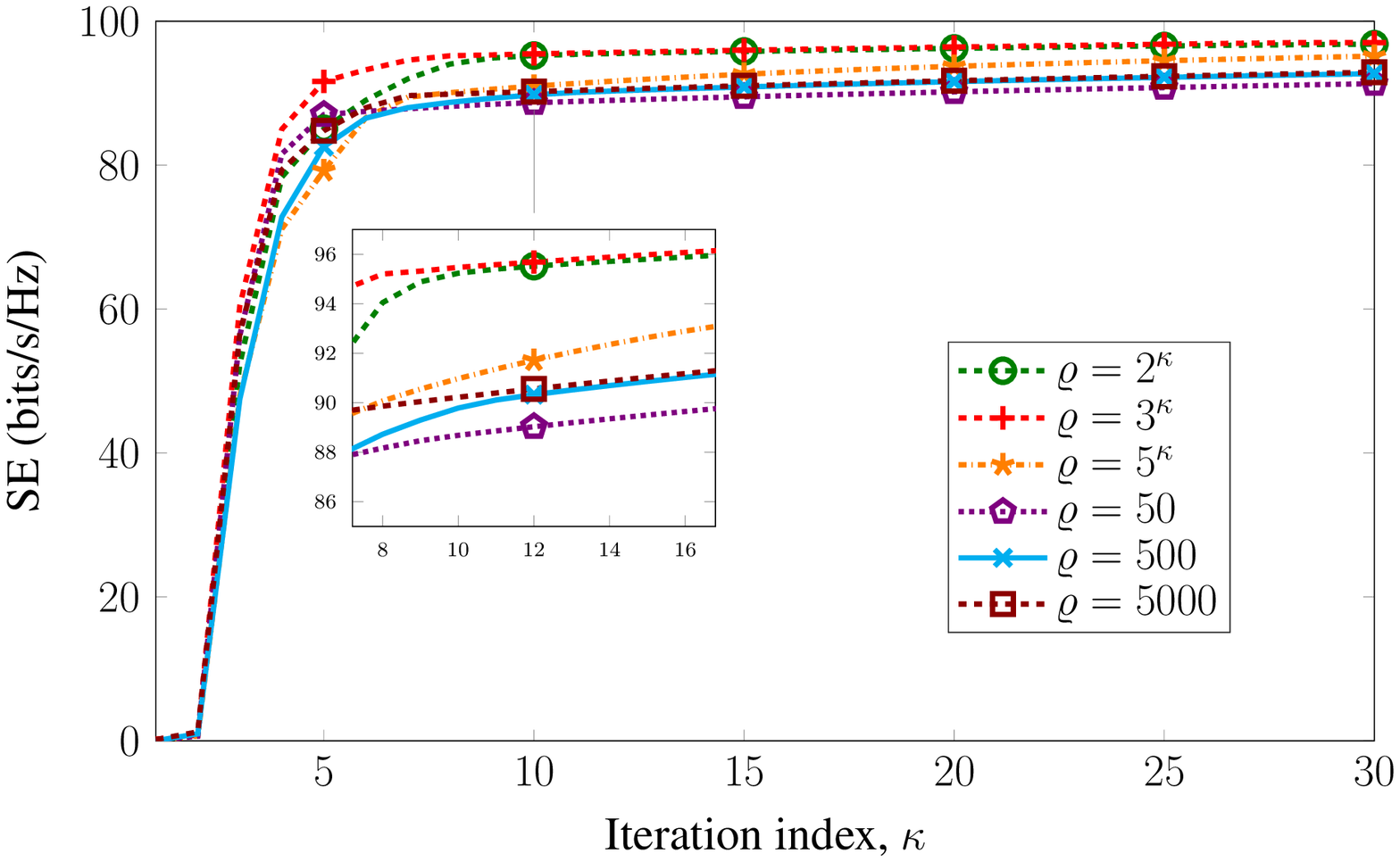}
			\label{fig: Convergence - penalt. para.}
		}
	\end{subfigure}
	\hfill
	\begin{subfigure}[Convergence speed for different schemes.]
		{
			\includegraphics[width=.4\columnwidth, trim={0cm, 0cm, 0cm, 0cm}]{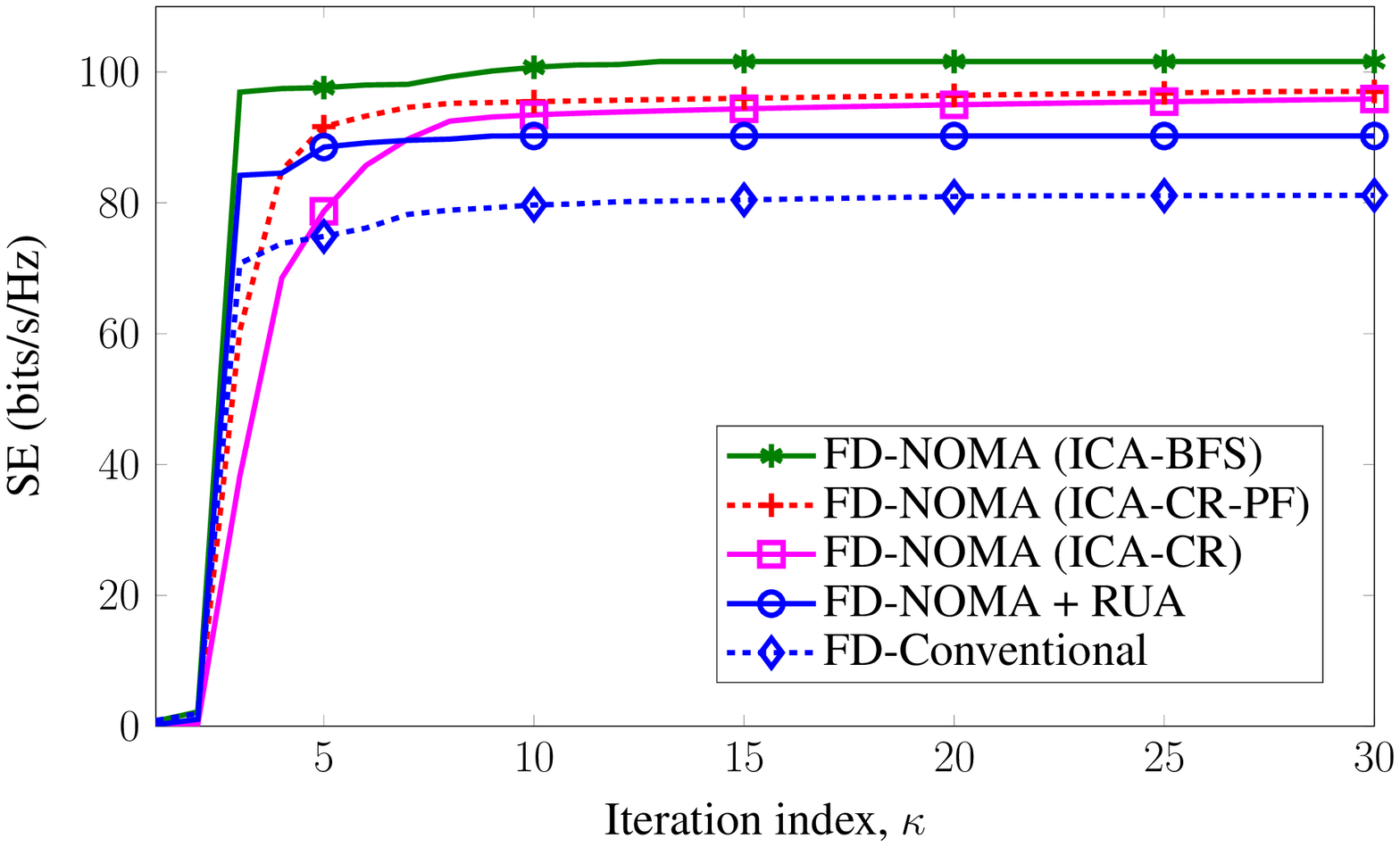}
			\label{fig: Convergence - 5 schemes}
		}
	\end{subfigure}
	\vspace{-10pt} 
	\caption{Typical convergence behavior of the proposed algorithms with one random channel realization.}
	\vspace{-10pt}
	\label{fig:conver}
\end{figure}

In Fig. \ref{fig:conver}, we explore the convergent properties of the proposed algorithms. First, the effect of the penalty parameter $\varrho $ (i.e., step 4 of Alg. \ref{alg: Continuous relaxation problem - pen}) on the convergence behavior and performance of the ICA-CR-PF algorithm is  investigated in Fig. \ref{fig: Convergence - penalt. para.}. {\hilidra Herein, $\varrho $ is numerically examined in two cases: given values as $\varrho\in\{50,500,5000\} $  and  adaptive values per iteration as $\varrho=a^{\kappa}$, with $ a\in\{2,3,5\} $. In the first case, the given values of $\varrho$, which are larger than $ \mfrac{1}{(L+2K)\epsilon(1-\epsilon)}\ln\bigl(\mfrac{P_{\mathtt{bs}}^{\max}}{B.\sigma^2}\bigr) $ (i.e., eq. \eqref{eq: weight of pen. func.}) are sufficiently estimated according to $ \epsilon=\{10^{-1},10^{-2},10^{-3}\} $, respectively. It is seen that for $\varrho=\{500,5000\} $, the SEs at the 10-th iteration reach more than 90\% of that at the 50-th iteration. Although the proposed  ICA-CR-PF algorithm with $\varrho=50 $ provides the same performance ratio after the 5-th iteration, the achievable SE  is worse than that for $\varrho=\{500,5000\} $. In the second case, an increase in $\varrho $ per iteration makes the algorithm converge faster. The results clearly show that the ICA-CR-PF algorithm with $ \varrho=3^{\kappa} $ outperforms the others. Fig. \ref{fig: Convergence - 5 schemes} illustrates the convergence rates of five FD schemes, in which the ICA-CR-PF algorithm uses $ \varrho=3^{\kappa} $. We  exclude the convergence behavior of the HD-NOMA scheme, since it separates the optimization for DL and UL transmissions. As seen, the proposed algorithms provide better performance compared to the others. Fig. \ref{fig: Convergence - 5 schemes} also shows that, compared to the ICA-CR based algorithm, the ICA-CR-PF based algorithm  converges much faster. This can be attributed to the fact that the absence of the PF in the ICA-CR based algorithm may take more iterations to stabilize.}

{\hilidra Finally, we provide further insight of selection of  $\varrho $ in the ICA-CR-PF based algorithm, as illustrated in Fig. \ref{fig: Norm alpha beta - pen par}. As earlier shown, $ \varrho=a^{\kappa} $ provides faster convergence. However, the best value of $ a $ mainly depends on the specific setting. Therefore, the ICA-CR-PF based algorithm combined with the binary search is used to find $ a $ just once. To evaluate the effectiveness of $ a $, we define the convergence measurements as $ \mathbf{u}\triangleq\mathrm{vec}\bigl(\bigl[[\alpha_{kj}^2-\alpha_{kj}]_{k,j\in\mathcal{K}}\;[\beta_{\ell m}^2-\beta_{\ell m}]_{\ell,m\in\mathcal{L}}\bigr]\bigr) $ and $ \mathbf{\hat{f}}_p\triangleq0.1\mathrm{vec}\bigl(\bigl[[f_p(\alpha_{kj})]_{k,j\in\mathcal{K}}\;[f_p(\beta_{\ell m})]_{\ell,m\in\mathcal{L}}\bigr]\bigr) $, where $ \mathrm{vec}(\mathbf{X}) $ represents the vectorization of the matrix $ \mathbf{X} $. \textbf{Remark \ref{rem: pen. par. selection}} indicates that $ a $ is selected such that $ \varrho $ should not be too large, to satisfy the condition in \eqref{eq: weight of pen. func.}. Numerically, the convergence rate of $ \|\mathbf{u}\|_{\infty} $ and $ \|\mathbf{\hat{f}}_p\|_{\infty} $ is the best when $ a $ is large enough, and however, it becomes worse when $ a $ further increases. In implementation, the binary search is used for finding $ a \in [2,5] $, and for each value of $ a $, the ICA-CR-PF based algorithm investigates the values of $ \|\mathbf{u}\|_{\infty} $ and $ \|\mathbf{\hat{f}}_p\|_{\infty} $ so that the smallest value of $ a $ providing the lowest convergence rates of $ \|\mathbf{u}\|_{\infty} $ and $ \|\mathbf{\hat{f}}_p\|_{\infty} $ is selected. For example, Fig. \ref{fig: Norm alpha beta - pen par} depicts the values of $ \|\mathbf{u}\|_{\infty} $ and $ \|\mathbf{\hat{f}}_p\|_{\infty} $ (left y-axis) and penalty parameter $ \varrho $ (right y-axis) versus the iteration index $ \kappa $ (common x-axis) in the cases of $ a=\{2,3\} $. For the above setting,  $ a=3 $ in Fig. \ref{fig: Norm alpha beta - pen par 3} is a better choice as it provides the lower convergence rates of $ \|\mathbf{u}\|_{\infty} $ and $ \|\mathbf{\hat{f}}_p\|_{\infty} $. We have also numerically observed that when $ a > 3 $, the quick increase in $ \varrho $ per iteration contradicts \textbf{Remark \ref{rem: pen. par. selection}}. Therefore, the smallest value of $ a $ needs to be found within $ [2,3] $. Whenever the sufficient value of $ a $ is found, the ICA-CR-PF based algorithm can operate under different channel conditions. Remarkably, the convergence behaviors of the UA variables are almost same  at the beginning of setting. It can be seen that the values of $ \boldsymbol{\alpha} $ and $ \boldsymbol{\beta} $ are close to binary at the 9-th iteration. Thus, without loss of optimality, the binary variables can be fixed as in \eqref{eq: rounding binary var.} when $ \|\mathbf{u}\|_{\infty} $ satisfies a given error tolerance, and then, the ICA-CR-PF based algorithm continuously solves problem \eqref{eq: prob. general form - relax. pen. 1} for power control as equivalent to a subproblem  \eqref{eq: prob. BB form convex prog.}.}

\begin{figure}[t] 
	\centering
	\vspace{-20pt}
	\begin{subfigure}[$ \varrho=2^{\kappa} $]
		{
			\centering
			\includegraphics[width=.36\columnwidth, trim={0cm, 0cm, 0cm, 0cm}]{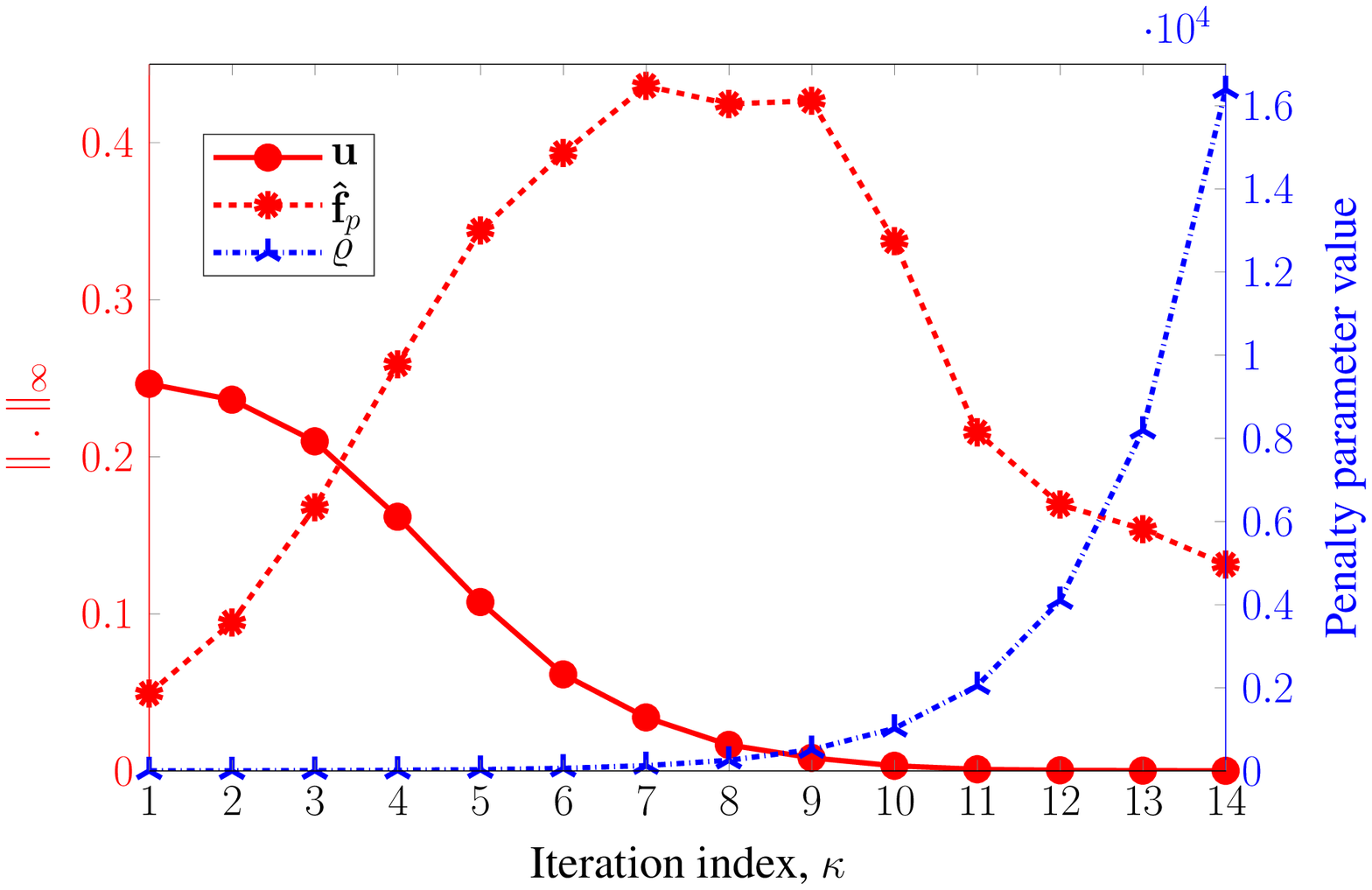}
			\vspace{-40pt}
			\label{fig: Norm alpha beta - pen par 2}
		}
	\end{subfigure}
	\hfill
	\begin{subfigure}[$ \varrho=3^{\kappa} $]
		{
			\centering
			\includegraphics[width=.36\columnwidth, trim={0cm, 0cm, 0cm, 0cm}]{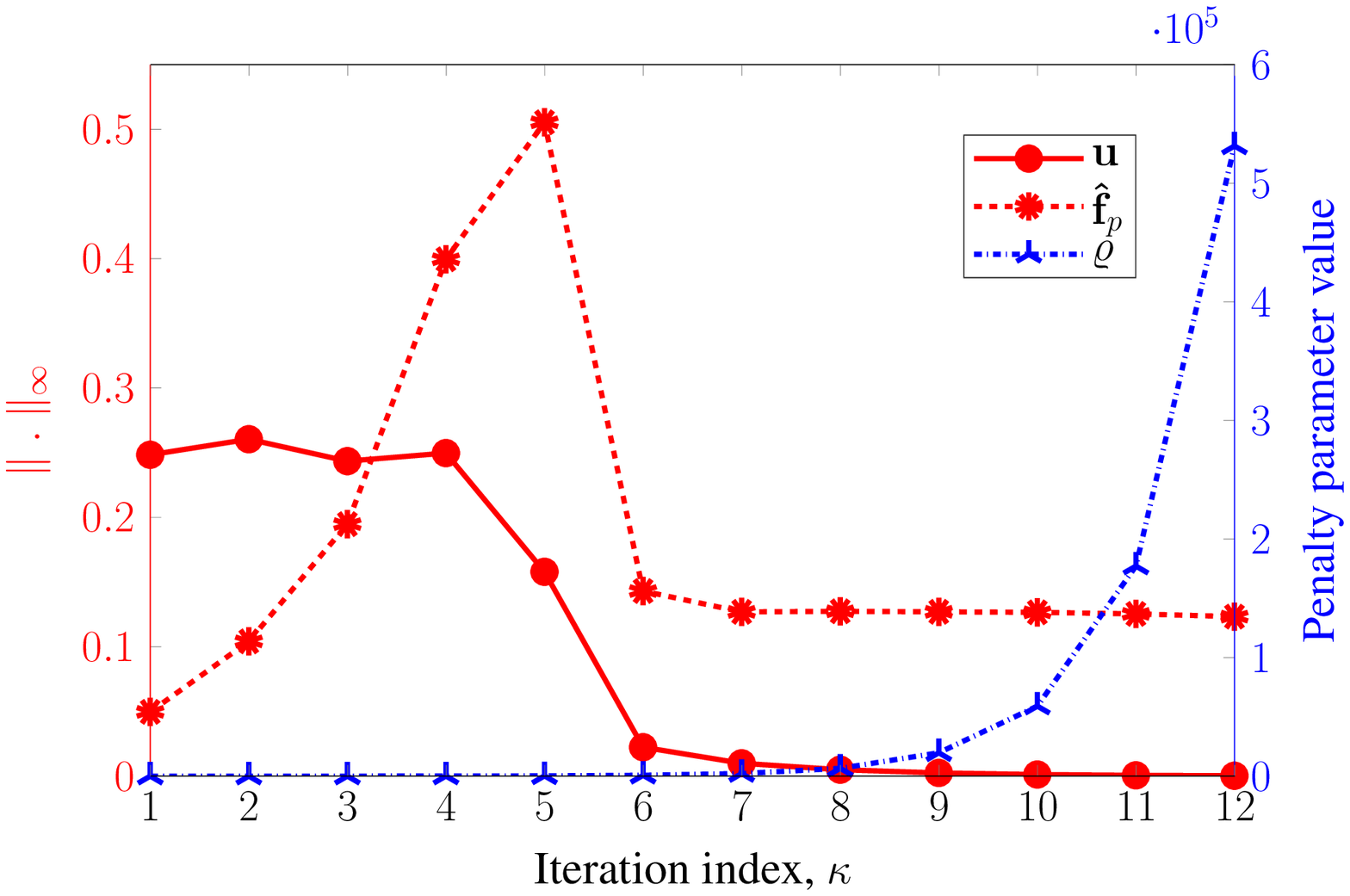}
			\vspace{-40pt}
			\label{fig: Norm alpha beta - pen par 3}
		}
	\end{subfigure}
	\vspace{-10pt} 
	\caption{Convergence rate of UA variables and PF values with  $ \varrho=a^{\kappa} $, for $ a=\{2,3\} $.}
	\label{fig: Norm alpha beta - pen par}
	\vspace{-10pt}
\end{figure}


\section{Conclusion}\label{Conclusion}
In this paper, joint power control and user association problem has been proposed to maximize the total SE of a cellular FD-NOMA system. We have employed a tensor model for DL users and a permutation matrix for UL users to formulate the problem of user association, which significantly reduce  the number
of association variables. By presenting novel methods to approximate the formulated non-convex problem, we have developed two iterative algorithms with low computational complexity. In the first method, the binary  variables are relaxed to be continuous and an iterative algorithm based on the ICA framework has been proposed to solve the resulting non-convex CR problem. In the second method, the uncertainties of binary  variables are further penalized without causing additional complexity as it aims at finding a  high-performance UA solution. Our extensive numerical results suggest that the second approach is more effective in terms of the achievable SE and convergence speed.
In addition, by the brute-force search algorithm, we have transformed the original problem into  subproblems under a given UA based on which the ICA framework has been customized to find an optimal solution. Our proposed iterative algorithms improve achievable SE at each iteration and converge eventually, and are also superior to other known algorithms.  We have also concluded that an appropriate number of zones/clusters with more distinct channel conditions is of important to achieve remarkable gains in  FD-NOMA systems. {\hili The robustness of the proposed method against the significant effects of SI and CCI is also revealed.}

\appendices
\renewcommand{\thesectiondis}[2]{\Alph{section}:}

\section{Proof of Theorem \ref{thm: user association matrix}} \label{app: user association matrix}
\renewcommand{\theequation}{\ref{app: user association matrix}.\arabic{equation}}\setcounter{equation}{0}\vspace{-5pt}
Let $ \mathbf{C}_0 $ be the unitary matrix representing the clustering indices. It is obvious that $ \mathbf{C}_{i},\;i\in\mathcal{Z}, $ is the change-of-basis matrix of zone $ i $, with respect to the basis $ \mathbf{C}_0 $. Therefore, $ \bigl[\mathbf{C}_{i}\bigr]_{kj},\;k,j\in\mathcal{K} $, with $ [\mathbf{X}]_{a,b} $ denoting the element at the $ a $-th row and the $ b $-th column of matrix $ \mathbf{X} $, indicates whether the $ j $-th DL user in zone $ i $ belongs to the $ k $-th cluster. From \textbf{Definition \ref{def: index clusters}}, the matrix of DL user associations between  users in zone $ 1 $ and  clusters is equivalent to the change-of-basis matrix $ \mathbf{C}_1 $, i.e., $ \mathbf{C}_1=\mathbf{C}_0=\mathbf{I}_K $. As a result, a user association matrix $ \mathbf{T}^{1i}$ is equivalent to the change-of-basis matrix $ \mathbf{C}_{i} $ w.r.t. the basis $ \mathbf{C}_0 $. From the transformation law of tensor, the user association matrix $ \mathbf{T}^{iz},\;(i,z)\in\{\mathcal{Z}\times\mathcal{Z}\} $ is calculated by
\begin{align}\label{eq:A1}
\mathbf{T}^{iz} = \bigl(\mathbf{T}^{1i}\bigr)^{-1}\mathbf{C}_{1}\mathbf{C}_{z} =\mathbf{C}_{i}^{-1}\mathbf{C}_{z}.
\end{align}
Based on \textbf{Definition \ref{def: tensor for assignment}}, it is realized that each DL user in a certain zone is assigned to exactly one cluster, and two arbitrary DL users in each cluster come from two different zones. Therefore, the matrix $ \mathbf{C}_i,\;i\in\mathcal{Z} $, satisfies the following conditions:
\begin{align}
\sum\nolimits_{k\in\mathcal{K}}\bigl[\mathbf{C}_i\bigr]_{kj}=1\ \text{and}\ \sum\nolimits_{j\in\mathcal{K}}\bigl[\mathbf{C}_i\bigr]_{kj}=1.
\end{align}
Accordingly, $ \mathbf{C}_{i} $ characterized as a permutation matrix, satisfies the property that $ \mathbf{C}_{i}^{-1}=\mathbf{C}_{i}^T $. Equation \eqref{eq: user association matrix} is then obtained by substituting  $ \mathbf{C}_{i}^{-1}=\mathbf{C}_{i}^T $ into \eqref{eq:A1}.

\section{Proof of Lemma \ref{lem: UB xy2}} \label{app: UB xy2}
\renewcommand{\theequation}{\ref{app: UB xy2}.\arabic{equation}}\setcounter{equation}{0}\vspace{-7pt}
By imposing an SOC constraint $ y^2\leq z,\; z>0 $, the function $ h(x,y) $ is upper bounded by
\begin{align} \label{eq: hxy function}
h(x,y) \leq xz := \tilde{h}(x,z).
\end{align}
Due to the concavity of function $ \sqrt{uv} $,
we make use of the following inequality \cite{Beck:JGO:10, Dinh:JSAC:18}:
\begin{align} \label{eq: concave func. sqrt uv}
\sqrt{uv}\leq\mfrac{\sqrt{v^{(\kappa)}}}{2\sqrt{u^{(\kappa)}}}u+\mfrac{\sqrt{u^{(\kappa)}}}{2\sqrt{v^{(\kappa)}}}v,
\end{align}
for any $ u, v, u^{(\kappa)}, v^{(\kappa)} >0 $, where $ u^{(\kappa)} $ and $ v^{(\kappa)} $ are the known neighborhoods of $ u $ and $ v $, respectively. By letting $ x=\sqrt{u},\;z=\sqrt{v} $, and substituting \eqref{eq: concave func. sqrt uv} into \eqref{eq: hxy function}, we arrive at \eqref{eq: UB xy2}.

\section{Proof of Theorem \ref{thm: relax. prob with pen.}} \label{app: relax. prob with pen.}
\renewcommand{\theequation}{\ref{app: relax. prob with pen.}.\arabic{equation}}\setcounter{equation}{0}\vspace{-7pt}
Firstly, we can treat constraint \eqref{eq: prob. mixed-integer theorem c} as $ \alpha_n(\alpha_n-1)=0,\;\forall n\in\mathcal{N}$, which is equivalent to
\begin{subequations} \label{eq: alpha relax. form}
	\begin{gather}
	\alpha_n(\alpha_n-1) \geq 0, \label{eq: alpha relax. form a}\\
	\alpha_n(\alpha_n-1) \leq 0 \label{eq: alpha relax. form b}.
	\end{gather}
\end{subequations}
For the non-convex constraint \eqref{eq: alpha relax. form a}, we apply the relaxation approach to release \eqref{eq: alpha relax. form a}, where  $ \alpha_n $ is constrained by a box $ [0,1] $, i.e., $ \alpha_n^2\leq\alpha_n $ for $ 0\leq\alpha_n\leq 1 $. Inspired from \cite{Sun:TCOMM:Mar2017}, we then introduce an additional PF, denoted by $ f_p(\alpha_n) $, satisfying:
\begin{align} \label{eq: pen. func. cond.}
f_p(\alpha_n) = \left\{ 
\begin{aligned}
-\infty, & \quad \text{ if } 0<\alpha_n<1, \\
0, & \quad \text{ if } \alpha_n\in\{0,1\}.
\end{aligned}
\right.
\end{align}
This indicates that the objective value for the maximization problem becomes $ -\infty $ when constraint \eqref{eq: prob. mixed-integer theorem c} is violated. Numerically, the objective value is corrupted by a large penalty parameter, such that it becomes smaller than an estimated optimal value $ \hat{f}_0(\mathbf{x}, \boldsymbol{\alpha}) $ when $ \alpha_n $ is not binary. By exploiting the convexity of \eqref{eq: alpha relax. form b}, the PF can be constructed as $ f_p(\alpha_n)\triangleq\varrho_n(\alpha_n^2-\alpha_n), \; \forall n\in\mathcal{N} $, to meet the condition in \eqref{eq: pen. func. cond.}. Herein, $ \alpha_n^2-\alpha_n\leq0 $ provides a well-defined set mapping into the codomain of $ f_p $ in \eqref{eq: pen. func. cond.} for $ \alpha_n\in[0,1] $, while $ \varrho_n>0 $ is selected to be large enough such that the optimality of \eqref{eq: prob. mixed-integer theorem} holds. By this way,  the relaxation problem  \eqref{eq: prob. relax theorem} is provided.

Secondly, we need to demonstrate how $ \varrho_n>0 $ is found to ensure the optimal solution of \eqref{eq: prob. mixed-integer theorem}  by solving the relaxation problem with PF in \eqref{eq: prob. relax theorem}. Let $ \mathcal{F}=\mathcal{X}\times\{0,1\}^{N_{\alpha}\times1} $ and $ \mathcal{F}_R=\mathcal{X}\times[0,1]^{N_{\alpha}\times1} $ be the feasible regions of \eqref{eq: prob. mixed-integer theorem} and \eqref{eq: prob. relax theorem}, respectively, and suppose that $ \mathbf{y}^*\in\mathcal{F} $ and $ \mathbf{y}_{r}^*\in\mathcal{F}_R $ are their optimal solutions. If $ \mathbf{y}^*\in\mathcal{F}\subset\mathcal{F}_R $, $ \mathbf{y}^*\in\mathcal{F}_R $ provides a lower bound of $ f_0(\mathbf{y}_{r}^*) $, i.e., $ f_0(\mathbf{y}^*)\leq f_0(\mathbf{y}_{r}^*) $. To find $ \varrho_n $, we consider a part of feasible region of the relaxation problem as $ \bar{\mathcal{F}} \triangleq \{\mathbf{y}_r=(\mathbf{x},\boldsymbol{\alpha})\in\mathcal{F}_R\backslash\mathcal{F}\;|\epsilon\leq\alpha_n\leq1-\epsilon, \forall n\in\mathcal{N} \} $, with $ \epsilon\in(0,1) $ being an arbitrarily small number such that $ \mathcal{F} \cong \mathcal{F}_R\backslash \bar{\mathcal{F}} $. If any $ \mathbf{\hat{y}}^*\in\mathcal{F}_R\backslash \bar{\mathcal{F}} $, the optimal solution of \eqref{eq: prob. relax theorem} is close to that of \eqref{eq: prob. mixed-integer theorem}, i.e., $ \mathbf{\hat{y}}^*\rightarrow\mathbf{y}^* $. The penalty parameter $ \varrho_n>0 $ must be selected to satisfy
\begin{align} \label{eq: optimality cond. for relax. prob.}
\underset{\mathbf{\bar{y}}^*\in\bar{\mathcal{F}}}{\sup}\bigl(f_0(\mathbf{\bar{y}}^*)+\sum\nolimits_{n\in\mathcal{N}}f_p(\alpha_n)\bigr)=\sum\nolimits_{n\in\mathcal{N}}f_p(\alpha_n)+\underset{\mathbf{\bar{y}}^*\in\bar{\mathcal{F}}}{\sup}\;f_0(\mathbf{\bar{y}}^*)\leq \underset{\mathbf{y}^*\in\mathcal{F}}{\inf}\; f_0(\mathbf{y}^*).
\end{align}
On the other hand, it is clear that $ \alpha_n(\alpha_n-1)\leq\epsilon(\epsilon-1)$ due to $\alpha_n\in[\epsilon,1-\epsilon] $ and  $ \bar{\mathcal{F}}\subseteq\mathcal{F}_R $. Therefore, $ \sum_{n\in\mathcal{N}}f_p(\alpha_n) $ and $ \underset{\mathbf{\bar{y}}^*\in\bar{\mathcal{F}}}{\sup}\;f_0(\mathbf{\bar{y}}^*) $ are respectively upper bounded as 
\begingroup
\allowdisplaybreaks
\begin{subequations} \label{eq: pen. func. upper bound}
	\begin{align} 
	\sum\nolimits_{n\in\mathcal{N}}f_p(\alpha_n) & \leq\sum\nolimits_{n\in\mathcal{N}}\varrho\epsilon(\epsilon-1)=N_{\alpha}\varrho\epsilon(\epsilon-1)<0, \\
	\underset{\mathbf{\bar{y}}^*\in\bar{\mathcal{F}}}{\sup}\;f_0(\mathbf{\bar{y}}^*) & \leq \underset{\mathbf{y}_{r}^*\in\mathcal{F}_R}{\sup}\;f_0(\mathbf{y}_{r}^*),
	\end{align}
\end{subequations}\endgroup
where $ \varrho\triangleq\max\{\varrho_n\}_{n\in\mathcal{N}} $. The inequality \eqref{eq: optimality cond. for relax. prob.} strongly holds with \eqref{eq: pen. func. upper bound}, and then we obtain
\begin{align} \label{eq: weight of pen. func.}
\varrho \geq\mfrac{1}{N_{\alpha}\epsilon(\epsilon-1)} \Bigl(\underset{\mathbf{y}^*\in\mathcal{F}}{\inf}\;f_0(\mathbf{y}^*)-\underset{\mathbf{y}_{r}^*\in\mathcal{F}_R}{\sup}\; f_0(\mathbf{y}_r^*)\Bigr).
\end{align}
Since the function $ f_0 $ is closed and bounded on $ \mathcal{F} $, it is also closed and bounded on $ \mathcal{F}_R $ for $ \boldsymbol{\alpha} \in [0,1] $. Moreover, $ \mathcal{F}\subset\mathcal{F}_R $ indicates that $ \underset{\mathbf{y}_{r}^*\in\mathcal{F}_R}{\inf}\; f_0(\mathbf{y}_r^*)\leq\underset{\mathbf{y}^*\in\mathcal{F}}{\inf}\;f_0(\mathbf{y}^*)<\underset{\mathbf{y}^*\in\mathcal{F}}{\sup}\;f_0(\mathbf{y}^*)\leq\underset{\mathbf{y}_{r}^*\in\mathcal{F}_R}{\sup}\; f_0(\mathbf{y}_r^*) $, resulting in the positive value for the RHS of \eqref{eq: weight of pen. func.}. The inequality \eqref{eq: weight of pen. func.} also means that there exists $ \varrho>0 $ so  that the penalty parameter satisfies the condition \eqref{eq: optimality cond. for relax. prob.}. Note that the smaller value of $ \epsilon $ in \eqref{eq: weight of pen. func.} results in a larger value of its RHS. Therefore, $ \varrho $ needs to be selected such that it is adapted to $ \epsilon $.

Finally, we consider the property of the feasible set. Assuming $ \mathcal{X} $ is a compact convex set. In  problem \eqref{eq: prob. relax theorem}, we have $ \alpha_n\in \mathcal{I}\triangleq[0,1],\;\forall n\in\mathcal{N} $.  Since $ \mathcal{I} $ is a closed interval, $ \mathcal{I} $ is compact \cite{Munkres:Topo}. Considering $ \delta=\theta\alpha+(1-\theta)\beta $ for $ \forall \alpha, \beta \in  \mathcal{I},\; 0\leq\theta\leq 1 $, we have $ 0\leq\theta\alpha\leq \theta $ and $ 0\leq(1-\theta)\beta\leq(1-\theta) $. This means that $ 0\leq\delta\leq1 $, or $ \delta\in\mathcal{I} $, leading to the fact that $ \mathcal{I} $ is convex. Therefore, the feasible set of \eqref{eq: prob. relax theorem} is $ \mathcal{X}\times\mathcal{I} $, which is a compact convex set.  Thus, the proof is completed.

\begingroup
\setstretch{0.88}
\bibliographystyle{IEEEtran}
\bibliography{IEEEfull}
\endgroup

\end{document}